\def\Ginclude@eps#1{%
 \message{<#1>}%
  \bgroup
  \def\@tempa{!}%
  \dimen@\Gin@req@width
  \dimen@ii.1bp%
  \divide\dimen@\dimen@ii
  \@tempdima\Gin@req@height
  \divide\@tempdima\dimen@ii
    \includegraphics{#1}%
  \egroup}
\renewcommand{\>}{\rangle}
\newcommand{\<}{\langle}
\newcommand{\C}{\operatorname{\mathfrak{C}}}
\newcommand{\CP}{\mathbb{C}\mathbb{P}^1}
\newcommand{\del}{\partial}
\newcommand{\Res}{\operatorname{Res}}
\newcommand{\Z}{\mathbb Z}
\newcommand{\mbC}{\mathbb C}
\newcommand{\mbZ}{\mathbb Z}
\newcommand{\mcN}{\mathcal N}
\newcommand{\mcA}{\mathcal A}
\newcommand{\half}{\frac{1}{2}}
\newcommand{\halfz}{\frac{z}{2}}
\newcommand{\halfdx}{\frac{\d_x}{2}}
\newcommand{\hcA}{\widehat{\mathcal A}}
\newcommand{\eps}{\varepsilon}
\def\d{\partial}
\def\CP1{\mathbb{C}\mathbb{P}^1}
\newcommand{\oM}{\overline{\mathcal M}}
\newcommand{\og}{\overline g}
\newcommand{\oh}{\overline h}
\newcommand{\hLambda}{\widehat\Lambda}
\def\M{{\mathcal{M}}}
\def\oM{{\overline{\mathcal{M}}}}
\def\Z{{\mathbb Z}}
\def\C{{\mathbb C}}
\def\Q{{\mathbb Q}}
\def\d{{\partial}}
\def\t{\tilde}
\newcommand{\tu}{\widetilde u}
\newcommand{\os}{\overline s}
\newcommand{\tQ}{\widetilde Q}
\newcommand{\tq}{\widetilde q}
\newcommand{\tL}{\widetilde L}
\newcommand{\cB}{\mathcal B}
\newcommand{\Coef}{\mathrm{Coef}}
\newcommand{\DR}{\mathrm{DR}}
\newtheorem{theorem}{Theorem}[section]
\newtheorem{lemma}[theorem]{Lemma}
\newtheorem{rem}[theorem]{Remark}
\newtheorem{ex}[theorem]{Example}
\newenvironment{example}{\begin{ex}\rm}{\qee\end{ex}}
\newcommand{\qee}{\mbox{\hspace{0.2mm}}\hfill$\triangle$}
\numberwithin{equation}{section}
\title{Recursion relations for Double Ramification Hierarchies}
\author{Alexandr Buryak}
\address{A.~Buryak:\newline Department of Mathematics, ETH Zurich, \newline Ramistrasse 101 8092, HG G 27.1, Zurich, Switzerland}
\email{buryaksh\_at\_gmail.com}
\author{Paolo Rossi}
\address{P.~Rossi:\newline IMB, UMR 5584 CNRS, Universit\'e de Bourgogne,\newline 9, avenue Alain Savary, 21078 Dijon Cedex, France}
\email{paolo.rossi\_at\_u-bourgogne.fr}
\begin{document}

\begin{abstract}
In this paper we study various properties of the \emph{double ramification hierarchy}, an integrable hierarchy of hamiltonian PDEs introduced in \cite{Bur14} using intersection theory of the double ramification cycle in the moduli space of stable curves. In particular, we prove a recursion formula that recovers the full hierarchy starting from just one of the Hamiltonians, the one associated to the first descendant of the unit of a cohomological field theory. Moreover, we introduce analogues of the topological recursion relations and the divisor equation both for the hamiltonian densities and for the string solution of the double ramification hierarchy. This machinery is very efficient and we apply it to various computations for the trivial and Hodge cohomological field theories, and for the $r$-spin Witten's classes. Moreover we prove the Miura equivalence between the double ramification hierarchy and the Dubrovin-Zhang hierarchy for the Gromov-Witten theory of the complex projective line (extended Toda hierarchy).
\end{abstract}

\maketitle

\tableofcontents

\markboth{A. Buryak, P. Rossi}{Recursion relations for DR hierarchies}

\section{Introduction}
In a recent paper, \cite{Bur14}, one of the authors, inspired by ideas from symplectic field theory~\cite{EGH00}, has introduced a new integrable hierarchy of PDEs associated to a given cohomological field theory. The construction makes use of the intersection numbers of the given cohomological field theory with the double ramification cycle, the top Chern class of the Hodge bundle and psi-classes on the moduli space of stable Riemann surfaces $\oM_{g,n}$. Since the top Chern class of the Hodge bundle vanishes outside of the moduli space of stable curves of compact type, one can use Hain's formula~\cite{H11} to express the double ramification cycle in computations and, in particular, the consequent polynomiality of the double ramification cycle with respect to ramification numbers.\\

In \cite{Bur14} the author further conjectures, guided by the examples of the trivial and the Hodge cohomological field theories (which give the KdV and the ILW hierarchies, respectively) that the double ramification hierarchy is Miura equivalent to the Dubrovin-Zhang hierarchy associated to the same cohomological field theory via the construction described, for instance, in~\cite{DZ05}.\\

In this paper, after defining some natural hamiltonian densities for the double ramification hierarchy, using results from \cite{BSSZ12} we derive a series of equations for such densities. Some of these equations are reminiscent of the topological recursion relations and the divisor equation in Gromov-Witten theory~\cite{KM94,Get97}, but also of their analogues from symplectic field theory \cite{FR10,Ros12}. The dilaton recursion of Theorem \ref{dilaton}, in pariticular, is sufficient to recover the full hierarchy of the hamiltonian densities, starting just from one Hamiltonian (the one associated with the first descendant of the unit of the cohomological field theory). We apply this technique to compute explicit formulae for the double ramification hierarchy of the $r$-spin Witten's classes for $r=3,4$ and, in particular, we conjecture explicit formulae for the Miura transformations that should link such hierarchy to the Dubrovin-Zhang hierarchy.\\

In the second part we focus instead on the string solution of the double ramification hierarchy (see~\cite{Bur14}) and prove that the divisor equation for the Hamiltonians implies the divisor equation for the string solution. Our main application of this fact is a proof of the Miura equivalence described above in the case of the Gromov-Witten theory of the complex projective line. Consider the Gromov-Witten theory of $\CP1$ and the corresponding cohomological field theory. We use $u^\alpha$ as the variables of the double ramification hierarchy for $\CP1$ and $w^\alpha$ as the variables of the ancestor Dubrovin-Zhang hierarchy for $\CP1$. 
\begin{theorem}\label{theorem:dr for cp1}
The double ramification hierarchy for $\CP1$ is related to the ancestor Dubrovin-Zhang hierarchy for~$\CP1$~by the Miura transformation
\begin{gather}\label{eq:Miura}
u^\alpha(w)=\frac{e^{\frac{\eps}{2}\d_x}-e^{-\frac{\eps}{2}\d_x}}{\eps\d_x}w^\alpha.
\end{gather}
\end{theorem}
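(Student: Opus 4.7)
The strategy is to prove the Miura equivalence by matching string solutions on the two sides. A Miura transformation that maps the string solution of the ancestor Dubrovin-Zhang hierarchy to the string solution of the DR hierarchy automatically intertwines the two hierarchies, because (under the tau-symmetric, tame hydrodynamic setup of both constructions) a hamiltonian hierarchy is uniquely determined by its string solution together with the string and divisor equations. Thus the entire theorem reduces to identifying \eqref{eq:Miura} as the transformation that relates the two string solutions. The main tool I would use is the \emph{divisor equation for the string solution}, whose proof (from the divisor equation for Hamiltonians) is precisely the content of the second part of the paper.

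Concretely I would proceed as follows. First, on the DZ side the string solution $w^\alpha_{\mathrm{str}}(t,\eps)$ is (essentially) the second derivative of the ancestor Gromov-Witten potential of $\CP1$; it satisfies the string, dilaton and divisor equations coming from Gromov-Witten theory. On the DR side, the string solution $u^\alpha_{\mathrm{str}}(t,\eps)$ is the one defined in \cite{Bur14}, and the divisor-for-string-solution theorem established earlier in the paper shows it also satisfies a divisor equation with respect to the point class $\omega \in H^2(\CP1)$. Second, I would compute both string solutions at $t^\alpha_d = 0$ for $d\geq 1$, reducing each to a generating function in the primary variables $t^1_0, t^\omega_0$ and the genus parameter $\eps$. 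On the DZ side this is the primary (small phase space) ancestor GW potential. On the DR side, this involves DR-cycle intersection numbers against the Hodge class and $\psi$-classes for $\CP1$, which can be evaluated using Hain's formula and the polynomiality with respect to the ramification numbers. The content of the theorem is then the identity
\begin{equation*}
u^\alpha_{\mathrm{str}}(t,\eps) \;=\; \frac{e^{\frac{\eps}{2}\d_x} - e^{-\frac{\eps}{2}\d_x}}{\eps\d_x}\, w^\alpha_{\mathrm{str}}(t,\eps),
\end{equation*}
which, once checked on the primary sector, propagates to all descendant times by the string and divisor equations satisfied by both sides.

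Third, to pass from the matching of string solutions to the Miura equivalence of hierarchies, I would invoke the fact (implicit in the Dubrovin-Zhang formalism and used throughout this paper) that a tame tau-symmetric hamiltonian hierarchy with the standard hydrodynamic Poisson structure $\eta^{\alpha\beta}\d_x$ is uniquely determined by its string solution. Since \eqref{eq:Miura} is a symmetric difference operator that manifestly commutes with $\d_x$ and preserves $\eta^{\alpha\beta}\d_x$, it is a legitimate hamiltonian Miura transformation; combined with the matching of string solutions, this forces the Hamiltonians of the two hierarchies to correspond under \eqref{eq:Miura}.

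The main obstacle is the primary-sector computation in the second step: evaluating the DR-cycle intersection numbers for $\CP1$, which involve the Hodge class, $\psi$-classes, and the non-trivial $\CP1$ CohFT classes (organized by the degree grading coming from stable maps), and recognizing the result as the image under $(e^{\eps\d_x/2}-e^{-\eps\d_x/2})/(\eps\d_x)$ of the corresponding ancestor GW generating function. The polynomiality coming from Hain's formula makes this tractable, but matching the two sides degree by degree in $\eps$ requires careful combinatorial bookkeeping. An alternative (and perhaps cleaner) route would be to use the dilaton recursion of Theorem \ref{dilaton} to reduce the comparison to the single Hamiltonian $\overline{g}_{1,1}$, verify the Miura identity there, and then propagate by the recursion; the core difficulty of evaluating a DR-cycle integral for $\CP1$ remains the same.
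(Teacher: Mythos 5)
Your high-level skeleton (match the DR string solution with the Miura transform of the DZ topological solution, then argue that the Hamiltonians are determined by this solution) is indeed the paper's strategy, and your step 3 is essentially the paper's final argument. But there is a genuine gap in step 2. You claim that the identity $u^\alpha_{\mathrm{str}}=S(\eps\d_x)w^\alpha_{\mathrm{str}}$, ``once checked on the primary sector, propagates to all descendant times by the string and divisor equations.'' This is false: the string equation removes $\tau_0(1)$ insertions and the divisor equation removes $\tau_0(\omega)$ insertions, but neither controls the higher descendant times $t^\alpha_{d}$, $d\ge 1$. The reconstruction argument (Pandharipande's Proposition~2 for the Toda conjecture, adapted in Section~\ref{subsection:final step}) needs a third ingredient: the explicit form of the $t^\omega_0$-flow, i.e.\ the system~\eqref{eq:Toda system}. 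To have this on the DR side you must first prove $\og_{\omega,0}[u]=\oh_{\omega,0}[u]$ in closed form, which is a substantial separate computation (Section~\ref{subsection:omega flow}: a dimension count pins down the shape of $\og_{\omega,0}$, and the functional equation coming from $\{\og_{\omega,0},\og_{1,1}\}=0$ then forces $\og_{\omega,0}[u]=\int(\tfrac{(u^1)^2}{2}+q(e^{S(\eps\d_x)u^\omega}-u^\omega))dx$). Your proposal omits this ingredient entirely, and without it the propagation step collapses. Moreover, the correct base of the induction is not the primary sector but the $q=0$ (degree-zero in the Novikov variable) part of the solution, and even matching the $q=0$ Hamiltonians is not done by direct evaluation of DR integrals: the paper fixes the general shape of $\og_{1,d}[u]|_{q=0}$ using the vanishing $\lambda_g^2=0$ and then determines it uniquely from the commutation relation $\{\og_{1,d},\og_{1,1}\}_{\eta\d_x}=0$ via a uniqueness lemma from~\cite{Bur13}. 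Your plan to evaluate the DR-cycle intersection numbers for $\CP1$ head-on is exactly the computation the paper is structured to avoid, and you offer no method for it beyond ``careful combinatorial bookkeeping.''

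A secondary issue: your step 3 asserts that the Miura transformation ``preserves $\eta^{\alpha\beta}\d_x$,'' but what actually has to be checked is that it transforms the DZ/Toda operator $K^{desc}$ (which is a difference operator, not $\eta\d_x$) into $\eta\d_x$; this is the short computation at the start of Section~\ref{section:dr hierarchy for cp1} using~\eqref{eq:Toda operator} and Theorem~\ref{theorem:DZ theorem}, and it is a statement about the DZ side, not about the operator commuting with $\d_x$. Your closing remark that one could instead verify the Miura identity for $\og_{1,1}$ and propagate by the dilaton recursion also does not close the argument as stated, since the dilaton recursion reconstructs the DR hierarchy but you would still need to know that the DZ hierarchy obeys the same recursion in the $u$-variables, which is not established.
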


\vspace{1cm}

\noindent{\bf Acknowledgements.}\\
We would like to thank Boris Dubrovin, Rahul Pandharipande, Sergey Shadrin and Dimitri Zvonkine for useful discussions. P. R. was partially supported by a Chaire CNRS/Enseignement superieur 2012-2017 grant. A. B. was supported by grant ERC-2012-AdG-320368-MCSK in the group of R.~Pandharipande at ETH Zurich, by the Russian Federation Government grant no. 2010-220-01-077 (ag. no. 11.634.31.0005), the grants RFFI 13-01-00755 and NSh-4850.2012.1.

Part of the work was completed during the visit of A.B to the University of Burgundy in 2014 and during the visit of P.R. to the Forschungsinstitut f\"ur Mathematik at ETH Z\"urich in 2014.

\vspace{0.5cm}


\section{The double ramification hierarchy}

In this section we briefly recall the main definitions in \cite{Bur14}. The double ramification hierarchy is a system of commuting Hamiltonians on an infinite dimensional phase space that can be heuristically thought of as the loop space of a fixed vector space. The entry datum for this construction is a cohomological field theory  in the sense of Kontsevich and Manin \cite{KM94}. Denote by $c_{g,n}\colon V^{\otimes n} \to H^{even}(\oM_{g,n};\mbC)$ the system of linear maps defining the cohomological field theory, $V$ its underlying $N$-dimensional vector space, $\eta$ its metric tensor and $e_1$ the unit of the cohomological field theory.

\subsection{The formal loop space} 

The loop space of $V$ will be defined somewhat formally by describing its ring of functions. Following \cite{DZ05} (see also \cite{Ros09}), let us consider formal variables $u^\alpha_p$, $\alpha=1,\ldots,N$, $p=0,1,2,\ldots$, associated to a basis $e_1,\ldots,e_N$ for $V$. Always just at a heuristic level, the variable $u^\alpha:=u^\alpha_0$ can be thought of as the component $u^\alpha(x)$ along $e_\alpha$ of a formal loop $u:S^1\to V$, where $x$ the coordinate on $S^1$, and the variables $u^\alpha_{x}:=u^\alpha_1, u^\alpha_{xx}:=u^\alpha_2,\ldots$ as its $x$-derivatives. We then define the ring $\mathcal{A}$ of \emph{differential polynomials} as the ring of polynomials $f(u;u_x,u_{xx},\ldots)$ in the variables~$u^\alpha_i, i>0$, with coefficients in the ring of formal power series in the variables $u^\alpha=u^\alpha_0$. We can differentiate a differential polynomial with respect to $x$ by applying the operator $\partial_x := \sum_{i\geq 0} u^\alpha_{i+1} \frac{\partial}{\partial u^\alpha_i}$ (in general, we use the convention of sum over repeated greek indices, but not over repeated latin indices). Finally, we consider the quotient $\Lambda$ of the ring of differential polynomials first by constants and then by the image of $\partial_x$, and we call its elements \emph{local functionals}. A local functional which is the equivalence class of $f=f(u;u_x,u_{xx},\ldots)$ will be denoted by $\overline{f}=\int f dx$. Strictly speaking, in order to obtain the ring of functions for our formal loop space, we must consider a completion of the symmetric tensor algebra of the space of local functionals whose elements correspond to multiple integrals on multiple copies of the variable $x$ of differential polynomials of multiple copies of the variables $u^\alpha_i$, but we will not really use this in the paper.\\

Differential polynomials and local functionals can also be described using another set of formal variables, corresponding heuristically to the Fourier components $p^\alpha_k$, $k\in\Z$, of the functions $u^\alpha=u^\alpha(x)$. Let us, hence, define a change of variables
$$
u^\alpha_j = \sum_{k\in \Z} (i k)^j p^\alpha_k e^{i k x},
$$
which allows us to express a differential polynomial $f$ as a formal Fourier series in $x$ where the coefficient of $e^{i k x}$ is a power series in the variables $p^\alpha_j$ (where the sum of the subscripts in each monomial in~$p^\alpha_j$ equals~$k$). Moreover, the local functional~$\overline{f}$ corresponds to the constant term of the Fourier series of $f$.\\

Let us describe a natural class of Poisson brackets on the space of local functionals. Given a matrix of differential operators of the form $K^{\mu\nu} = \sum_{j\geq 0} K^{\mu\nu}_j \partial_x^j$, where the coefficients $K^{\mu\nu}_j$ are differential polynomials and the sum is finite, we define
$$
\{\overline{f},\overline{g}\}_{K}:= \int \left(\frac{\delta \overline{f}}{\delta u^\mu} K^{\mu \nu} \frac{\delta \overline{g}}{\delta u^\nu}\right) dx,
$$
where we have used the variational derivative $\frac{\delta \overline{f}}{\delta u^\mu}:=\sum_{i\geq 0} (-\partial_x)^i \frac{\partial f}{\partial u^\mu_i}$. Imposing that such bracket satisfies the anti-symmetry and the Jacobi identity will translate, of course, into conditions for the coefficients~$K^{\mu \nu}_j$. An operator that satisfies such conditions will be called hamiltonian. We will also define a Poisson bracket between a differential polynomial and a local functional as
$$
\{f,\overline{g}\}_{K}:= \sum_{i\geq 0} \frac{\partial f}{\partial u^\mu_i} \partial_x^i \left( K^{\mu \nu} \frac{\delta \overline{g}}{\delta u^\nu}\right) dx.
$$
A standard example of a hamiltonian operator is given by $\eta \partial_x$. The corresponding Poisson bracket, heavily used in what follows, also has a nice expression in terms of the variables $p^\alpha_k$:
$$\{p^\alpha_k, p^\beta_j\}_{\eta \partial_x} = i \eta^{\alpha \beta} k \delta_{k+j,0}.$$\\

Finally, we will need to consider extensions of the spaces $\mathcal{A}$ and $\Lambda$ of differential polynomials and local functionals. First, let us introduce a grading $\deg u^\alpha_i = i$  and a new variable $\eps$ with $\deg\eps = -1$. Then $\hcA^{[k]}$ and $\hLambda^{[k]}$ are defined, respectively, as the subspaces of degree $k$ of $\hcA:= \mathcal{A}  \otimes \C[[\eps]]$ and of~$\hLambda:=\Lambda \otimes \C[[\eps]] $. Their elements will still be called differential polynomials and local functionals. We can also define Poisson brackets as above, starting from hamiltonian operators $K^{\mu\nu} = \sum_{i,j\geq 0} K^{\mu\nu}_{ij} \eps^i \partial_x^j$, where $K^{\mu\nu}_{ij}$ are differential polynomials of degree $i-j+1$. The corresponding Poisson bracket will then have degree $1$. \\

A hamiltonian system of PDEs is a system of the form
\begin{gather}\label{eq:Hamiltonian system}
\frac{\partial u^\alpha}{\partial \tau_i} = K^{\alpha\mu} \frac{\delta\overline{h}_i}{\delta u^\mu}, \ \alpha=1,\ldots,N ,\  i=1,2,\ldots,
\end{gather}
where $\overline{h}_i\in \hLambda^{[0]}$ are local functionals with the compatibility condition $\{\oh_i,\oh_j\}_K=0$ for $i,j\geq 1$.

\subsection{The double ramification hierarchy} 

Given a cohomological field theory 
$$
c_{g,n}\colon V^{\otimes n} \to H^{even}(\oM_{g,n};\mbC),
$$
we define hamiltonian densities of the double ramification hierarchy as the following generating series:
\begin{equation}\label{density}
\begin{split}
g_{\alpha,d}:=&\sum_{\substack{g\ge 0,n\ge 1\\2g-1+n>0}}\frac{(-\eps^2)^g}{n!}\times\\
&\times\sum_{a_1,\ldots,a_n\in\mbZ}\left(\int_{\DR_g\left(-\sum a_i,a_1,\ldots,a_n\right)}\lambda_g\psi_1^d c_{g,n+1}(e_\alpha\otimes e_{\alpha_1}\otimes\ldots\otimes e_{\alpha_n})\right)p^{\alpha_1}_{a_1}\ldots p^{\alpha_n}_{a_n}e^{ix\sum a_i},
\end{split}
\end{equation}
for $\alpha=1,\ldots,N$ and $d=0,1,2,\ldots$. Here $\DR_g\left(a_1,\ldots,a_n\right) \in H^{2g}(\oM_{g,n};\Q)$ is the double ramification cycle, $\lambda_g$ is the $g$-th Chern class of the Hodge bundle and $\psi_i$ is the first Chern class of the tautological bundle at the $i$-th marked point.\\

The above expression can be uniquely written as a differential polynomial in $u^\mu_i$ in the following way. In genus $0$ we have
\begin{gather}\label{eq:genus zero DR}
\DR_g(a_1,\ldots,a_n)=[\oM_{0,n}].
\end{gather}
In higher genera $g>0$ Hain's formula~\cite{H11} together with the result of~\cite{MW13} imply that
\begin{gather}\label{eq:Hain's formula}
\left.\DR_g(a_1,\dotsc,a_n)\right|_{\M^{ct}_{g,n}}=\frac{1}{g!}\left(\sum_{j=1}^n \frac{a_j^2 \psi^{\dagger}_j}{2} - \sum_{\substack{J \subset \left\lbrace 1,\dotsc,n\right\rbrace  \\ \left| J \right| \geq 2}} \left(\sum_{i,j \in J, i<j} a_i a_j\right) \delta_0^J - \frac{1}{4} \sum_{J \subset \left\lbrace 1,\dotsc,n\right\rbrace} \sum_{h=1}^{g-1} a_J^2 \delta_h^J \right)^g,
\end{gather}
where $\M_{g,n}^{ct}$ is the moduli space of curves of compact type, $\psi^\dagger_j$ denotes the $\psi$-class that is pulled back from~$\oM_{g,1}$, the integer $a_J$ is the sum $\sum_{j\in J} a_j$ and the class $\delta_h^J$ represents the divisor whose generic point is a nodal curve made of one smooth component of genus $h$ with the marked points labeled by the list $J$ and of another smooth component of genus $g-h$ with the remaining marked points, joined at a separating node. From formulae~\eqref{eq:genus zero DR},~\eqref{eq:Hain's formula} and the fact that $\lambda_g$ vanishes on $\oM_{g,n} \setminus \M_{g,n}^{ct}$ it follows that the integral
\begin{gather}\label{integral for the density}
\int_{\DR_g\left(-\sum a_i,a_1,\ldots,a_n\right)}\lambda_g\psi_1^d c_{g,n+1}(e_\alpha\otimes e_{\alpha_1}\otimes\ldots\otimes e_{\alpha_n})
\end{gather}
is a polynomial in $a_1,\ldots,a_n$ homogeneous of degree $2g$. Denote it by 
$$
P_{\alpha,d,g;\alpha_1,\ldots,\alpha_n}(a_1,\ldots,a_n)=\sum_{\substack{b_1,\ldots,b_n\ge 0\\b_1+\ldots+b_n=2g}} P_{\alpha,d,g;\alpha_1,\ldots,\alpha_n}^{b_1,\ldots,b_n}a_1^{b_1}\ldots a_n^{b_n}.
$$
Then we have
$$
g_{\alpha,d}=\sum_{\substack{g\ge 0,n\ge 1\\2g-1+n>0}}\frac{\eps^{2g}}{n!}\sum_{\substack{b_1,\ldots,b_n\ge 0\\b_1+\ldots+b_n=2g}}P_{\alpha,d,g;\alpha_1,\ldots,\alpha_n}^{b_1,\ldots,b_n} u^{\alpha_1}_{b_1}\ldots u^{\alpha_n}_{b_n}.
$$
In particular, $\og_{\alpha,d}=\int g_{\alpha,d} dx$, expressed in terms of the $p$-variables, coincides with the definition given in \cite{Bur14}. The system of local functionals $\og_{\alpha,d}$, for $\alpha=1,\ldots,N$, $d=0,1,2,\ldots$, and the corresponding system of hamiltonian PDEs with respect to the standard Poisson bracket $\{\cdot,\cdot\}_{\eta\partial_x}$ is called the \emph{double ramification hierarchy}. The fact that the Hamiltonians $\og_{\alpha,d}$ mutually commute with respect to the standard bracket is proved in \cite{Bur14}. Finally, we add by hand $N$ more commuting hamiltonian densities $g_{\alpha,-1}:=\eta_{\alpha\mu} u^\mu$ for $\alpha=1,\ldots,N$. The corresponding local functionals $\og_{\alpha,-1}$ are Casimirs of the standard Poisson bracket.

\section{Recursion relations for the hamiltonian densities}

The results of this section are based on the following two splitting formulae from \cite{BSSZ12} for the intersection of a $\psi$-class with the double ramification cycle. Let  $I \sqcup J = \{ 1, \dots, n\}$ and let us denote by $\DR_{g_1}(a_I,-k_1, \dots, -k_p) \boxtimes \DR_{g_2}(a_J, k_1, \dots, k_p)$ the cycle in $\oM_{g_1+g_2+p-1,n}$ obtained by gluing the two double ramification cycles at the marked points labeled by $k_1,\ldots,k_p$.
\begin{theorem}[\cite{BSSZ12}] \label{BSSZ}
Let $a_1, \dots, a_n$ be a list of integers with vanishing sum. Assume that $a_s \not= 0$. Then we have
\begin{multline*}
a_s \psi_s \DR_g(a_1, \dots, a_n)=\\ 
=\sum_{I,J}
\sum_{p \geq 1}\sum_{g_1, g_2} 
\sum_{k_1, \dots, k_p}
 \frac{\rho}{2g-2+n} \frac{\prod_{i=1}^p k_i}{p!}
\DR_{g_1}(a_I,-k_1, \dots, -k_p) \boxtimes \DR_{g_2}(a_J, k_1, \dots, k_p).
\end{multline*}
Here the first sum is taken over all $I \sqcup J = \{ 1, \dots, n\}$ such that $\sum_{i \in I} a_i >0$; the third sum is over all non-negative genera $g_1$, $g_2$ satisfying $g_1+g_2+p-1=g$; the fourth sum is over $p$-uplets of positive integers with total sum $\sum_{i\in I}a_i=-\sum_{j\in J}a_j$. The number $\rho$ is defined by
$$
\rho = \left\{ 
\begin{array}{lcl}
 \hspace{0.4cm} 2g_2-2+|J|+p, & \mbox{if} & s \in I; \\
-(2g_1-2+|I|+p), & \mbox{if} & s \in J.
\end{array}
\right.
$$
\end{theorem}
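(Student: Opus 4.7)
The proof strategy lifts the problem from $\oM_{g,n}$ to the moduli space $\oM_{g,n}(\mbP^1,\mathbf{a})^\sim$ of stable maps to a rubber $\mbP^1$ (i.e.\ with the target considered up to the $\mathbb{C}^*$-action), with ramification profile over $0$ and $\infty$ prescribed by the positive and negative entries of $\mathbf{a}=(a_1,\dots,a_n)$. By construction, $\DR_g(\mathbf{a})$ is the pushforward under the forgetful morphism $\epsilon\colon \oM_{g,n}(\mbP^1,\mathbf{a})^\sim \to \oM_{g,n}$ of the virtual fundamental class. This is the right arena because, at a marked point with nonzero ramification order $a_s$, the $\psi$-class admits a natural boundary expression on the rubber moduli that collapses to the desired identity after pushforward.

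The engine is a tautological relation on the rubber moduli space. Comparing the source cotangent line at the $s$-th marked point to the pullback of the target cotangent line at its image, the difference is concentrated on the boundary divisors where the target degenerates into a reducible chain. Since the target is a rubber $\mbP^1$, the target-$\psi$ vanishes on the open stratum where the target is irreducible, so one obtains an identity of the form
\[ a_s \psi_s = \sum_{D} m_D \, [D] \]
in $H^*(\oM_{g,n}(\mbP^1,\mathbf{a})^\sim)$, where the sum runs over boundary divisors $D$ whose general point is a map to a chain of two $\mbP^1$'s, and the multiplicities $m_D$ record how the marked point $s$ interacts with the cut.

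Each such divisor $D$ is naturally a product of two rubber moduli spaces glued at $p\ge 1$ preimages of the node, with the marked points indexed by a set $I\subset\{1,\dots,n\}$ (with $\sum_{i\in I}a_i>0$ to fix the orientation of the cut) going to one side and the complementary $J$ to the other, together with auxiliary node multiplicities $k_1,\dots,k_p$ satisfying $\sum k_i = \sum_{i\in I}a_i$. Applying $\epsilon_*$, $D$ becomes the gluing cycle $\DR_{g_1}(a_I,-k_1,\dots,-k_p)\boxtimes \DR_{g_2}(a_J,k_1,\dots,k_p)$. The factor $\prod_i k_i/p!$ comes from the degree of the gluing map (the node of multiplicity $k_i$ is covered $k_i$-to-one) together with the symmetry among the $p$ unordered nodes; the factor $\rho/(2g-2+n)$ comes from the branch morphism to the Losev--Manin moduli of target chains, where the denominator is the total number of simple branch points (Riemann--Hurwitz) and $\rho$ counts those landing on the side \emph{not} containing the marked point $s$, signed accordingly.

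The principal obstacle is the precise bookkeeping on the rubber moduli: identifying the correct tautological relation including its signs and the role of the node multiplicities $k_i$, verifying that the restriction of the virtual class to each boundary divisor factors as a product of virtual classes on the two glued rubber spaces, and pinning down the combinatorial weight $\rho$ through the branch morphism. A shortcut on the compact type locus via Hain's formula~\eqref{eq:Hain's formula} is tempting since only the cap with $\lambda_g$ survives in the applications~\eqref{density}, but the statement is a cycle identity on all of $\oM_{g,n}$ and genuinely requires the rubber argument of~\cite{BSSZ12}.
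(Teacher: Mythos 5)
This theorem is imported verbatim from \cite{BSSZ12}: the present paper gives no proof of it, so there is no internal argument to compare yours against. What you have written is a faithful summary of the architecture of the proof in \cite{BSSZ12} itself --- realize $\DR_g(a_1,\dots,a_n)$ as the pushforward of the virtual class of rubber stable maps, compare $a_s\psi_s$ with the pullback of the cotangent line of the rubber target at the end to which $s$ maps, expand the resulting class over the boundary divisors where the target breaks, and extract the coefficients via the branch morphism to a Losev--Manin space. Your closing observation is also correct: although the applications in this paper always cap against $\lambda_g$, the statement is a cycle identity on all of $\oM_{g,n}$ and is not accessible from Hain's formula~\eqref{eq:Hain's formula}, which only computes the restriction to the compact-type locus.

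That said, as it stands your text is a table of contents rather than a proof: every quantitative ingredient of the theorem is asserted exactly where it would have to be established. Concretely: (i) the identity $a_s\psi_s=\sum_D m_D[D]$ requires the precise comparison lemma between the source cotangent line at $s$ and the target cotangent line at $0$ or $\infty$ (according to the sign of $a_s$), including the correction divisor where $s$ is separated from that end --- saying that ``$m_D$ records how $s$ interacts with the cut'' states the answer without deriving it; (ii) the factor $\prod_i k_i/p!$ requires the splitting of the rubber virtual class along a boundary divisor into a product of virtual classes carrying the node-smoothing multiplicities $k_i$, a degeneration-formula statement you invoke but do not verify; (iii) the coefficient $\rho/(2g-2+n)$ requires identifying the target cotangent class, via the branch morphism, with a weighted sum of boundary divisors of the Losev--Manin space, plus the sign analysis distinguishing $s\in I$ from $s\in J$; note moreover that $2g-2+n$ equals the number of simple branch points only when all $a_i\neq 0$ --- when some $a_i=0$ one must also record the images of those marked points as light points on the target to make the count come out to $2g-2+n$, a subtlety your Riemann--Hurwitz gloss elides. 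None of (i)--(iii) falls out of the setup; they \emph{are} the theorem. Since the result is used here as a black box, citing \cite{BSSZ12} is the appropriate move; a self-contained proof would have to carry out each of these three steps.
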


\begin{theorem}[\cite{BSSZ12}] \label{BSSZ2}
Let $a_1, \dots, a_n$ be a list of integers with vanishing sum. Assume that $a_s \not= 0$ and $a_l=0$. Then we have
\begin{gather*}
a_s \psi_s \DR_g(a_1, \dots, a_n)=\sum_{I,J}
\sum_{p \geq 1}\sum_{g_1, g_2} 
\sum_{k_1, \dots, k_p}
\varepsilon \; \frac{\prod_{i=1}^p k_i}{p!}
\DR_{g_1}(a_I,-k_1, \dots, -k_p) \boxtimes \DR_{g_2}(a_J, k_1, \dots, k_p).
\end{gather*}
Here the first sum is taken over all $I \sqcup J = \{ 1, \dots, n\}$ such that $\sum_{i \in I} a_i >0$; the third sum is over all non-negative genera $g_1$, $g_2$ satisfying $g_1 + g_2 + p-1= g$; the fourth sum is over $p$-uplets of positive integers with total sum $\sum_{i \in I} a_i = -\sum_{j \in J} a_j$. The number $\varepsilon$ is defined by
$$
\varepsilon = 
\begin{cases}
 1, & \text{if $s \in I$ and $l \in J$};\\
-1, & \text{if $s \in J$ and $l \in I$};\\
0,  & \text{otherwise}.
\end{cases}
$$
\end{theorem}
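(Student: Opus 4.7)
The plan is to derive Theorem \ref{BSSZ2} from Theorem \ref{BSSZ} via a forgetful-map argument. Let $\pi\colon\oM_{g,n}\to\oM_{g,n-1}$ denote the morphism that forgets the marked point $l$. I will use two standard compatibilities:
\begin{itemize}
\item the pullback identity $\pi^*\DR_g(a_1,\ldots,\widehat{a_l},\ldots,a_n)=\DR_g(a_1,\ldots,a_n)\bigr|_{a_l=0}$, a consequence of the polynomiality of $\DR_g$ in the ramification numbers (readily verified on compact type from Hain's formula, using $\pi^*\psi_j^\dagger=\psi_j^\dagger$ and $\pi^*\delta_h^J=\delta_h^J+\delta_h^{J\cup\{l\}}$);
\item the standard $\psi$-class comparison $\psi_s^{(n)}=\pi^*\psi_s^{(n-1)}+D_{s,l}$, where $D_{s,l}$ is the boundary divisor whose generic point is a curve with $s$ and $l$ on a rational tail.
\end{itemize}
Multiplying the second identity by $a_s\,\pi^*\DR_g(a_1,\ldots,\widehat{a_l},\ldots,a_n)$ and using the hypothesis $a_l=0$ yields
\begin{equation*}
a_s\psi_s\DR_g(a_1,\ldots,a_n)=\pi^*\bigl(a_s\psi_s\DR_g(a_1,\ldots,\widehat{a_l},\ldots,a_n)\bigr)+a_s\,D_{s,l}\cdot\pi^*\DR_g(a_1,\ldots,\widehat{a_l},\ldots,a_n).
\end{equation*}

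I would then expand the first summand by applying Theorem \ref{BSSZ} on $\oM_{g,n-1}$ and pulling back each boundary stratum: the pullback of a stratum $\DR_{g_1}(a_{\hat I},-\underline{k})\boxtimes\DR_{g_2}(a_{\hat J},\underline{k})$ under $\pi$ is the sum of the two strata in $\oM_{g,n}$ obtained by attaching $l$ (with ramification $0$) to either side of the node. This produces a contribution for every bi-partition $I\sqcup J=\{1,\ldots,n\}$ for which deletion of $l$ leaves both components stable. The second summand, supported on $D_{s,l}$, is analysed via the gluing isomorphism $D_{s,l}\simeq\oM_{g,n-1}$ (which contracts the rational bubble) and expresses it as a sum of boundary strata in $\oM_{g,n}$ in which $s$ and $l$ lie together on a genus-zero bubble.

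The main obstacle is the combinatorial matching of coefficients. One has to verify that the fractions $\rho/(2g-2+(n-1))$ coming from Theorem \ref{BSSZ}, summed over the two placements of $l$ and combined with the $D_{s,l}$-correction, collapse into the clean weights $\varepsilon\in\{-1,0,+1\}$ of Theorem \ref{BSSZ2}. Particularly delicate are those target strata with $g_1=0$ and few marked points on the $l$-side, which are absent from the pullback of Theorem \ref{BSSZ} because the corresponding components on $\oM_{g,n-1}$ would be unstable after deleting $l$; these must be generated by the $D_{s,l}$-correction together with subtler boundary contributions arising from the intersection of $\pi^*\DR_g$ with other boundary divisors. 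The case distinction in the definition of $\rho$ (according to whether $s\in\hat I$ or $s\in\hat J$) is then what drives the cancellation forcing $\varepsilon=0$ whenever $s$ and $l$ end up on the same side of the bi-partition, and the correct signs $\pm 1$ when they end up on opposite sides.
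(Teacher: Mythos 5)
First, note that the paper does not prove this statement at all: it is imported verbatim from \cite{BSSZ12}, so there is no internal proof to compare against. Judged on its own merits, your proposal has a genuine gap, and it is located exactly where you flag it — the ``combinatorial matching of coefficients'' is not merely delicate, it fails as stated. Your decomposition $a_s\psi_s\DR_g(a_1,\dots,a_n)=\pi^*\bigl(a_s\psi_s\DR_g(a_1,\dots,\widehat{a_l},\dots,a_n)\bigr)+a_s D_{s,l}\cdot\pi^*\DR_g(\dots)$ is correct, but the two sides of the desired identity are then \emph{different formal linear combinations} of the boundary classes $\DR_{g_1}\boxtimes\DR_{g_2}$: the pullback of Theorem~\ref{BSSZ} assigns the same weight $\rho/(2g-3+n)$ to both placements of $l$ relative to the node, whereas Theorem~\ref{BSSZ2} assigns weight $0$ when $s$ and $l$ land on the same side and $\pm 1$ otherwise, and the $D_{s,l}$-correction is supported only on strata where $s$ and $l$ sit together on a rational bubble, so it cannot repair the weights of the other same-side strata. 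The smallest case already exhibits this: for $g=0$, $n=4$, $(a_1,a_2,-a_1-a_2,0)$, $s=1$, $l=4$, the downstairs formula pulls back to zero (since $\psi_1=0$ on $\oM_{0,3}$) and your decomposition yields $a_1 D_{1,4}$, while the right-hand side of Theorem~\ref{BSSZ2} is $(a_1+a_2)D_{12|34}-a_2 D_{13|24}$ with the stratum $D_{14|23}$ carrying coefficient $0$. These agree in $H^*(\oM_{0,4})$ only because of the WDVV relation identifying the three boundary points. So completing your argument would require systematically invoking nontrivial tautological relations among the boundary strata (restricted to the double ramification cycle) in all genera, for which you give no mechanism; this is essentially as hard as the theorem itself.

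It is also worth knowing that the logical order in \cite{BSSZ12} is the reverse of what you attempt: the primary identity proved there is the two-point comparison for $(a_s\psi_s-a_l\psi_l)\cdot\DR_g$, with the clean weights $\eps\in\{-1,0,1\}$ recording the relative position of $s$ and $l$; Theorem~\ref{BSSZ2} is the immediate specialization $a_l=0$, and Theorem~\ref{BSSZ} is then \emph{derived from it} by an averaging argument over the auxiliary point. Trying to run the implication in the opposite direction loses information (the $\rho$-weights are averages and cannot be ``unaveraged'' by a forgetful map alone), which is the structural reason your coefficient matching cannot close.
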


\subsection{Dilaton recursion}
In this section we prove the most powerful of our recursion relations for the hamiltonian densities~\eqref{density}. It allows to reconstruct the full hierarchy of densities, starting from $\og_{1,1}$.

\begin{theorem}\label{dilaton}
We have the following recursion:
\begin{equation} \label{dilatoneq}
\del_x \left( (D-1) g_{\alpha,d+1}\right)= \sum_{k\ge 0} \left(\frac{\d g_{\alpha,d}}{\d u^\mu_k}\eta^{\mu\nu} \d^{k+1}_x\frac{\delta \og_{1,1}}{\delta u^\nu}\right), \hspace{1cm}\alpha=1,\ldots,N,\ d\geq -1,
\end{equation}
where $D:=\sum_{k\ge 0} (k+1) u^\alpha_k \frac{\del}{\del u^\alpha_k}$.
\end{theorem}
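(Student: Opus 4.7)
The strategy is to rewrite both sides of \eqref{dilatoneq} in the Fourier $p$-variables and match them via Theorem~\ref{BSSZ}, which splits the extra $\psi$-class on the LHS into boundary contributions reproducing the Poisson bracket structure on the RHS. By the definition of the bracket between a differential polynomial and a local functional with $K=\eta\d_x$, the RHS of \eqref{dilatoneq} equals $\{g_{\alpha,d},\og_{1,1}\}_{\eta\d_x}$; in $p$-variables, using $\{p^\mu_m,p^\nu_j\}_{\eta\d_x}=i\eta^{\mu\nu}m\,\delta_{m+j,0}$, this is a sum over pairings of a single $p$-variable from $g_{\alpha,d}$ with one from $\og_{1,1}$. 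By the CohFT gluing axiom each such contraction corresponds geometrically to gluing the two DR cycles at marked points with opposite ramifications, with $e_\mu\otimes e_\nu\,\eta^{\mu\nu}$ contracted at the node.

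For the LHS I expand $g_{\alpha,d+1}$ in $p$-variables. On a monomial $u^{\alpha_1}_{b_1}\cdots u^{\alpha_n}_{b_n}$ with $\sum b_j=2g$ (polynomiality of \eqref{integral for the density}), the operator $(D-1)$ acts as multiplication by $2g+n-1=2g-2+(n+1)$, while $\d_x$ brings down the factor $i\sum_j a_j=-ia_0$, where $a_0:=-\sum_j a_j$ is the ramification at the first marked point. Writing $\psi_0^{d+1}=\psi_0\cdot\psi_0^d$ and applying Theorem~\ref{BSSZ} to $a_0\psi_0\DR_g(a_0,\ldots,a_n)$, the dilaton factor $2g-2+(n+1)$ exactly cancels the denominator $2g-2+(n+1)$ in BSSZ's formula, leaving a clean sum over boundary strata indexed by $I\sqcup J=\{0,\ldots,n\}$ and node ramifications $\vec k=(k_1,\ldots,k_p)$. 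Since $\lambda_g$ vanishes outside $\M^{ct}_{g,n+1}$ and any $p\ge 2$ gluing produces a non-tree dual graph, only the strata with a single separating node ($p=1$, $g_1+g_2=g$) contribute; on these $\lambda_g=\lambda_{g_1}\lambda_{g_2}$ and $c_{g,n+1}$ splits as $\sum_{\mu,\nu}c_{g_1,|I|+1}(\cdots\otimes e_\mu)\,\eta^{\mu\nu}\,c_{g_2,|J|+1}(e_\nu\otimes\cdots)$.

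It remains to convert the surviving BSSZ factor $\rho=2g_2-2+(|J|+1)$ (in the case $0\in I$) into a $\psi\cdot e_1$-insertion, thereby making the $g_{1,1}$-structure explicit. Using the identity $\DR_{g_2}(0,a_J,k)=\pi^*\DR_{g_2}(a_J,k)$ under the forgetful morphism $\pi$ dropping a zero-ramification marked point, together with the classical dilaton equation, one obtains
\[
\rho\int_{\DR_{g_2}(a_J,k)}\lambda_{g_2}\,c_{g_2,|J|+1}(e_\nu\otimes\cdots)=\int_{\DR_{g_2}(0,a_J,k)}\lambda_{g_2}\,\psi_0\,c_{g_2,|J|+2}(e_1\otimes e_\nu\otimes\cdots),
\]
whose right-hand side is exactly a $p$-coefficient of $g_{1,1}$. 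Combined with the node ramification factor $k$ and the imaginary unit $-i$ tracked from the LHS, this matches the contribution in $\{g_{\alpha,d},\og_{1,1}\}$ coming from contracting a $p$-variable of ramification $-k$ in $g_{\alpha,d}$ with one of ramification $+k$ in $\og_{1,1}$. The symmetric case $0\in J$ is handled the same way, the sign flip of $\rho$ in Theorem~\ref{BSSZ} compensating the opposite sign of the contracted ramification in the Poisson bracket.

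The main obstacle is the sign and combinatorial bookkeeping: the prefactor $\eps^{2g}/n!$ on the LHS must combine correctly with the sum over partitions $I\sqcup J$ to reproduce the $(\eps^{2g_1}/n_1!)\cdot(\eps^{2g_2}/n_2!)$ factors arising from the two $p$-expansions on the RHS, and the two BSSZ cases $0\in I$ and $0\in J$ must combine consistently with the two signs of the contracted node ramification and with the two choices of which side of the Poisson bracket carries the dilaton insertion. Once these verifications are in place, comparing coefficients of arbitrary $p$-monomials yields the recursion \eqref{dilatoneq}.
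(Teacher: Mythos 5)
Your argument follows essentially the same route as the paper's proof: apply Theorem~\ref{BSSZ} to $a_0\psi_0\DR_g(a_0,a_1,\ldots,a_n)$ at the distinguished point, note that $(D-1)$ produces exactly the factor $2g-2+(n+1)$ cancelling the BSSZ denominator, keep only the $p=1$ strata because $\lambda_g$ kills everything off compact type, and convert the residual factor $\rho$ into a $\psi$-decorated $e_1$-insertion via the dilaton/pushforward identity $\pi_*\psi=2g-2+n$. The paper packages that last step as the identity $(D-2)\og=\og_{1,1}$ for an auxiliary local functional $\og$, whereas you perform the same conversion coefficient by coefficient; this is a presentational difference only.

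Two genuine issues remain. First, your proof only covers $d\ge 0$: the step $\psi_0^{d+1}=\psi_0\cdot\psi_0^{d}$ requires at least one $\psi$-class on the left, so the case $d=-1$ asserted in the theorem --- which is precisely the base case needed to launch the reconstruction from $\og_{1,1}$ --- is not addressed. There the identity reduces to $(D-1)g_{\alpha,0}=\frac{\delta\og_{1,1}}{\delta u^\alpha}$, which does not come from BSSZ but from $g_{\alpha,0}=\frac{\delta\og}{\delta u^\alpha}$ combined with the same dilaton identity; you must add this separately. Second, the index set in Theorem~\ref{BSSZ} (the subsets $I$ with $\sum_{i\in I}a_i>0$) depends on the signs of the partial sums of the $a_i$, so "comparing coefficients of arbitrary $p$-monomials" is not one uniform computation: which of your two cases $0\in I$ or $0\in J$ occurs, and which subsets contribute, changes across walls, and this is exactly the bookkeeping you flag but do not resolve. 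The clean fix is the paper's: both sides of \eqref{dilatoneq} have coefficients polynomial in $(a_1,\ldots,a_n)$, so it suffices to verify the identity on the cone $a_i>0$, where the point carrying $\psi^{d+1}$ has negative ramification, hence always lies in the BSSZ set $J$, only one case occurs, and the matching with the Poisson bracket contractions is unambiguous.
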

\begin{proof}
Given the polynomiality of the integral~\eqref{integral for the density}, it is sufficient to focus on the case, where $a_i>0, i=1,\ldots,n$.
Using Theorem \ref{BSSZ} we obtain
\begin{multline}\label{splitting}
(2g-2+n+1)\left(\sum_{i=1}^n a_i\right) P_{\alpha,d+1,g;\alpha_1,\ldots,\alpha_n}(a_1,\ldots,a_n) = \\
=\sum_{\substack{I\sqcup J=\{1,\ldots,n\}\\|J|\ge 1}}\sum_{\substack{g_1+g_2=g\\2g_2-1+|J|>0}}\sum_{k>0}(2g_2-2+|J|+1)k\cdot P_{\alpha,d,g_1;\alpha_I,\mu}(a_I,k) \eta^{\mu\nu} P_{g_2;\alpha_J,\nu}(a_J,-k),
\end{multline}
where $P_{g;\alpha_1,\ldots,\alpha_n}(a_1,\ldots,a_n)=\sum_{\substack{b_1,\ldots,b_n\ge 0\\b_1+\ldots+b_n=2g}} P^{b_1,\ldots,b_n}_{g;\alpha_1,\ldots,\alpha_n} a_1^{b_1}\ldots a_n^{b_n}=\int_{\DR_g\left(a_1,\ldots,a_n\right)}\lambda_g c_{g,n}(e_{\alpha_1}\otimes\ldots\otimes e_{\alpha_n})$. Notice that the right-hand side of the above formula is nonzero, only if $k=\sum_{j\in J} a_j$. Let us introduce an auxiliary functional
\begin{align*}
\og=&\sum_{\substack{g\ge 0,n\ge 2\\2g-2+n>0}}\frac{(-\eps^2)^g}{n!}\sum_{\substack{a_1,\ldots,a_n\in\mbZ\\\sum a_i=0}}P_{g;\alpha_1,\ldots,\alpha_n}(a_1,\ldots,a_n) p_{a_1}^{\alpha_1}\ldots p_{a_n}^{\alpha_n} = \\
= &\int\left(\sum_{\substack{g\ge 0,n\ge 2\\2g-2+n>0}}\frac{\eps^{2g}}{n!}\sum_{\substack{b_1,\ldots,b_n\ge 0\\b_1+\ldots+b_n=2g}}P^{b_1,\ldots,b_n}_{g;\alpha_1,\ldots,\alpha_n} u^{\alpha_1}_{b_1}\ldots u^{\alpha_n}_{b_n} \right) dx.
\end{align*}
Then equation~(\ref{splitting}) translates to
\begin{equation}\label{splitting2}
\del_x \left( (D-1) g_{\alpha,d+1}\right)=  \sum_{k\ge 0} \left(\frac{\d g_{\alpha,d}}{\d u^\mu_k}\eta^{\mu\nu} \d^{k+1}_x\frac{\delta}{\delta u^\nu} ((D-2) \og) \right).
\end{equation}
Using the formula for the push-forward of the class $\psi_1$ along the map $\pi\colon\oM_{g,n+1}\to \oM_{g,n}$, which forgets the first marked point, $\pi_*\psi_1= 2g-2 +n$, it's easy to prove the following version of the dilaton equation (compare also with \cite{FR10}):
\begin{gather}\label{eq:dilaton for og}
(D-2) \og = \og_{1,1},
\end{gather}
which, together with formula (\ref{splitting2}), proves the theorem for all $d \geq 0$. For the case $d=-1$, equation~\eqref{dilatoneq} gives $(D-1) g_{\alpha,0} = \frac{\delta \og_{1,1}}{\delta u^\alpha}$, which is again an immediate consequence of the dilaton equation above and the definition of $\og$.
\end{proof}

\begin{rem}
Notice how, in equation (\ref{dilatoneq}), the right-hand side can be written as
$$
\{g_{\alpha,d}, \og_{1,1}\}_{\eta \del_x} =\frac{\del g_{\alpha,d}}{\del t^1_1},
$$
where $t^\alpha_i$ is the time associated with the evolution along the hamiltonian flow generated by $\og_{\alpha,i}$. Since we know that $\{\og_{\alpha,d}, \og_{1,1}\}_{\eta \del_x}=0$, we are sure that the above expression is $\del_x$-exact and, hence, such is the right-hand side of equation (\ref{dilatoneq}), so that it makes sense to write
$$ (D-1) g_{\alpha,d+1}= \del_x^{-1}\frac{\del g_{\alpha,d}}{\del t^1_1}.$$
\end{rem}

\subsection{Topological recursion}
In this section we prove an equation for the hamiltonian densities that is reminiscent of the topological recursion relation in rational Gromov-Witten theory.
\begin{theorem}\label{trr}
We have
\begin{equation}\label{trreq}
\del_x \frac{\del g_{\alpha,d+1}}{\del u^\beta}= \sum_{k\ge 0} \left(\frac{\d g_{\alpha,d}}{\d u^\mu_k}\eta^{\mu\nu} \d^{k+1}_x\frac{\delta \og_{\beta,0}}{\delta u^\nu}\right), \hspace{1cm}1\le\alpha,\beta\le N,\ d\geq -1.
\end{equation}
\end{theorem}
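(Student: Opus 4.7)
The strategy is to mimic the proof of Theorem~\ref{dilaton}, substituting Theorem~\ref{BSSZ2} for Theorem~\ref{BSSZ} throughout. The key observation is that, in the $p$-variable description of the density~\eqref{density}, the partial derivative $\partial g_{\alpha,d+1}/\partial u^\beta$ corresponds to inserting an extra marked point labelled $\beta$ with vanishing ramification into the DR-integrand; this is precisely the additional piece of data needed to invoke Theorem~\ref{BSSZ2}, with $s=1$ the first marked point (carrying $e_\alpha$ and $\psi_1^{d+1}$) and $l$ the new $\beta$-point.

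I would proceed in three steps. First, reduce to the case $a_i>0$ for $i\geq 2$ by polynomiality of the integral~\eqref{integral for the density} and set $a_0=-\sum_{i\geq 2}a_i$; the operator $\partial_x$ then brings down the factor $\sum_i a_i=-a_0$, producing the $a_0\psi_1$ needed to apply Theorem~\ref{BSSZ2} on $\psi_1^{d+1}=\psi_1\cdot\psi_1^d$. Second, apply Theorem~\ref{BSSZ2} to $a_0\psi_1\,\DR_g(a_0,0,a_2,\ldots,a_n)$: only the two configurations with $s$ and $l$ in opposite parts of $I\sqcup J$ contribute, with signs $\varepsilon=\pm 1$. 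This yields an analogue of~\eqref{splitting} of the schematic form
\begin{equation*}
\Bigl(\sum_i a_i\Bigr)P_{\alpha,d+1,g;\beta,\alpha_2,\ldots,\alpha_n}(0,a_2,\ldots,a_n)=\sum_{I\sqcup J}\sum_{g_1+g_2=g}\sum_{k>0}\varepsilon\cdot k\cdot P_{\alpha,d,g_1;\alpha_I,\mu}(a_I,k)\,\eta^{\mu\nu}\,P_{\beta,0,g_2;\alpha_J,\nu}(a_J,-k),
\end{equation*}
the combinatorial prefactor being simply $\pm 1$ rather than $\rho/(2g-2+n)$. Third, translate back to a differential-polynomial identity exactly as in the dilaton proof: the first factor assembles into $\partial g_{\alpha,d}/\partial u^\mu_k$ and the second into (a contribution to) $\delta\og_{\beta,0}/\delta u^\nu$, with the $\eta^{\mu\nu}$ coming from the CohFT splitting axiom at the gluing node and the factorisation $\lambda_g=\lambda_{g_1}\lambda_{g_2}$ on the compact-type locus where $\lambda_g$ is supported.

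The crucial simplification compared with the dilaton case is that no auxiliary identity analogous to $(D-2)\og=\og_{1,1}$ is needed: the side of the splitting carrying the new $\beta$-marked point has no $\psi$-class on $e_\beta$, which is exactly the defining data of $\og_{\beta,0}$. The base case $d=-1$ is checked directly from $g_{\alpha,-1}=\eta_{\alpha\mu}u^\mu$: the right-hand side collapses to $\partial_x\,\delta\og_{\beta,0}/\delta u^\alpha$, and this equals $\partial_x\,\partial g_{\alpha,0}/\partial u^\beta$ by the $e_\alpha\leftrightarrow e_\beta$ symmetry of the DR-integral defining $g_{\alpha,0}$ (valid because $g_{\alpha,0}$ contains no $\psi$-classes). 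The main technical obstacle is the careful bookkeeping of the two $\varepsilon=\pm 1$ cases and verifying that they combine, with the correct signs, into the single Poisson-bracket expression $\{g_{\alpha,d},\og_{\beta,0}\}_{\eta\partial_x}$ on the right-hand side of~\eqref{trreq}.
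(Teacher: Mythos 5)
Your proposal is correct and follows exactly the route the paper intends: its own proof of Theorem~\ref{trr} consists of the single remark that the argument is ``completely analogous'' to that of Theorem~\ref{dilaton} with Theorem~\ref{BSSZ2} in place of Theorem~\ref{BSSZ}, and your write-up supplies precisely those details, including the key identifications (a zero-ramification $\beta$-insertion realises $\d/\d u^\beta$, the prefactor $\varepsilon=\pm1$ removes the need for any dilaton-type identity, and the $d=-1$ case reduces to $\frac{\d g_{\alpha,0}}{\d u^\beta}=\frac{\delta\og_{\beta,0}}{\delta u^\alpha}$ by relabelling marked points). I see no gap.
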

\begin{proof}
The proof is completely analogous to the proof of Theorem \ref{dilaton}, but uses the second splitting formula, Theorem \ref{BSSZ2}. We leave the details to the reader.
\end{proof}
\begin{rem}
As for Theorem \ref{dilaton}, we can express Theorem \ref{trr} too in a more suggestive form
$$
\frac{\del g_{\alpha,d+1}}{\del u^\beta} =  \del_x^{-1} \frac{\del g_{\alpha,d}}{\del t^\beta_0}.
$$
As a special case, for $\beta=1$, given that $\del_{ t^1_0} = \del_x$, we get the following string-type equation:
$$ \frac{\del g_{\alpha,d+1}}{\del u^1} = g_{\alpha,d},$$
which was already proved in \cite{Bur14}.
\end{rem}

\subsection{Divisor equation}\label{subsection:divisor recursion}

In this section we derive another type of recursion that is based on the divisor equation in Gromov-Witten theory. We can recommend the papers~\cite{Get98} and~\cite{KM94} as a good introduction to Gromov-Witten theory.

Let $V$ be a smooth projective variety and suppose that $H^{odd}(V;\mbC)=0$. Denote by $E\subset H_2(V;\mbZ)$ the semigroup of effective classes and let $\mcN$ be the Novikov ring of~$V$. Let $e_1,\ldots,e_N$ be a basis in~$H^*(V;\mbC)$. We assume that the elements $e_i$ are homogeneous in the cohomology $H^*(V;\mbC)$. Consider the Gromov-Witten theory of $V$ and the corresponding cohomological field theory. Recall that the metric $\eta=(\eta_{\alpha\beta})$, $\eta_{\alpha\beta}=(e_\alpha,e_\beta)$, is induced by the Poincar\'e pairing in the cohomology $H^*(V;\mbC)$. Denote by 
$$
c_{g,n,\beta}(e_{\alpha_1}\otimes e_{\alpha_2}\otimes\ldots\otimes e_{\alpha_n})\in H^{even}(\oM_{g,n};\mbC)
$$
the classes of our cohomological field theory. Note that they depend now on a class $\beta\in E$. Let 
$$
c_{g,n}(e_{\alpha_1}\otimes\ldots\otimes e_{\alpha_n}):=\sum_{\beta\in E}q^\beta c_{g,n,\beta}(e_{\alpha_1}\otimes\ldots\otimes e_{\alpha_n})\in H^{even}(\oM_{g,n};\mbC)\otimes\mcN.
$$
Denote by $\left<\tau_{d_1}(e_{\alpha_1})\ldots\tau_{d_n}(e_{\alpha_n})\right>^{desc}_{g,\beta}$ the Gromov-Witten invariants of $V$. For any indices $1\le\alpha_1,\alpha_2,\alpha_3\le N$ let
$$
c_{\alpha_1\alpha_2\alpha_3}:=\sum_{\beta\in E}q^\beta\left<\tau_0(e_{\alpha_1})\tau_0(e_{\alpha_2})\tau_0(e_{\alpha_3})\right>^{desc}_{0,\beta}\qquad\text{and}\qquad c_{\alpha_1\alpha_2}^{\alpha_3}:=\eta^{\alpha_3\mu}c_{\alpha_1\alpha_2\mu}.
$$

We want to consider the double ramification hierarchy associated to our cohomological field theory. Note that our situation is slightly different from~\cite{Bur14} because of the presence of the Novikov ring. However, it is easy to see that all the constructions from~\cite{Bur14} still work. One should only keep in mind that now the Hamiltonians $\og_{\alpha,d}$ are elements of the space $\hLambda^{[0]}\otimes\mcN$ and the coefficients of the Hamiltonian operator $K$ belong to the space $\hcA\otimes\mcN$.

There is a natural operator $q\frac{\d}{\d q}\colon\mcN\to\mcN\otimes E$, $q\frac{\d}{\d q}(q^\beta):=q^\beta\otimes\beta$. The differential polynomial $g_{\alpha,d}$ is an element of the space $\hcA^{[0]}\otimes\mcN$, so $q\frac{\d}{\d q}g_{\alpha,d}$ is an element of $\hcA^{[0]}\otimes\mcN\otimes E$. Let $e_{\gamma_1},\ldots,e_{\gamma_r}$ be a basis in~$H^2(V;\mbC)$. Using the pairing $\<\cdot,\cdot\>\colon H^2(V;\mbC)\times E\to\mbC$, we define elements $\left<e_{\gamma_i},q\frac{\d}{\d q}g_{\alpha,d}\right>\in\hcA^{[0]}\otimes\mcN$. 

\begin{theorem}\label{divisor}
For any $i=1,\ldots,r$ and $d\ge -1$, we have
\begin{gather}\label{divisoreq}
\left<e_{\gamma_i},q\frac{\d}{\d q}g_{\alpha,d+1}\right>=\d_x^{-1}\frac{\d g_{\alpha,d}}{\d t^{\gamma_i}_0}-c^\mu_{\alpha\gamma_i}g_{\mu,d}.
\end{gather}
\end{theorem}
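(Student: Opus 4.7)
The plan is to combine the Gromov-Witten divisor property for the CohFT classes with the topological recursion (Theorem~\ref{trr}). Specializing Theorem~\ref{trr} to $\beta=\gamma_i$ gives $\d_x\frac{\d g_{\alpha,d+1}}{\d u^{\gamma_i}}=\{g_{\alpha,d},\og_{\gamma_i,0}\}_{\eta\d_x}=\frac{\d g_{\alpha,d}}{\d t^{\gamma_i}_0}$, so (\ref{divisoreq}) is equivalent to the purely algebraic identity
\[
\left\<e_{\gamma_i},q\tfrac{\d}{\d q}g_{\alpha,d+1}\right\>=\frac{\d g_{\alpha,d+1}}{\d u^{\gamma_i}}-c^\mu_{\alpha\gamma_i}g_{\mu,d},
\]
and this identity is what I will aim to establish.

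The main geometric input is the forgetful map $\pi\colon\oM_{g,n+2}\to\oM_{g,n+1}$ that drops the last marked point. The classical divisor property $\pi_* c_{g,n+2,\beta}(\ldots\otimes e_{\gamma_i})=\<e_{\gamma_i},\beta\> c_{g,n+1,\beta}(\ldots)$ combined with the projection formula gives
\[
\<e_{\gamma_i},\beta\>\psi_1^{d+1}\lambda_g c_{g,n+1,\beta}(\ldots)=\pi_*\bigl((\pi^*\psi_1)^{d+1}\lambda_g c_{g,n+2,\beta}(\ldots\otimes e_{\gamma_i})\bigr).
\]
To trade $(\pi^*\psi_1)^{d+1}$ for $\psi_1^{d+1}$ I use $\psi_1=\pi^*\psi_1+D_{1,n+2}$ together with $\psi_1\cdot D_{1,n+2}=0$ (the class $\psi_1$ vanishes on the rational bubble $\oM_{0,3}$ carrying marked points $1$ and $n+2$); a short induction then yields $(\pi^*\psi_1)^{d+1}=\psi_1^{d+1}-D_{1,n+2}(\pi^*\psi_1)^d$. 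Integrating against $\DR_g(-\sum a_i,a_1,\ldots,a_n,0)=\pi^*\DR_g(-\sum a_i,a_1,\ldots,a_n)$ via the projection formula on $\DR$ converts the left-hand side into the difference of two integrals over $\oM_{g,n+2}$.

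The boundary integral is evaluated by restriction to $D_{1,n+2}\cong\oM_{g,n+1}$ (the isomorphism being induced by $\pi$ after contraction of the unstable bubble); the CohFT splitting axiom yields $\sum_\mu c_{0,3}(e_\alpha\otimes e_{\gamma_i}\otimes e_\mu)\cdot c_{g,n+1}(e^\mu\otimes\ldots)$ whose Novikov-weighted sum is $c^\mu_{\alpha\gamma_i}c_{g,n+1}(e_\mu\otimes\ldots)$, and $\iota^*(\pi^*\psi_1)=\psi_1$ under that isomorphism. Summing over $(g,n,a_i,\beta)$ with the usual weights $\frac{(-\eps^2)^g}{n!}p^{\alpha_1}_{a_1}\cdots p^{\alpha_n}_{a_n}e^{ix\sum a_i}q^\beta$, the boundary piece assembles into $c^\mu_{\alpha\gamma_i}g_{\mu,d}$. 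The bulk integral has an extra marked point of ramification $0$ and label $e_{\gamma_i}$; inspecting polynomial coefficients shows that $P(0,a_1,\ldots,a_n)$ records exactly the $u^{\gamma_i}_0$-monomials of the differential-polynomial presentation of $g_{\alpha,d+1}$, so the bulk assembles into $\frac{\d g_{\alpha,d+1}}{\d u^{\gamma_i}}$. This yields the desired identity (and hence the theorem) for $d\geq 0$; the case $d=-1$ uses $\psi_1^0=1$, so the $D_{1,n+2}$-comparison step is trivial, and a direct calculation matches both sides using $g_{\alpha,-1}=\eta_{\alpha\mu}u^\mu$. The main obstacle is the boundary computation: one must verify that the restrictions of the DR cycle, of $\pi^*\psi_1$, and of the CohFT class to $D_{1,n+2}$ conspire to produce the clean factorization as the three-point constant $c^\mu_{\alpha\gamma_i}$ times the $\psi^d$-integral building $g_{\mu,d}$; once this is in place, the rest is formal bookkeeping in $u$-variables.
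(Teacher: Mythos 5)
Your proposal follows essentially the same route as the paper's proof: the divisor equation plus the projection formula, the comparison $\pi^*(\psi_1^{d+1})=\psi_1^{d+1}-\delta_0^{\{1,n+2\}}\cdot\pi^*(\psi_1^{d})$, the splitting of the double ramification cycle along that boundary divisor producing the term $c^\mu_{\alpha\gamma_i}g_{\mu,d}$, and Theorem~\ref{trr} to identify the bulk term (you invoke it at the outset to reformulate the claim, the paper invokes it at the end to identify the first summand --- a purely cosmetic difference). The one point worth being careful about is $d=-1$: there the correction $-c^\mu_{\alpha\gamma_i}g_{\mu,-1}$ does \emph{not} come from the boundary divisor (which contributes nothing since $\pi^*(\psi_1^0)=1$) but from the unstable $(0,3)$-range where the forgetful-map argument does not apply, so your deferred ``direct calculation'' is genuinely needed and is exactly what the paper's Lemma~\ref{lemma:simple divisor} supplies.
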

\begin{proof}
We begin by recalling the divisor equation in Gromov-Witten theory. Suppose that $2g-2+n>0$ and let $\pi_{n+1}\colon\oM_{g,n+1}\to\oM_{g,n}$ be the forgetful morphism that forgets the last marked point. The divisor equation says that (see e.g.~\cite{KM94})
\begin{gather}\label{eq:divisor equation in GW}
(\pi_{n+1})_*(c_{g,n+1,\beta}(e_{\alpha_1}\otimes\ldots\otimes e_{\alpha_n}\otimes e_{\gamma_i}))=\left<e_{\gamma_i},\beta\right>c_{g,n,\beta}(e_{\alpha_1}\otimes\ldots\otimes e_{\alpha_n}).
\end{gather}
Let us formulate the following simple lemma.
\begin{lemma}\label{lemma:simple divisor}
For any $i=1,\ldots,r$, we have 
\begin{gather}\label{eq:simple divisor}
\og_{\gamma_i,0}=\int\half c_{\gamma_i\mu\nu}u^\mu u^\nu dx+\left<e_{\gamma_i},q\frac{\d}{\d q}\og\right>.
\end{gather}
\end{lemma}
\begin{proof}
The proof is a simple consequence of the divisor equation~\eqref{eq:divisor equation in GW}.
\end{proof}
Let us take the variational derivative $\frac{\delta}{\delta u^\alpha}$ of the both sides of equation~\eqref{eq:simple divisor}. Since $\frac{\delta\og}{\delta u^\alpha}=g_{\alpha,0}$, we get exactly equation~\eqref{divisoreq} for $d=-1$.

Suppose $d\ge 0$. From the divisor equation~\eqref{eq:divisor equation in GW} it follows that
\begin{gather*}
\left<e_{\gamma_i},q\frac{\d}{\d q}  P_{\alpha,d+1,g;\alpha_1,\ldots,\alpha_n}(a_1,\ldots,a_n)
\right>=\int_{\DR_g(-\sum a_j,a_1,\ldots,a_n,0)}\pi_{n+2}^*(\psi_1^{d+1})\lambda_g c_{g,n+2}(e_\alpha\otimes \otimes_{j=1}^n e_{\alpha_j}\otimes e_{\gamma_i}).
\end{gather*}
We now express $\pi_{n+2}^*(\psi_1^{d+1})$ as
$$
\pi_{n+2}^*(\psi_1^{d+1})=\psi_1^{d+1}-\delta_0^{\{1,n+2\}}\cdot\pi_{n+2}^*(\psi_1^d).
$$
By Theorem~\ref{trr}, the first summand gives
\begin{align*}
&\sum_{\substack{g\ge 0\\n\ge 1}}\frac{(-\eps^2)^g}{n!}\sum_{a_1,\ldots,a_n\in\mbZ}\left(\int_{\DR_g(-\sum a_j,a_1,\ldots,a_n,0)}\psi_1^{d+1}\lambda_g c_{g,n+2}(e_\alpha\otimes\left(\otimes_{j=1}^n e_{\alpha_j}\right)\otimes e_{\gamma_i})\right)\prod_{j=1}^n p^{\alpha_j}_{a_j}e^{ix\sum a_j}=\\
=&\d_x^{-1}\frac{\d g_{\alpha,d}}{\d t^{\gamma_i}_0},
\end{align*}
which coincides with the first term on the right-hand side of~(\ref{divisoreq}). For the second summand, we have (see~\cite{BSSZ12}) 
$$
\delta_0^{\{1,n+2\}}\cdot \DR_g\left(-\sum a_j,a_1,\ldots,a_n,0\right)=\DR_0\left(-\sum a_j,0,\sum a_j\right)\boxtimes \DR_g\left(a_1,\ldots,a_n,-\sum a_j\right),
$$
from which we get
\begin{multline*}
\int_{\DR_g(-\sum a_j,a_1,\ldots,a_n,0)}\delta_0^{\{1,n+2\}}\pi_{n+2}^*(\psi_1^d)\lambda_g c_{g,n+2}(e_\alpha\otimes(\otimes_{j=1}^n e_{\alpha_j})\otimes e_{\gamma_i})=\\
=c^\mu_{\alpha\gamma_i}\int_{\DR_g(-\sum a_j,a_1,\ldots,a_n)}\psi_1^d\lambda_g c_{g,n+1}(e_\mu\otimes(\otimes_{j=1}^n e_{\alpha_j}))=c^\mu_{\alpha\gamma_i}P_{\mu,d,g;\alpha_1,\ldots,\alpha_n}(a_1,\ldots,a_n).
\end{multline*}
This corresponds to the second term on the right-hand side of equation~(\ref{divisoreq}). The theorem is proved.
\end{proof}

\section{Examples and applications}

The recursion formulae we proved in the previous section are computationally very efficient and allow us to produce a number of calculations in concrete examples. Beside the case of the complex projective line, which we leave for the next section, we focus here on computing various relevant quantities for one-dimensional cohomological field theories and for some of Witten's $r$-spin theories, which are ultimately sufficient to determine the full double ramification hierarchy in these cases.

\subsection{KdV hierarchy}

The simplest cohomological field theory, $V=\langle e_1\rangle$, $\eta_{1,1}=1$, $c_{g,n}(e_1^{\otimes n})=1$ for all stable $(g, n)$, corresponding to the Gromov-Witten theory of a point, gives as the double ramification hierarchy the Korteweg-de Vries hierarchy: the equivalence conjecture between the Dubrovin-Zhang hierarchy and the double ramification hierarchy holds in this case with the trivial Miura transformation, as proved in~\cite{Bur14}. However, our recursion formulae give a specific choice of hamiltonian densities $g_{d}$ which was previously unknown and is non-standard. In fact, such hamiltonian densities do not satisfy the tau-symmetry property $\{h_{p-1},\oh_q\}_{\partial_x} = \{h_{q-1},\oh_p\}_{\partial_x}$. However, the standard tau-symmetric hamiltonian densities $h_p$ for the Dubrovin-Zhang hierarchy can be recovered as $h_p=\frac{\delta \og_{p+1}}{\delta u}$. We will see in the examples below that this is a quite general fact (see also remark \ref{remarktau}).

\begin{example}
For the trivial cohomological field theory we have
$$
\og_1=\int \left( \frac{u^3}{6}+\frac{\eps^2}{24} u u_2\right) dx,
$$
which determines the following hamiltonian densities for the double ramification hierarchy:
\begin{equation*}
\begin{split}
&g_{-1}=u,\\
&g_0=\frac{u^2}{2}+\frac{\eps ^2}{24}u_2,  \\
&g_1=\frac{u^3}{6}+\frac{ \eps ^2}{24} u u_2+\frac{ \eps ^4}{1152}u_4,\\
&g_2=\frac{u^4}{24}+\frac{\eps^2}{48} u^2 u_2+\left(\frac{7 \left(u_2\right){}^2}{5760}+\frac{u u_4}{1152}\right) \eps ^4+\frac{\eps^6}{82944}u_6,\\
&g_3=\frac{u^5}{120}+\frac{\eps^2}{144}u^3 u_2+\left(\frac{7 u \left(u_2\right){}^2}{5760}+\frac{u^2 u_4}{2304}\right) \eps^4+\left(\frac{\left(u_3\right){}^2}{362880}+\frac{u_2 u_4}{15360}+\frac{u u_6}{82944}\right) \eps ^6+\frac{\eps^8}{7962624}u_8,\\
\end{split}
\end{equation*}
and so on. These hamiltonian densities, as remarked above, integrate to the usual KdV local functionals $\og_p=\oh_p$ for the Dubrovin-Zhang hierarchy. In particular, if 
$$
\chi = \lambda^{\frac{1}{2}} + \sum_{m=1}^\infty \frac{\chi_m}{\lambda^{m/2}}
$$
is a solution to the Riccati equation 
$$
\frac{\eps}{\sqrt{2}} \chi' - \chi^2 = u - \lambda,
$$
then
$$
\og_p=-\frac{2^{p+2}}{(2p+1)!!}\int\chi_{2p+3}dx,
$$
and the hamiltonian densities $h_k:= \frac{\delta \og_{k+1}}{\delta u} =-\frac{2^{k+3}}{(2k+3)!!}\frac{\delta}{\delta u}\int\chi_{2k+5}dx$, for $k\geq -1$, are the tau-symmetric densities of the Dubrovin-Zhang hierarchy, with $h_{-1}=g_{-1}=u$ (see also \cite{DZ05}).
\end{example}

\subsection{Intermediate Long Wave hierarchy}

In \cite{Bur14} one finds a proof of the Miura equivalence between the Dubrovin-Zhang and the double ramification hierarchy for the case of the cohomological field theory (depending on a parameter $\ell$) consisting of the full Hodge class $c_{g,n}(e_1^{\otimes n})=1+\ell \lambda_1+\ldots+\ell^g \lambda_g$, with $V=\langle e_1\rangle$, $\eta_{1,1}=1$.

\begin{example}
For the cohomological field theory given by the full Hodge class we have
$$
\og_1=\int \left( \frac{u^3}{6}+ \sum_{g\geq 1} \eps^{2g} \ell^{g-1} \frac{|B_{2g}|}{2 (2g)!} u u_{2g} \right)dx,
$$
where $B_{2g}$ are Bernoulli numbers: $B_0=1, B_2=\frac{1}{6}, B_4=-\frac{1}{30},\ldots$. This Hamiltonian, by our recursion, determines the full double ramification hierarchy. In~\cite{Bur13} this hierarchy was called the deformed KdV hierarchy. The Miura transformation
$$
u\mapsto\widetilde{u}= u+\sum_{g\geq 1}\frac{2^{2g-1}-1}{2^{2g-1}}\frac{|B_{2g}|}{(2g)!}\eps^{2g} \ell^g u_{2g}
$$
transforms this hierarchy to the Dubrovin-Zhang hierarchy. In particular, the standard hamiltonian operator $\partial_x$ is transformed to the hamiltonian operator 
$$
K=\partial_x + \sum_{g\geq 1}\eps^{2g}\ell^g\frac{(2g-1) |B_{2g}|}{(2g)!} \partial_x^{2g+1}.
$$
In~\cite{Bur13} it is explained how the deformed KdV hierarchy is related to the hierarchy of the conserved quantities of the Intermediate Long Wave (ILW) equation (see e.g. \cite{SAK79}):
$$
w_\tau + 2 w w_x + T(w_{xx})=0, \hspace{0.5cm} T(f):=\mathrm{p.v.}\int_{-\infty}^{+\infty} \frac{1}{2\delta} \left(\mathrm{sgn}(x-\xi) - \coth \frac{\pi (x-\xi)}{2 \delta}\right) f(\xi) d\xi.
$$
The ILW equation can be transformed to the first equation of the deformed KdV hierarchy by setting $w=\frac{\sqrt\ell}{\eps} u$, $\tau=-\frac{1}{2}\frac{\eps}{\sqrt\ell} t_1$, $\delta= \frac{\eps\sqrt\ell}{2}$ (indeed $T(f)=\sum_{n\geq 1} \delta^{2n-1} 2^{2n} \frac{|B_{2n}|}{(2n)!}\partial_x^{2n-1} f$, see~\cite{Bur13} for further details). This means that our recursion formula gives a way, alternative to~\cite{SAK79}, to determine the symmetries of the ILW equation.
\end{example}

\subsection{Higher spin / Gelfand-Dickey hierarchies}

Recall that, for every $r\geq 2$ and an $(r-1)$-dimensional vector space $V$ with a basis $e_1,\ldots,e_{r-1}$, Witten's $r$-spin classes 
$$
W_g(e_{a_1+1},\ldots,e_{a_n+1})=W_g(a_1,\ldots,a_n) \in H^{even}(\oM_{g,n};\Q)
$$
are cohomology classes of degree 
$$
\deg W_g(a_1,\ldots,a_n)=2\left(\frac{(r-2)(g-1)+\sum_{i=1}^n a_i}{r}\right),
$$
when $a_i\in\{0,\ldots,r-2\}$ are such that the expression in the brackets on the right-hand side is an integer, and vanish otherwise. They form a cohomological field theory and were introduced by Witten~\cite{Wi93} in genus $0$ and then extended to higher genus by Polishchuck and Vaintrob \cite{PV00} (see also \cite{Ch06}). As proved in \cite{PPZ13}, this cohomological field theory is completely determined, thanks to semisimplicity, by the initial conditions: 
\begin{align*}
W_0(a_1,a_2,a_3)=&
\begin{cases}
1,&\text{if $a_1+a_2+a_3=r-2$};\\
0,&\text{otherwise};
\end{cases}\\
W_0(1,1,r-2,r-2)=&\frac{1}{r}[\text{point}]\in H^2(\oM_{0,4};\mathbb{Q}).
\end{align*}
In particular, the metric $\eta$ takes the form $\eta_{\alpha\beta}=\delta_{\alpha+\beta,r}$.\\

Using our recursion formulae together with the selection rules from the degree formula for the $r$-spin classes, it is possible to completely determine the Hamiltonian $\og_{1,1}$ and, hence, the full hierarchy. In particular, from dimension counting and the definition (\ref{density}), we obtain that $g_{1,1}$ is a homogeneous polynomial of degree $2r+2$ with respect to the following grading:
$$
|u^{a+1}_k|=r-a, \hspace{0.3cm} a=0,\ldots,r-2,\ k=0,1,\ldots; \hspace{1cm}|\eps|=1.
$$
This gives a finite number of summands involving intersection numbers only up to genus $r$ for $g_{1,1}$, and up to genus $r-1$ for $\og_{1,1}$, as the top genus term is $\partial_x$-exact. At this point, one can start applying the recursion of Theorem \ref{dilaton} starting from $g_{\alpha,-1}=\eta_{\alpha\mu} u^\mu$ and impose, at each step, that the new Hamiltonians thus obtained still commute with all the others. In the cases $r=3,4$ this determines all the coefficients of the Hamiltonian $\og_{1,1}$ up to a rescaling of the form $\eps\mapsto\theta\eps$, $\theta\in\mathbb{Q}$. This ambiguity can be fixed by the following computation. For any cohomological field theory $c_{g,n}\colon V^{\otimes n}\to H^{even}(\oM_{g,n};\mbC)$ we have
\begin{align*}
\Coef_{\eps^2 u^1_{xx}}\left(\frac{\delta\og_{1,1}}{\delta u^1}\right)=&\Coef_{a^2}\int_{\DR_1(0,-a,a)}\psi_1\lambda_1c_{1,3}(e_1^{\otimes 3})=2\Coef_{a^2}\int_{\DR_1(-a,a)}\lambda_1c_{1,2}(e_1^{\otimes 2})\stackrel{\text{by \eqref{eq:Hain's formula}}}{=}\\
=&2\int_{\oM_{1,2}}\delta_0^{\{1,2\}}\lambda_1c_{1,2}(e_1^{\otimes 2})=2\int_{\oM_{1,1}}\lambda_1c_{1,1}(e_1)=\frac{1}{12}\int_{\oM_{1,1}}\delta_0^{\mathrm{ns}}c_{1,1}(e_1)=\frac{\dim V}{12}.
\end{align*}
Here $\delta_0^{\mathrm{ns}}$ represents the divisor whose generic point is a nodal curve with a non-separating node and we also used that on $\oM_{1,1}$ we have $\lambda_1=\psi_1^{\dagger}=\frac{1}{24}\delta_0^{\mathrm{ns}}$.
\begin{example}
For Witten's $3$-spin cohomological field theory we have
\begin{equation*}
\og_{1,1}=\int\left[\left(\frac{1}{2} \left(u^1\right)^2 u^2+\frac{\left(u^2\right)^4}{36}\right)+\left(-\frac{1}{12} \left(u_1^1\right){}^2-\frac{1}{24} u^2 \left(u_1^2\right){}^2\right) \eps ^2+\frac{1}{432} \left(u_2^2\right){}^2
   \eps ^4\right] dx,
\end{equation*}
which determines the following hamiltonian densities for the double ramification hierarchy:
\begin{equation*}
\left\{
\begin{array}{l}
g_{1,-1}=u^2,\\ 
g_{2,-1}=u^1; 
\end{array}
\right. \hspace{1.5cm}
\left\{
\begin{array}{l}
g_{1,0}=u^1 u^2+\frac{1}{12} u_2^1 \eps ^2,\\
g_{2,0}=\frac{\left(u^1\right)^2}{2}+\frac{\left(u^2\right)^3}{18}+\left(\frac{1}{72} \left(u_1^2\right){}^2+\frac{1}{36} u^2 u_2^2\right) \eps ^2+\frac{1}{864} u_4^2 \eps ^4;\\
\end{array}
\right.
\end{equation*}

\begin{equation*}
\left\{
\begin{array}{l}
\scriptstyle{g_{1,1}=\frac{1}{2} \left(u^1\right)^2 u^2+\frac{\left(u^2\right)^4}{36}+\left(\frac{1}{72} u^2 \left(u_1^2\right){}^2+\frac{1}{12} u^1 u_2^1+\frac{1}{36} \left(u^2\right)^2 u_2^2\right) \eps ^2+\left(\frac{7
   \left(u_2^2\right){}^2}{2160}+\frac{7 u_1^2 u_3^2}{2160}+\frac{1}{432} u^2 u_4^2\right) \eps ^4+\frac{u_6^2 \eps ^6}{15552}},\\
\scriptstyle{g_{2,1}=\frac{\left(u^1\right)^3}{6}+\frac{1}{18} u^1 \left(u^2\right)^3+\left(\frac{1}{72} u^1 \left(u_1^2\right){}^2+\frac{1}{72} \left(u^2\right)^2 u_2^1+\frac{1}{36} u^1 u^2 u_2^2\right) \eps
   ^2+\left(\frac{1}{432} u_2^1 u_2^2+\frac{u_1^2 u_3^1}{1080}+\frac{1}{864} u^2 u_4^1+\frac{1}{864} u^1 u_4^2\right) \eps ^4+\frac{u_6^1 \eps ^6}{31104}};\\
\end{array}
\right.
\end{equation*}

\tiny
\begin{equation*}
\left\{ \begin{array}{l}
\begin{split} g_{1,2}&=\left(\frac{1}{6} \left(u^1\right)^3 u^2+\frac{1}{36} u^1 \left(u^2\right)^4\right)+\left(\frac{1}{72} u^1 u^2 \left(u_1^2\right){}^2+\frac{1}{24} \left(u^1\right)^2 u_2^1+\frac{1}{108} \left(u^2\right)^3 u_2^1+\frac{1}{36} u^1 \left(u^2\right)^2 u_2^2\right) \eps ^2\\
&+\left(\frac{7 \left(u_1^2\right){}^2 u_2^1}{4320}+\frac{1}{180} u^2 u_2^1 u_2^2+\frac{7 u^1 \left(u_2^2\right){}^2}{2160}+\frac{u^2 u_1^2
   u_3^1}{1080}+\frac{7 u^1 u_1^2 u_3^2}{2160}+\frac{1}{864} \left(u^2\right)^2 u_4^1+\frac{1}{432} u^1 u^2 u_4^2\right) \eps ^4\\
   &+\left(\frac{u_3^1 u_3^2}{7776}+\frac{7 u_2^2 u_4^1}{25920}+\frac{u_2^1
   u_4^2}{4032}+\frac{u_1^2 u_5^1}{12960}+\frac{u^2 u_6^1}{15552}+\frac{u^1 u_6^2}{15552}\right) \eps ^6+\frac{u_8^1 \eps ^8}{746496},
\end{split}\\
\begin{split} g_{2,2}&=\frac{\left(u^1\right)^4}{24}+\frac{1}{36} \left(u^1\right)^2 \left(u^2\right)^3+\frac{\left(u^2\right)^6}{1620}+\left(\frac{1}{144} \left(u^1\right)^2 \left(u_1^2\right){}^2+\frac{\left(u^2\right)^3
   \left(u_1^2\right){}^2}{1296}+\frac{1}{72} u^1 \left(u^2\right)^2 u_2^1+\frac{1}{72} \left(u^1\right)^2 u^2 u_2^2+\frac{1}{648} \left(u^2\right)^4 u_2^2\right) \eps ^2\\
   &+\left(\frac{7 \left(u_1^2\right){}^4}{51840}+\frac{7 u^2 \left(u_2^1\right){}^2}{4320}+\frac{7 u^2 \left(u_1^2\right){}^2 u_2^2}{12960}+\frac{u^1 u_2^1 u_2^2}{432} +\frac{7 \left(u^2\right)^2 \left(u_2^2\right){}^2}{6480}+\frac{u^1
   u_1^2 u_3^1}{1080}+\frac{7 \left(u^2\right)^2 u_1^2 u_3^2}{12960}+\frac{ u^1 u^2 u_4^1}{864}+\frac{\left(u^1\right)^2 u_4^2}{1728}+ \frac{\left(u^2\right)^3 u_4^2}{3888}\right) \eps ^4\\
   & +\left(\frac{11
   \left(u_2^2\right){}^3}{116640}+\frac{\left(u_3^1\right){}^2}{34020}+\frac{13 u_1^2 u_2^2 u_3^2}{77760}+\frac{41 u^2 \left(u_3^2\right){}^2}{466560}+\frac{11 u_2^1 u_4^1}{72576}+\frac{17 \left(u_1^2\right){}^2
   u_4^2}{311040}+\frac{u^2 u_2^2 u_4^2}{4320}+\frac{u^2 u_1^2 u_5^2}{17280}+\frac{u^1 u_6^1}{31104}+\frac{\left(u^2\right)^2 u_6^2}{46656}\right) \eps ^6\\
   &+\left(\frac{47 \left(u_4^2\right){}^2}{5598720}+\frac{61 u_3^2
   u_5^2}{5598720}+\frac{11 u_2^2 u_6^2}{1399680}+\frac{11 u_1^2 u_7^2}{5598720}+\frac{u^2 u_8^2}{1119744}\right) \eps ^8+\frac{u_{10}^2 \eps ^{10}}{67184640};\\
\end{split}
\end{array}\right.
\end{equation*}
\normalsize
and so on (we have explicit formulae up to $g_{\alpha,7}$). We remark that, by taking the covariant derivative with respect to $u^1$ of the local functionals associated to the densities we computed here, $h_{\alpha,p}:=\frac{\delta \og_{\alpha,p+1}}{\delta u^1}$, one finds precisely the tau-symmetric hamiltonian densities for the Gelfand-Dickey hierarchy \cite{GD76} associated with the $A_2$ Coxeter group \cite{Dub96} (the so-called Boussinesq hierarchy) which coincide with the densities in the normal coordinates for the Dubrovin-Zhang hierarchy for the $3$-spin classes. This gives a strong evidence for the Miura equivalence of the Dubrovin-Zhang and the double ramification hierarchy, with the trivial Miura tranformation, for the $3$-spin classes.
\end{example}

\begin{example}
For Witten's $4$-spin cohomological field theory we have
\scriptsize
\begin{equation*}
\begin{split}
\og_{1,1}=\int &\left[\frac{ u^1 \left(u^2\right)^2}{2}+\frac{\left(u^1\right)^2 u^3}{2} +\frac{\left(u^2\right)^2 \left(u^3\right)^2}{8} +\frac{\left(u^3\right)^5}{320}+\left(-\frac{\left(u_1^1\right){}^2}{8} -\frac{u^3
   \left(u_1^2\right){}^2}{16} -\frac{u^3 u_1^1 u_1^3}{32} +\frac{3}{64} \left(u^2\right)^2 u_2^3+\frac{1}{192} \left(u^3\right)^3 u_2^3\right) \eps ^2 \right. \\
   & \left. +\left(\frac{1}{160} \left(u_2^2\right){}^2+\frac{3}{640} u_2^1
   u_2^3+\frac{5 \left(u^3\right)^2 u_4^3}{4096}\right) \eps ^4-\frac{\left(u_3^3\right){}^2 \eps ^6}{8192}\right] dx,
\end{split}
 \end{equation*}
 \normalsize
which determines the following hamiltonian densities for the double ramification hierarchy:
\scriptsize
\begin{equation*}
\left\{
\begin{array}{l}
g_{1,-1}=u^3,\\ 
g_{2,-1}=u^2,\\
g_{3,-1}=u^1;
\end{array}
\right. \hspace{1.5cm}
\left\{
\begin{array}{l}
 g_{1,0}= \left(\frac{\left(u^2\right)^2}{2}+u^1 u^3\right)+\left(\frac{1}{96} \left(u_1^3\right){}^2+\frac{u_2^1}{8}+\frac{1}{96} u^3 u_2^3\right) \eps ^2+\frac{3 u_4^3 \eps ^4}{2560},\\
 g_{2,0}=\left(u^1 u^2+\frac{1}{8} u^2 \left(u^3\right)^2\right)+\left(\frac{1}{24} u_1^2 u_1^3+\frac{1}{24} u^3 u_2^2+\frac{1}{32} u^2 u_2^3\right) \eps ^2+\frac{1}{320} u_4^2 \eps ^4,\\
\begin{split} g_{3,0}=&\left(\frac{\left(u^1\right)^2}{2}+\frac{1}{8} \left(u^2\right)^2 u^3+\frac{\left(u^3\right)^4}{192}\right)+\left(\frac{1}{96} \left(u_1^2\right){}^2+\frac{1}{128} u^3 \left(u_1^3\right){}^2+\frac{1}{96} u^3
   u_2^1+\frac{1}{32} u^2 u_2^2 \right.\\
   & \left.+\frac{1}{128} \left(u^3\right)^2 u_2^3\right) \eps ^2+\left(\frac{3 \left(u_2^3\right){}^2}{2048}+\frac{1}{512} u_1^3 u_3^3+\frac{3 u_4^1}{2560}+\frac{u^3 u_4^3}{1024}\right) \eps
   ^4+\frac{u_6^3 \eps ^6}{24576};\end{split}\\
\end{array}
\right.
\end{equation*}
\begin{equation*}
\left\{
\begin{array}{l}
\begin{split}
g_{1,1}=&\left(\frac{1}{2} u^1 \left(u^2\right)^2+\frac{1}{2} \left(u^1\right)^2 u^3+\frac{1}{8} \left(u^2\right)^2 \left(u^3\right)^2+\frac{\left(u^3\right)^5}{320}\right)+\left(\frac{1}{96} u^3 \left(u_1^2\right){}^2+\frac{1}{24}
   u^2 u_1^2 u_1^3+\frac{1}{96} u^1 \left(u_1^3\right){}^2+\frac{1}{128} \left(u^3\right)^2 \left(u_1^3\right){}^2 \right. \\
   & \left.+\frac{1}{8} u^1 u_2^1+\frac{1}{96} \left(u^3\right)^2 u_2^1+\frac{7}{96} u^2 u^3 u_2^2+\frac{1}{32}
   \left(u^2\right)^2 u_2^3+\frac{1}{96} u^1 u^3 u_2^3+\frac{1}{128} \left(u^3\right)^3 u_2^3\right) \eps ^2+\left(\frac{3}{512} \left(u_2^2\right){}^2+\frac{1}{256} \left(u_1^3\right){}^2 u_2^3\right. \\
   & \left.+\frac{1}{320} u_2^1
   u_2^3+\frac{9 u^3 \left(u_2^3\right){}^2}{2048}+\frac{1}{480} u_1^3 u_3^1+\frac{3}{640} u_1^2 u_3^2+\frac{23 u^3 u_1^3 u_3^3}{4608}+\frac{19 u^3 u_4^1}{7680}+\frac{13 u^2 u_4^2}{2560}+\frac{3 u^1 u_4^3}{2560}+\frac{7
   \left(u^3\right)^2 u_4^3}{4608}\right) \eps ^4 \\
   &+\left(\frac{27 \left(u_3^3\right){}^2}{57344}+\frac{93 u_2^3 u_4^3}{114688}+\frac{101 u_1^3 u_5^3}{286720}+\frac{3 u_6^1}{20480}+\frac{11 u^3 u_6^3}{81920}\right)
   \eps ^6+\frac{59 u_8^3 \eps ^8}{13107200},\end{split}\\
\begin{split}g_{2,1}&=\left(\frac{1}{2} \left(u^1\right)^2 u^2+\frac{1}{12} \left(u^2\right)^3 u^3+\frac{1}{8} u^1 u^2 \left(u^3\right)^2+\frac{1}{128} u^2 \left(u^3\right)^4\right)+\left(\frac{1}{96} u^2 \left(u_1^2\right){}^2+\frac{1}{24} u^1
   u_1^2 u_1^3+\frac{1}{192} \left(u^3\right)^2 u_1^2 u_1^3+\frac{1}{96} u^2 u^3 \left(u_1^3\right){}^2\right. \\
   & \left.+\frac{1}{24} u^2 u^3 u_2^1+\frac{1}{32} \left(u^2\right)^2 u_2^2+\frac{1}{24} u^1 u^3 u_2^2+\frac{1}{192}
   \left(u^3\right)^3 u_2^2+\frac{1}{32} u^1 u^2 u_2^3+\frac{11}{768} u^2 \left(u^3\right)^2 u_2^3\right) \eps ^2+\left(\frac{1}{480} \left(u_1^3\right){}^2 u_2^2+\frac{1}{160} u_2^1 u_2^2\right. \\
   & \left.+\frac{11 u_1^2 u_1^3
   u_2^3}{3840}+\frac{23 u^3 u_2^2 u_2^3}{3840}+\frac{29 u^2 \left(u_2^3\right){}^2}{10240}+\frac{1}{480} u_1^2 u_3^1+\frac{1}{320} u^3 u_1^3 u_3^2+\frac{13 u^3 u_1^2 u_3^3}{5760}+\frac{1}{320} u^2 u_1^3
   u_3^3+\frac{1}{320} u^2 u_4^1+\frac{1}{320} u^1 u_4^2\right. \\
   & \left.+\frac{29 \left(u^3\right)^2 u_4^2}{23040}+\frac{1}{480} u^2 u^3 u_4^3\right) \eps ^4+\left(\frac{3 u_3^2 u_3^3}{4480}+\frac{47 u_2^3 u_4^2}{71680}+\frac{3 u_2^2
   u_4^3}{5120}+\frac{u_1^3 u_5^2}{3584}+\frac{u_1^2 u_5^3}{5120}+\frac{u^3 u_6^2}{7680}+\frac{u^2 u_6^3}{10240}\right) \eps ^6+\frac{u_8^2 \eps ^8}{204800},\end{split}\\
 \begin{split}g_{3,1}&=\left(\frac{\left(u^1\right)^3}{6}+\frac{\left(u^2\right)^4}{96}+\frac{1}{8} u^1 \left(u^2\right)^2 u^3+\frac{1}{96} \left(u^2\right)^2 \left(u^3\right)^3+\frac{1}{192} u^1 \left(u^3\right)^4\right)+\left(\frac{1}{96} u^1
   \left(u_1^2\right){}^2+\frac{1}{96} u^2 u^3 u_1^2 u_1^3+\frac{1}{768} \left(u^2\right)^2 \left(u_1^3\right){}^2\right. \\
   & \left.+\frac{1}{128} u^1 u^3 \left(u_1^3\right){}^2+\frac{\left(u^3\right)^3
   \left(u_1^3\right){}^2}{4608}+\frac{1}{64} \left(u^2\right)^2 u_2^1+\frac{1}{96} u^1 u^3 u_2^1+\frac{1}{384} \left(u^3\right)^3 u_2^1+\frac{1}{32} u^1 u^2 u_2^2+\frac{1}{96} u^2 \left(u^3\right)^2 u_2^2+\frac{7}{768}
   \left(u^2\right)^2 u^3 u_2^3\right. \\
   & \left.+\frac{1}{128} u^1 \left(u^3\right)^2 u_2^3+\frac{\left(u^3\right)^4 u_2^3}{4608}\right) \eps ^2+\left(\frac{\left(u_1^3\right){}^4}{40960}+\frac{7 \left(u_1^3\right){}^2
   u_2^1}{9216}+\frac{11 \left(u_2^1\right){}^2}{7680}+\frac{23 u_1^2 u_1^3 u_2^2}{11520}+\frac{161 u^3 \left(u_2^2\right){}^2}{92160}+\frac{13 \left(u_1^2\right){}^2 u_2^3}{23040}\right. \\
   & \left.+\frac{13 u^3 \left(u_1^3\right){}^2
   u_2^3}{20480}+\frac{19 u^3 u_2^1 u_2^3}{9216}+\frac{31 u^2 u_2^2 u_2^3}{7680}+\frac{3 u^1 \left(u_2^3\right){}^2}{2048}+\frac{19 \left(u^3\right)^2 \left(u_2^3\right){}^2}{40960}+\frac{u^3 u_1^3 u_3^1}{1152}+\frac{17
   u^3 u_1^2 u_3^2}{23040}+\frac{7 u^2 u_1^3 u_3^2}{3840}+\frac{7 u^2 u_1^2 u_3^3}{3840}\right. \\
   & \left.+\frac{1}{512} u^1 u_1^3 u_3^3+\frac{7 \left(u^3\right)^2 u_1^3 u_3^3}{15360}+\frac{3 u^1 u_4^1}{2560}+\frac{5 \left(u^3\right)^2
   u_4^1}{9216}+\frac{49 u^2 u^3 u_4^2}{30720}+\frac{13 \left(u^2\right)^2 u_4^3}{20480}+\frac{u^1 u^3 u_4^3}{1024}+\frac{13 \left(u^3\right)^3 u_4^3}{122880}\right) \eps ^4+\left(\frac{7
   \left(u_2^3\right){}^3}{61440}\right. \\
   & \left.+\frac{33 \left(u_3^2\right){}^2}{286720}+\frac{u_1^3 u_2^3 u_3^3}{2560}+\frac{u_3^1 u_3^3}{7680}+\frac{21 u^3 \left(u_3^3\right){}^2}{163840}+\frac{u_2^3 u_4^1}{5120}+\frac{77 u_2^2
   u_4^2}{245760}+\frac{33 \left(u_1^3\right){}^2 u_4^3}{327680}+\frac{11 u_2^1 u_4^3}{81920}+\frac{19 u^3 u_2^3 u_4^3}{81920}+\frac{3 u_1^3 u_5^1}{35840}+\frac{11 u_1^2 u_5^2}{122880}\right. \\
   & \left.+\frac{3 u^3 u_1^3
   u_5^3}{32768}+\frac{13 u^3 u_6^1}{245760}+\frac{19 u^2 u_6^2}{245760}+\frac{u^1 u_6^3}{24576}+\frac{19 \left(u^3\right)^2 u_6^3}{983040}\right) \eps ^6+\left(\frac{11 \left(u_4^3\right){}^2}{655360}+\frac{7 u_3^3
   u_5^3}{262144}+\frac{379 u_2^3 u_6^3}{23592960}+\frac{61 u_1^3 u_7^3}{11796480}+\frac{77 u_8^1}{39321600}\right. \\
   & \left.+\frac{37 u^3 u_8^3}{23592960}\right) \eps ^8+\frac{u_{10}^3 \eps ^{10}}{20971520};\end{split}\\
\end{array}
\right. 
\end{equation*}
\normalsize
and so on (we have explicit formulae up to $g_{\alpha,4}$). We want to remark that in this case, as opposed to the $3$-spin case, if one defines the tau-symmetric densities $h_{\alpha,p}:=\frac{\delta \og_{\alpha,p+1}}{\delta u^1}$, then the coordinates $u^\alpha$ are not normal in Dubrovin and Zhang's sense anymore, i.e the starting hamiltonian densities $h_{\alpha,-1}$ take the non-standard form
\begin{equation*}
\left\{
\begin{array}{l}
h_{1,-1}=u^3,\\ 
h_{2,-1}=u^2, \\
h_{3,-1}=u^1+\frac{1}{96} u_2^3 \eps ^2.
\end{array}
\right. 
\end{equation*}
One can then perform a Miura transformation $w^\alpha = \eta^{\alpha \mu} h_{\mu,-1}$ to pass to the appropriate normal coordinates and in these coordinates the Poisson structure changes to the one associated to the hamiltonian operator
\begin{equation*}\left(
\begin{array}{c c c}
\frac{1}{48}\eps^2 \partial_x^3 & 0 & \partial_x \\
0 & \partial_x & 0 \\
\partial_x & 0 & 0
\end{array}
\right).
\end{equation*}
When expressed in these new coordinates, both the Poisson structure and the hamiltonian densities~$h_{\alpha,p}$ coincide with the ones for the Gelfand-Dickey \cite{GD76} hierarchy associated with the $A_3$ Coxeter group \cite{Dub96}, or the dispersive Poisson structure and the hamiltonian densities in the normal coordinates for the Dubrovin-Zhang hierarchy for the $4$-spin classes. As above, this gives a strong evidence for the Miura equivalence of the Dubrovin-Zhang and the double ramification hierarchy for the $4$-spin classes, with respect to the Miura transformation $w^\alpha = \eta^{\alpha \mu} h_{\mu,-1}$.
\end{example}

\begin{rem} \label{remarktau}
The same technique of this section can actually be applied to any polynomial Frobenius manifold (in particular, to all Frobenius manifolds associated to the Coxeter groups) and one obtains similar conjectures about an explicit form of a Miura transformation connecting the double ramification hierarchy to the Dubrovin-Zhang hierarchy. We plan to address the problem of understanding a connection between the tau-symmetric hamiltonian densities for the double ramification hierarchy, the normal coordinates and an equivalence to the Dubrovin-Zhang hierarchy in a forthcoming paper.
\end{rem}


\section{Divisor equation for the DR hierarchies}\label{section:divisor equation}

In this section we derive a certain equation for the string solution of the double ramification hierarchy. This equation is very similar to the divisor equation in Gromov-Witten theory. In Section~\ref{subsection:property of the string solution} we derive a useful property of the string solution. In~Sections~\ref{subsection:divisor equation for the Hamiltonians} and~\ref{subsection:divisor equation for the string solution} we consider the cohomological field theory accociated with the Gromov-Witten theory of some target variety~$V$. In Section~\ref{subsection:divisor equation for the Hamiltonians} we prove a divisor equation for the Hamiltonians of the double ramification hierarchy. Section~\ref{subsection:divisor equation for the string solution} is devoted to the proof of a divisor equation for the string solution.  

\subsection{Property of the string solution}\label{subsection:property of the string solution}

Consider an arbitrary cohomological field theory and the corresponding double ramification hierarchy. Recall that the string solution $(u^{str})^\alpha(x,t^*_*;\eps)$ is a unique solution of the double ramification hierarchy that satisfies the initial condition $\left.(u^{str})^\alpha\right|_{t_*^*=0}=\delta^{\alpha,1}x$.

\begin{lemma}\label{lemma:property of the string solution}
We have $\left.(u^{str})^\alpha\right|_{t_{\ge 1}^*=0}=t^\alpha_0+\delta^{\alpha,1}x$.
\end{lemma}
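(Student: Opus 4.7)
The plan is to verify that the simple ansatz $\hat u^\alpha := t^\alpha_0 + \delta^{\alpha,1} x$ satisfies the same Cauchy problem as $(u^{str})^\alpha|_{t^*_{\geq 1}=0}$, and then appeal to uniqueness of formal solutions.

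Set $\tilde u^\alpha := (u^{str})^\alpha|_{t^*_{\geq 1}=0}$. Since each vector field $\partial/\partial t^\beta_0$ is tangent to the slice $\{t^*_{\geq 1}=0\}$, the function $\tilde u^\alpha$ satisfies the sub-system
$$\frac{\partial \tilde u^\alpha}{\partial t^\beta_0} = \eta^{\alpha\mu}\,\partial_x\,\frac{\delta \og_{\beta,0}}{\delta u^\mu}\bigg|_{u=\tilde u}, \qquad \tilde u^\alpha\big|_{t^*_0=0} = \delta^{\alpha,1} x,$$
for $\beta = 1,\ldots, N$. The ansatz $\hat u^\alpha$ visibly has the correct initial condition and satisfies $\partial \hat u^\alpha/\partial t^\beta_0 = \delta^\alpha_\beta$, so it is enough to establish the pointwise identity
$$\eta^{\alpha\mu}\,\partial_x\,\frac{\delta \og_{\beta,0}}{\delta u^\mu}\bigg|_{\hat u} = \delta^\alpha_\beta.$$

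For this I would combine two observations. First, at $\hat u$ all jets $u^\gamma_k$ with $k\geq 2$ vanish while $u^\gamma_1=\delta^{\gamma,1}$, so on this configuration the total $x$-derivative collapses to $(\partial_x f)|_{\hat u} = (\partial f/\partial u^1)|_{\hat u}$ for any differential polynomial $f$. Second, $\partial/\partial u^1$ commutes with $\partial_x$ (the coefficient of $\partial/\partial u^\gamma_k$ in $\partial_x$ is $u^\gamma_{k+1}$, which is independent of $u^1_0$), and hence also with the variational derivative $\delta/\delta u^\mu = \sum_k (-\partial_x)^k \partial/\partial u^\mu_k$. Invoking the string-type identity $\partial g_{\beta,0}/\partial u^1 = g_{\beta,-1} = \eta_{\beta\nu} u^\nu$, noted in the remark after Theorem~\ref{trr} and originally proved in~\cite{Bur14}, this yields
$$\frac{\partial}{\partial u^1}\,\frac{\delta \og_{\beta,0}}{\delta u^\mu} = \frac{\delta}{\delta u^\mu}(\eta_{\beta\nu} u^\nu) = \eta_{\beta\mu},$$
which after contraction with $\eta^{\alpha\mu}$ produces exactly $\delta^\alpha_\beta$.

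I do not anticipate a substantive obstacle: the computation is essentially a one-line consequence of the string identity from~\cite{Bur14} and the collapse of $\partial_x$ on linear-in-$x$ data; no new DR-cycle intersection input is required. The only point worth double-checking is the uniqueness step, but this is standard for formal power-series solutions of the evolutionary system determined by the $\partial/\partial t^\beta_0$ with the given polynomial-in-$x$ initial datum.
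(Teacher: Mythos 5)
Your proof is correct, but it takes a genuinely different route from the one in the paper. The paper's argument passes to the ``integrated'' hierarchy in the variables $v^\alpha$ with $u^\alpha_d=v^\alpha_{d+1}$, takes the solution $(v^{str})^\alpha$ with initial datum $\delta^{\alpha,1}\tfrac{x^2}{2}$, establishes the string equation \eqref{eq:string for integrated} for it (by the method of Lemma~4.7 in \cite{Bur14}), and then reads off the lemma by differentiating in $t^1_0$ and setting $t^*_{\ge 1}=0$. You instead verify directly that the ansatz $t^\alpha_0+\delta^{\alpha,1}x$ solves the $N$ commuting flows $\partial/\partial t^\beta_0$ with the correct initial datum, which reduces to the pointwise identity $\eta^{\alpha\mu}\partial_x\frac{\delta\og_{\beta,0}}{\delta u^\mu}\big|_{\hat u}=\delta^\alpha_\beta$; your two ingredients --- the collapse $(\partial_x f)|_{\hat u}=(\partial f/\partial u^1)|_{\hat u}$ on the linear profile, and the commutation of $\partial/\partial u^1_0$ with $\delta/\delta u^\mu$ combined with the $d=-1$ string identity $\partial g_{\beta,0}/\partial u^1=\eta_{\beta\nu}u^\nu$ --- are both sound, and the uniqueness step is indeed the standard Taylor-coefficient argument for evolutionary systems with a fixed initial datum. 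Each route imports one fact from \cite{Bur14}: the paper needs the string equation for the integrated string solution, you need the string identity for $g_{\beta,0}$ (which this paper restates in the remark after Theorem~\ref{trr}). Your argument is more self-contained in that it only involves the $t^\beta_0$-flows and no auxiliary potential-level solution; the paper's argument is less computational at the final step and produces the stronger statement \eqref{eq:string for integrated} as a by-product, which is in the spirit of how the divisor equation for the string solution is later handled in Lemma~\ref{lemma:divisor for the string solution}.
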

\begin{proof}
Consider formal variables $v^\alpha_d$, $1\le\alpha\le N, d\ge 0$, and let $u^\alpha_d=v^\alpha_{d+1}$. Consider the following system of evolutionary PDEs:
\begin{gather}\label{eq:integrated hierarchy}
\frac{\d v^\alpha}{\d t^\beta_q}=\eta^{\alpha\mu}\frac{\delta\og_{\beta,q}}{\delta u^\mu}.
\end{gather}
From the compatibility of the flows of the double ramification hierarchy it easily follows that the system~\eqref{eq:integrated hierarchy} is also compatible. It means that it has a unique solution for an arbitrary polynomial initial condition $\left.v^\alpha\right|_{t^*_*=0}=P^\alpha(x)$. 
Let $(v^{str})^\alpha(x,t^*_*;\eps)$ be a unique solution that satisfies the initial condition $\left.(v^{str})^\alpha\right|_{t^*_*=0}=\delta^{\alpha,1}\frac{x^2}{2}$. We claim that we have the following equation:
\begin{gather}\label{eq:string for integrated}
\frac{\d(v^{str})^\alpha}{\d t^1_0}-\sum_{n\ge 0}t^\gamma_{n+1}\frac{\d(v^{str})^\alpha}{\d t^\gamma_n}=t^\alpha_0+\delta^{\alpha,1}x.
\end{gather}
It can be proved in a way very similar to the proof of Lemma~4.7 in~\cite{Bur14}. We obviously have $(u^{str})^\alpha=\frac{\d(v^{str})^\alpha}{\d t^1_0}$. If we set $t^*_{\ge 1}=0$ in equation~\eqref{eq:string for integrated}, we get the statement of the lemma.
\end{proof}


\subsection{Divisor equation for the Hamiltonians}\label{subsection:divisor equation for the Hamiltonians}

In this section we consider the cohomological field theory associated to the Gromov-Witten theory of some target variety $V$ with vanishing odd cohomology, $H^{odd}(V;\mbC)=0$. Consider the associated double ramification hierarchy. We will use the same notations as in Section~\ref{subsection:divisor recursion}. Recall that we denoted by $e_{\gamma_1},\ldots,e_{\gamma_r}$ a basis in $H^2(V;\mbC)$.

\begin{lemma}\label{lemma:divisor for dr}
For any $i=1,\ldots,r$ and $d\ge 0$, we have $\frac{\d\og_{\alpha,d}}{\d u^{\gamma_i}}=c^\mu_{\alpha\gamma_i}\og_{\mu,d-1}+\left<e_{\gamma_i},q\frac{\d}{\d q}\og_{\alpha,d}\right>$.
\end{lemma}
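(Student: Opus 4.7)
The strategy is to prove first the identity at the density level,
$$
\frac{\partial g_{\alpha,d}}{\partial u^{\gamma_i}} = c^\mu_{\alpha\gamma_i}\,g_{\mu,d-1} + \left\langle e_{\gamma_i},\, q\frac{\partial}{\partial q}\,g_{\alpha,d}\right\rangle,
$$
and then to integrate in $x$ to obtain the lemma. Taking $\partial/\partial u^{\gamma_i}=\partial/\partial u^{\gamma_i}_0$ of the explicit polynomial representation of $g_{\alpha,d}$ corresponds geometrically, by symmetry of the integrand in the marked points, to inserting a new second marked point labelled by $e_{\gamma_i}$ with ramification number zero, producing integrals of the form
$$
\int_{\DR_g(-\sum a_j,\,0,\,a_1,\ldots,a_m)}\lambda_g\,\psi_1^d\,c_{g,m+2}(e_\alpha\otimes e_{\gamma_i}\otimes e_{\beta_1}\otimes\cdots\otimes e_{\beta_m}).
$$

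I would then push such integrals along the forgetful morphism $\pi_2\colon\oM_{g,m+2}\to\oM_{g,m+1}$ that drops this new second marked point. The key geometric inputs are $\pi_2^*\lambda_g=\lambda_g$; the compatibility $\DR_g(b_1,0,b_2,\ldots)=\pi_2^*\DR_g(b_1,b_2,\ldots)$ on the stable locus; the comparison $\psi_1=\pi_2^*\psi_1+\delta_0^{\{1,2\}}$; and the vanishing $\psi_1\cdot\delta_0^{\{1,2\}}=0$, which follows because $\psi_1$ restricts trivially through the bubbling section onto $\delta_0^{\{1,2\}}$. Together these yield the clean expansion $\psi_1^d=(\pi_2^*\psi_1)^d+(\pi_2^*\psi_1)^{d-1}\delta_0^{\{1,2\}}$ for $d\ge 1$, and the two summands are going to produce exactly the two terms on the right hand side of the lemma.

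The $(\pi_2^*\psi_1)^d$ piece descends via the projection formula to an integral over $\DR_g(-\sum a_j,a_1,\ldots,a_m)$; applying the Gromov--Witten divisor equation~\eqref{eq:divisor equation in GW} to $(\pi_2)_*c_{g,m+2}$ replaces $e_{\gamma_i}$ by the operator $\langle e_{\gamma_i},q\partial_q\rangle$, and summing over $g,m$ reassembles $\langle e_{\gamma_i},q\tfrac{\partial}{\partial q}g_{\alpha,d}\rangle$. For the $(\pi_2^*\psi_1)^{d-1}\delta_0^{\{1,2\}}$ piece I would restrict along the gluing map $\oM_{0,3}\times\oM_{g,m+1}\to\oM_{g,m+2}$ whose image is $\delta_0^{\{1,2\}}$: the restriction of the DR cycle splits as $\DR_0(-\sum a_j,0,\sum a_j)\boxtimes\DR_g(-\sum a_j,a_1,\ldots,a_m)$ (the same splitting from~\cite{BSSZ12} already used in the proof of Theorem~\ref{divisor}), $\DR_0$ is the fundamental class of the point $\oM_{0,3}$, $\pi_2^*\psi_1$ restricts to the $\psi$-class at the node on the main component, and the TQFT factorization of the CohFT turns $c_{g,m+2}$ into $c^\nu_{\alpha\gamma_i}\,c_{g,m+1}(e_\nu\otimes\cdots)$. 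What remains is exactly the integral defining $P_{\nu,d-1,g;\beta_1,\ldots,\beta_m}$, giving the total contribution $c^\mu_{\alpha\gamma_i}g_{\mu,d-1}$.

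The case $d=0$ must be treated separately: the decomposition of $\psi_1^d$ then carries no $\delta_0^{\{1,2\}}$ piece, so the forgetful-map argument only produces the $\langle e_{\gamma_i},q\tfrac{\partial}{\partial q}g_{\alpha,0}\rangle$ contribution. The missing $c^\mu_{\alpha\gamma_i}g_{\mu,-1}$ is instead supplied by the unstable term $(g,m)=(0,1)$, for which $\pi_2\colon\oM_{0,3}\to\oM_{0,2}$ is not defined; a direct evaluation yields $\int_{\DR_0(-a,0,a)}c_{0,3}(e_\alpha\otimes e_{\gamma_i}\otimes e_{\beta_1})=c_{\alpha\gamma_i\beta_1}$, and Fourier reconstruction recognises the resulting density piece as $c^\mu_{\alpha\gamma_i}\eta_{\mu\nu}u^\nu=c^\mu_{\alpha\gamma_i}g_{\mu,-1}$. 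For $d\ge 1$ the analogous unstable integral vanishes because $\psi_1^d=0$ on the zero-dimensional $\oM_{0,3}$, so no extra correction is needed. Integration in $x$ then yields the statement of the lemma. I expect the main technical obstacle to lie in the careful bookkeeping around the boundary divisor $\delta_0^{\{1,2\}}$ and, in particular, in correctly isolating the unstable $(0,1)$ contribution, which is precisely what supplies the lower-index term $c^\mu_{\alpha\gamma_i}g_{\mu,-1}$ otherwise invisible from the pushforward argument.
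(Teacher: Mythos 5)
Your argument is correct, but it is considerably longer than it needs to be given what is already available in the paper. The paper proves this lemma in two lines: Theorem~\ref{divisor} gives $\left<e_{\gamma_i},q\frac{\d}{\d q}g_{\alpha,d}\right>=\d_x^{-1}\frac{\d g_{\alpha,d-1}}{\d t^{\gamma_i}_0}-c^\mu_{\alpha\gamma_i}g_{\mu,d-1}$, Theorem~\ref{trr} identifies $\d_x^{-1}\frac{\d g_{\alpha,d-1}}{\d t^{\gamma_i}_0}$ with $\frac{\d g_{\alpha,d}}{\d u^{\gamma_i}}$, and one then passes to local functionals. What you have written is, in effect, a self-contained re-proof of Theorem~\ref{divisor} with the topological recursion folded in: you replace the paper's use of Theorem~\ref{trr} by the direct observation that $\d/\d u^{\gamma_i}$ inserts a zero-ramification marked point, and your expansion $\psi_1^d=(\pi_2^*\psi_1)^d+(\pi_2^*\psi_1)^{d-1}\delta_0^{\{1,2\}}$ is the same identity as the paper's $\pi_{n+2}^*(\psi_1^{d+1})=\psi_1^{d+1}-\delta_0^{\{1,n+2\}}\pi_{n+2}^*(\psi_1^d)$, read in the opposite direction, with the same splitting of the double ramification cycle along $\delta_0^{\{1,2\}}$ and the same use of the Gromov--Witten divisor equation~\eqref{eq:divisor equation in GW}. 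So the geometric content is identical; what your version buys is independence from the two earlier theorems and, pleasantly, a transparent accounting of where the term $c^\mu_{\alpha\gamma_i}g_{\mu,-1}$ comes from in the $d=0$ case (the unstable $\oM_{0,3}$ contribution), which in the paper is hidden inside the $d=-1$ case of Theorem~\ref{divisor} via Lemma~\ref{lemma:simple divisor}. All the individual steps you invoke (the pullback property of the DR cycle at a zero-ramification point, $\psi_1\cdot\delta_0^{\{1,2\}}=0$, the projection formula, the CohFT gluing axiom, and the vanishing of $\psi_1^d$ on $\oM_{0,3}$ for $d\ge 1$) check out, and the bookkeeping of stable versus unstable $(g,m)$ is handled correctly.
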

\begin{proof}
By Theorem~\ref{divisor}, we have $\left<e_{\gamma_i},q\frac{\d}{\d q}g_{\alpha,d}\right>=\d_x^{-1}\frac{\d g_{\alpha,d-1}}{\d t^{\gamma_i}_0}-c^\mu_{\alpha\gamma_i}g_{\mu,d-1}$. Using Theorem~\ref{trr} we obtain
$$
\left<e_{\gamma_i},q\frac{\d}{\d q}g_{\alpha,d}\right>=\frac{\d g_{\alpha,d}}{\d u^{\gamma_i}_0}-c^\mu_{\alpha\gamma_i}g_{\mu,d-1}.
$$
Projecting the both sides of this equation to the space of local functionals we get the statement of the lemma.
\end{proof}


\subsection{Divisor equation for the string solution}\label{subsection:divisor equation for the string solution}

Here we work under the same assumptions, as in the previous section. Consider the string solution $(u^{str})^\alpha$ of the double ramification hierarchy. 

\begin{lemma}\label{lemma:divisor for the string solution}
For any $i=1,2,\ldots,r$, we have 
$$
\frac{\d(u^{str})^\alpha}{\d t^{\gamma_i}_0}-\left<e_{\gamma_i},q\frac{\d(u^{str})^\alpha}{\d q}\right>-\sum_{d\ge 0} c^\mu_{\nu\gamma_i}t^\nu_{d+1}\frac{\d(u^{str})^\alpha}{\d t^\mu_d}=\delta^{\alpha,\gamma_i}.
$$
\end{lemma}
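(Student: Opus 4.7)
The plan is to set $F^\alpha:=L_i(u^{str})^\alpha-\delta^{\alpha,\gamma_i}$, where $L_i:=\frac{\d}{\d t^{\gamma_i}_0}-\<e_{\gamma_i},q\frac{\d}{\d q}\>-\sum_{d\ge 0}c^\mu_{\nu\gamma_i}t^\nu_{d+1}\frac{\d}{\d t^\mu_d}$, and to prove $F^\alpha\equiv 0$ by exhibiting it as the unique solution of a Cauchy problem for the linearization of the double ramification hierarchy around $u^{str}$ with vanishing initial data.

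The initial condition is immediate: by Lemma~\ref{lemma:property of the string solution} we have $(u^{str})^\alpha|_{t^*_{\ge 1}=0}=t^\alpha_0+\delta^{\alpha,1}x$, which has no $q$-dependence and involves no $t^\nu_{d+1}$ with $d\ge 0$, so only the $\d/\d t^{\gamma_i}_0$-piece of $L_i$ contributes and produces exactly $\delta^{\alpha,\gamma_i}$, which cancels the constant term of $F^\alpha$.

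For the evolution I would compute $\frac{\d F^\alpha}{\d t^\beta_p}$ for arbitrary $\beta, p$. A direct calculation gives the commutator $[\frac{\d}{\d t^\beta_p},L_i]=-c^\mu_{\beta\gamma_i}\frac{\d}{\d t^\mu_{p-1}}$ for $p\ge 1$ (and zero for $p=0$), which combined with the hierarchy $\frac{\d(u^{str})^\alpha}{\d t^\beta_p}=\eta^{\alpha\rho}\d_x\frac{\delta\og_{\beta,p}}{\delta u^\rho}[u^{str}]$ yields
\begin{equation*}
\frac{\d F^\alpha}{\d t^\beta_p}=\eta^{\alpha\rho}\d_x L_i\Bigl(\frac{\delta\og_{\beta,p}}{\delta u^\rho}[u^{str}]\Bigr)-c^\mu_{\beta\gamma_i}\eta^{\alpha\rho}\d_x\frac{\delta\og_{\mu,p-1}}{\delta u^\rho}[u^{str}].
\end{equation*}
Expanding $L_i$ by the chain rule, using $L_i((u^{str})^\nu_k)=\d_x^k F^\nu+\delta^{\nu,\gamma_i}\delta_{k,0}$, and applying the variational-derivative form of Lemma~\ref{lemma:divisor for dr},
\begin{equation*}
\<e_{\gamma_i},q\frac{\d}{\d q}\frac{\delta\og_{\beta,p}}{\delta u^\rho}\>=\frac{\d}{\d u^{\gamma_i}_0}\frac{\delta\og_{\beta,p}}{\delta u^\rho}-c^\mu_{\beta\gamma_i}\frac{\delta\og_{\mu,p-1}}{\delta u^\rho},
\end{equation*}
the two inhomogeneous contributions pair off: the $\d/\d u^{\gamma_i}_0$-term produced by the $\delta^{\nu,\gamma_i}\delta_{k,0}$-piece of the chain rule cancels its twin on the right-hand side above, and the $c^\mu_{\beta\gamma_i}\tfrac{\delta\og_{\mu,p-1}}{\delta u^\rho}[u^{str}]$ contribution cancels the commutator term. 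One is left with the linearization
\begin{equation*}
\frac{\d F^\alpha}{\d t^\beta_p}=\eta^{\alpha\rho}\d_x\sum_{k\ge 0}\frac{\d}{\d u^\nu_k}\frac{\delta\og_{\beta,p}}{\delta u^\rho}[u^{str}]\,\d_x^k F^\nu,
\end{equation*}
a homogeneous linear evolutionary system on $F$, so uniqueness of formal-series solutions with the vanishing datum of the previous step forces $F^\alpha\equiv 0$.

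The main obstacle is the chain-rule bookkeeping together with making the cancellations exact: in particular, Lemma~\ref{lemma:divisor for dr} must be applied at the level of differential polynomials (after taking $\delta/\delta u^\rho$), not merely at the level of local functionals, so that the $\frac{\d}{\d u^{\gamma_i}_0}$-pieces match on the nose rather than only modulo $\d_x$-exact terms. An alternative route, closer in spirit to the proof of Lemma~\ref{lemma:property of the string solution}, is to first establish an integrated identity for $(v^{str})^\alpha$ with right-hand side $\delta^{\alpha,\gamma_i}t^1_0$ plus a $t^1_0$-independent correction fixed by the initial datum $\delta^{\alpha,1}x^2/2$, and then differentiate once in $t^1_0$; both approaches rest on the same divisor identity for the Hamiltonians.
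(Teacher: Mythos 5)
Your proposal is correct and follows essentially the same route as the paper: the paper likewise introduces the operator $O_{\gamma_i}$ (your $L_i$), gets the initial condition from Lemma~\ref{lemma:property of the string solution}, derives the evolutionary system for $O_{\gamma_i}(u^{str})^\alpha$ via the same commutator and chain-rule computation, and concludes by uniqueness, with Lemma~\ref{lemma:divisor for dr} showing that the constant $\delta^{\alpha,\gamma_i}$ solves that system. Your version merely subtracts the constant up front and spells out explicitly the cancellation (including the correct observation that Lemma~\ref{lemma:divisor for dr} must be used at the level of variational derivatives) which the paper compresses into its final sentence.
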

\begin{proof}
Introduce an operator $O_{\gamma_i}$ by $O_{\gamma_i}:=\frac{\d}{\d t^{\gamma_i}_0}-\left<e_{\gamma_i},q\frac{\d}{\d q}\right>-\sum_{d\ge 0} c^\mu_{\nu\gamma_i}t^\nu_{d+1}\frac{\d}{\d t^\mu_d}$. From Lemma~\ref{lemma:property of the string solution} it follows that 
\begin{gather}\label{eq:initial condition for the divisor equation}
\left.O_{\gamma_i}(u^{str})^\alpha\right|_{t^*_*=0}=\delta^{\alpha,\gamma_i}.
\end{gather}
Let $f^\alpha_{\beta,q}:=\eta^{\alpha\mu}\d_x\frac{\delta\og_{\beta,q}}{\delta u^\mu}$. For any $d\ge 0$, we have
\begin{multline}\label{eq:divisor for string solution}
\frac{\d}{\d t^\beta_d}\left(O_{\gamma_i}(u^{str})^{\alpha}\right)=-c^\nu_{\beta\gamma_i}\frac{\d(u^{str})^\alpha}{\d t^\nu_{d-1}}+O_{\gamma_i}\frac{\d(u^{str})^\alpha}{\d t^\beta_d}=-c^\nu_{\beta\gamma_i}f^\alpha_{\nu,d-1}+O_{\gamma_i}f^\alpha_{\beta,d}=\\
=-c^\nu_{\beta\gamma_i}f^\alpha_{\nu,d-1}-\left<e_{\gamma_i},q\frac{\d}{\d q}f^\alpha_{\beta,d}\right>+\sum_{n\ge 0}\frac{\d f^\alpha_{\beta,d}}{\d u^\gamma_n}\d_x^n O_{\gamma_i}(u^{str})^\gamma,
\end{multline}
where we, by definition, put $\frac{\d}{\d t^\nu_{-1}}:=0$. The resulting system of equations can be considered as a system of evolutionary partial differential equations for the power series~$O_{\gamma_i}(u^{str})^\alpha$. Together with the initial condition~\eqref{eq:initial condition for the divisor equation}, it uniquely determines the power series~$O_{\gamma_i}(u^{str})^\alpha$. Lemma~\ref{lemma:divisor for dr} implies that, if we substitute $O_{\gamma_i}(u^{str})^\alpha=\delta^{\alpha,\gamma_i}$ on the right-hand side of~\eqref{eq:divisor for string solution}, we get zero. The lemma is proved.
\end{proof}


\section{Dubrovin-Zhang hierarchy for $\CP1$}

The main goal of this section is to recall the explicit description of the Dubrovin-Zhang hierarchy for~$\CP1$ obtained in~\cite{DZ04}. In Section~\ref{subsection:brief recall} we say a few words about the general theory of the Dubrovin-Zhang hierarchies. We recall the notion of a Miura transformation and also write an explicit formula that relates the descendant and the ancestor Dubrovin-Zhang hierarchies for $\CP1$. In Section~\ref{subsection:extended Toda hierarchy} we review the construction of the extended Toda hierarchy and its relation to the descendant Dubrovin-Zhang hierarchy for $\CP1$. In Section~\ref{subsection:explicit computations} we list some explicit formulae for the ancestor hierarchy that we will use in Section~\ref{section:dr hierarchy for cp1}.

\subsection{Brief recall of the Dubrovin-Zhang theory}\label{subsection:brief recall}

\subsubsection{General theory}

The main reference for the Dubrovin-Zhang theory is the paper~\cite{DZ05}. The theory was later generalized in~\cite{BPS12b}. In this section we follow the approach from~\cite{BPS12b} (see also~\cite{BPS12a}). 

The Dubrovin-Zhang hierarchies form a certain subclass in the class of hamiltonian hierarchies of PDEs~\eqref{eq:Hamiltonian system}. They are associated to semisimple potentials of Gromov-Witten type. Let us describe the family of these potentials. First of all, there is a family of all cohomological field theories. To any cohomological field theory one can associate the so-called ancestor potential, that is defined as the generating series of the correlators of the cohomological field theory. The semisimplicity condition means that a certain associative commutative algebra, associated to the cohomological field theory, doesn't have nilpotents. Ancestor potentials form a subfamily in the family of all potentials of Gromov-Witten type. Given an ancestor potential, there is so-called Givental's $s$-action (or the action of the lower triangular Givental group, see e.g.~\cite{FSZ10}) that produces a family of potentials that correspond to this ancestor potential. These potentials are sometimes called the descendant potentials corresponding to the given ancestor potential. The resulting family of potentials is called the family of potentials of Gromov-Witten type.

The Dubrovin-Zhang hierarchy corresponding to an ancestor potential will be called the ancestor hierarchy, while the hierarchy corresponding to a descendant potential will be called the descendant hierarchy. It is not hard to write explicitly a relation between them. This was done in~\cite{BPS12b} (see also~\cite{BPS12a}). We will write this relation in the case of $\CP1$, see Lemma~\ref{lemma:relation} below.

\subsubsection{The descendant and the ancestor potentials of $\CP1$}

Let us describe some details and also fix notations in the case of $\CP1$. We use the notations from Section~\ref{subsection:divisor recursion}. 

Let $V:=H^*(\CP1;\mbC)$. The semigroup $E\subset H_2(\CP1;\mbZ)$ is generated by the fundamental class~$[\CP1]$, so it is naturally isomorphic to~$\mbZ_{\ge 0}$. The Novikov ring~$\mcN$ is isomorphic to~$\mbC[[q]]$. Consider the Gromov-Witten theory of~$\CP1$. Let $c_{g,n,d}\colon V^{\otimes n}\to H^*(\oM_{g,n};\mbC)$ be the associated cohomological field theory. Let $1,\omega\in H^*(\CP1;\mbC)$ be the unit and the class dual to a point. The matrix of the metric in this basis will be denoted by $\eta=(\eta_{\alpha\beta})_{\alpha,\beta\in\{1,\omega\}}$. 

The ancestor correlators are defined by 
$$
\<\tau_{d_1}(\alpha_1)\tau_{d_2}(\alpha_2)\ldots\tau_{d_n}(\alpha_n)\>_{g,d}:=\int_{\oM_{g,n}} c_{g,n,d}(\otimes_{i=1}^n\alpha_i)\prod_{i=1}^n\psi_i^{d_i},\quad \alpha_i\in V,\quad d,d_i\ge 0.
$$
Introduce variables $t^1_d,t^\omega_d$, $d\ge 0$. The ancestor potential of $\CP1$ is defined by 
\begin{align*}
&F(t;q;\eps):=\sum_{g\ge 0}\eps^{2g} F_g(t;q),\quad\text{where}\\
&F_g(t;q):=\sum_{\substack{n\ge 0\\2g-2+n>0}}\sum_{d\ge 0}\frac{q^d}{n!}\sum_{\substack{\alpha_1,\ldots,\alpha_n\in\{1,\omega\}\\d_1,\ldots,d_n\ge 0}}\left<\prod_{i=1}^n\tau_{d_i}(\alpha_i)\right>_{g,d}\prod_{i=1}^n t_{d_i}^{\alpha_i}.
\end{align*}

As we said, there is the family of descendant potentials corresponding to the ancestor potential~$F$. All these potentials are related by Givental's $s$-action. Among these descendant potentials there is a particular one that also has a simple geometric description. This potential is defined by
\begin{align*}
&F^{desc}(t;q;\eps):=\sum_{g\ge 0}\eps^{2g} F^{desc}_g(t;q),\text{ where}\\
&F^{desc}_g(t;q):=\sum_{n,d\ge 0}\frac{q^d}{n!}\sum_{\substack{\alpha_1,\ldots,\alpha_n\in\{1,\omega\}\\d_1,\ldots,d_n\ge 0}}\left<\prod_{i=1}^n\tau_{d_i}(\alpha_i)\right>^{desc}_{g,d}\prod_{i=1}^n t_{d_i}^{\alpha_i}.
\end{align*}
Recall that by $\left<\prod_{i=1}^n\tau_{d_i}(\alpha_i)\right>^{desc}_{g,d}$ we denote the Gromov-Witten invariants of~$\CP1$.

Let us list several properties of the descendant potential $F^{desc}$. First of all, we have (see e.g.~\cite{Dub96})
\begin{gather}\label{eq:Frobenius potential}
\left.F^{desc}\right|_{\substack{\eps=0\\t^*_{\ge 1=0}}}=\frac{(t^1_0)^2t^\omega_0}{2}+qe^{t^\omega_0}.
\end{gather}
The following two equations are called the string and the divisor equations (see e.g.~\cite{Hor95}):
\begin{align}
&\left(\frac{\d}{\d t^1_0}-\sum_{n\ge 0}t^\alpha_{n+1}\frac{\d}{\d t^\alpha_n}\right)F^{desc}=t^1_0 t^\omega_0,\label{eq:string for descendant potential}\\
&\left(\frac{\d}{\d t^\omega_0}-q\frac{\d}{\d q}-\sum_{n\ge 0} t^1_{n+1}\frac{\d}{\d t^\omega_n}\right)F^{desc}=\frac{(t^1_0)^2}{2}-\frac{\eps^2}{24}.\label{eq:divisor for descendant potential}
\end{align}

Let us write the relation between the potentials~$F$ and~$F^{desc}$ in terms of Givental's $s$-action. The general formula is given in~\cite{Giv01}. Here we adapt it for the case of $\CP1$. Introduce matrices~$S_k, k\ge 0$, by
\begin{align*}
&S_0=Id,\qquad (S_i)^\beta_\alpha:=\sum_{d\ge 0}\left<\tau_{i-1}(\alpha)\tau_0(\mu)\right>^{desc}_{0,d}\eta^{\mu\beta}q^d,\quad i\ge 1,\\
&S(z):=1+\sum_{n\ge 1}S_n z^{-n}.
\end{align*}
Using formulae~\eqref{eq:Frobenius potential},~\eqref{eq:string for descendant potential},~\eqref{eq:divisor for descendant potential} and the so-called topological recursion relation in genus~$0$ (see e.g.~\cite{Get98}), one can quickly compute that, for $k\ge 1$, we have
\begin{gather}\label{eq:S-matrix}
(S_{2k-1})^\alpha_\beta=
\begin{cases}
\frac{1}{k!(k-1)!}q^k,&\text{if $\alpha=1$, $\beta=\omega$};\\
-\frac{2H_{k-1}}{((k-1)!)^2}q^{k-1},&\text{if $\alpha=\omega$, $\beta=1$};\\
0,&\text{otherwise}.
\end{cases}
\qquad
(S_{2k})^\alpha_\beta=
\begin{cases}
\left(\frac{1}{(k!)^2}-\frac{2H_k}{k!(k-1)!}\right)q^k,&\text{if $\alpha=\beta=1$};\\
\frac{1}{(k!)^2}q^k,&\text{if $\alpha=\beta=\omega$};\\
0,&\text{otherwise}.
\end{cases}
\end{gather}
Here $H_k:=1+\frac{1}{2}+\ldots+\frac{1}{k}$, if $k\ge 1$, and $H_0:=0$. Introduce matrices~$s_k, k\ge 1$, by $s(z)=\sum_{n\ge 1} s_n z^{-n}:=\log S(z)$. Let $(s_k)_{\alpha\beta}:=(s_k)_\alpha^\mu\eta_{\mu\beta}$. Then the potentials $F^{desc}$ and $F$ are related by
\begin{align*}
&\exp\left(F^{desc}\right)=\exp\left(\widehat{s(z)}\right)\exp\left(F\right),\quad\text{where}\\
&\widehat{s(z)}:=-\frac{1}{2}(s_3)_{1,1}+\sum_{d\ge 0}(s_{d+2})_{\alpha,1}t^\alpha_d+\frac{1}{2}\sum_{d_1,d_2\ge 0}(-1)^{d_2}(s_{d_1+d_2+1})_{\mu_1\mu_2}t^{\mu_1}_{d_1}t^{\mu_2}_{d_2}+\sum_{\substack{l\ge 1\\d\ge 0}}(s_l)^\mu_\nu t^\nu_{d+l}\frac{\d}{\d t^\mu_d}.
\end{align*}

\subsubsection{Miura transformations in the theory of hamiltonian hierarchies}

Here we want to discuss changes of variables in the theory of hamiltonian hierarchies. We recommend the reader the paper~\cite{DZ05} for a more detailed introduction to this subject. 

First of all, let us modify our notations a little bit. Recall that by $\mcA$ we denoted the ring of differential polynomials in the variables $u^1,\ldots,u^N$. Since we are going to consider rings of differential polynomials in different variables, we want to see the variables in the notation. So for the rest of the paper we denote by $\mcA_{u^1,\ldots,u^N}$ the ring of differential polynomials in the variables $u^1,\ldots,u^N$. The same notation is adopted for the extension $\hcA_{u^1,\ldots,u^N}$ and for the spaces of local functionals~$\Lambda_{u^1,\ldots,u^N}$ and~$\hLambda_{u^1,\ldots,u^N}$.

Consider changes of variables of the form
\begin{align}
&\tu^\alpha(u;u_x,u_{xx},\ldots;\eps)=u^\alpha+\sum_{k\ge 1}\eps^k f^\alpha_k(u;u_x,\ldots,u_k),\quad \alpha=1,\ldots,N,\label{eq:Miura transformation}\\
&f^\alpha_k\in\mcA_{u^1,\ldots,u^N},\quad\deg f^\alpha_k=k.\label{eq:degree condition}
\end{align}
They are called Miura transformations (see e.g.~\cite{DZ05}). It is not hard to see that they are invertible.

Any differential polynomial $f(u)\in\hcA_{u^1,\ldots,u^N}$ can be rewritten as a differential polynomial in the new variables $\tu^\alpha$. The resulting differential polynomial is denoted by $f(\tu)$. The last equation in line~\eqref{eq:degree condition} garanties that, if $f(u)\in\hcA_{u^1,\ldots,u^N}^{[d]}$, then $f(\tu)\in\hcA_{\tu^1,\ldots,\tu^N}^{[d]}$. In other words, a Miura transformation defines an isomorphism $\hcA_{u^1,\ldots,u^N}^{[d]}\simeq\hcA_{\tu^1,\ldots,\tu^N}^{[d]}$. In the same way any Miura transformation identifies the spaces of local functionals $\hLambda^{[d]}_{u^1,\ldots,u^N}$ and $\hLambda^{[d]}_{\tu^1,\ldots,\tu^N}$. For any local functional $\oh[u]\in\hLambda^{[d]}_{u^1,\ldots,u^N}$ the image of it under the isomorphism $\hLambda^{[d]}_{u^1,\ldots,u^N}\stackrel{\sim}{\to}\hLambda^{[d]}_{\tu^1,\ldots,\tu^N}$ is denoted by $\oh[\tu]\in\hLambda^{[d]}_{\tu^1,\ldots,\tu^N}$. 

Let us describe the action of Miura transformations on hamiltonian hierarchies. Suppose we have a hamiltonian system
\begin{gather}\label{eq:Hamiltonian system2}
\frac{\d u^\alpha}{\d\tau_i}=K^{\alpha\mu}\frac{\delta\oh_i[u]}{\delta u^\mu},\quad\alpha=1,\ldots,N,\quad i\ge 1,
\end{gather}
defined by a hamiltonian operator $K$ and a sequence of pairwise commuting local functionals $\oh_i[u]\in\hLambda^{[0]}_{u^1,\ldots,u^N}$, $\{\oh_i[u],\oh_j[u]\}_K=0$. Consider a Miura transformation~\eqref{eq:Miura transformation}. Then in the new variables~$\tu^i$ system~\eqref{eq:Hamiltonian system2} looks as follows (see e.g.~\cite{DZ05}):
\begin{align*}
&\frac{\d\tu^\alpha}{\d\tau_i}=\widetilde K^{\alpha\mu}\frac{\delta\oh_i[\tu]}{\delta \tu^\mu},\quad\text{where}\\
&\widetilde K^{\alpha\beta}=\sum_{p,q\ge 0}\frac{\d \tu^\alpha(u)}{\d u^\mu_p}\d_x^p\circ K^{\mu\nu}\circ(-\d_x)^q\circ\frac{\d \tu^\beta(u)}{\d u^\nu_q}.
\end{align*}

\subsubsection{The descendant and the ancestor Dubrovin-Zhang hierarchies for $\CP1$}

The variables of the Dubrovin-Zhang hierarchies will be denoted by $w^\alpha$. Denote by $\oh_{\alpha,p}[w]\in\hLambda^{[0]}_{w^1,w^\omega}\otimes\mbC[q]$, $\alpha\in\{1,\omega\}, p\ge 0$, the Hamiltonians of the ancestor Dubrovin-Zhang hierarchy for $\CP1$ and by $K$ the hamiltonian operator. The Hamiltonians and the hamiltonian operator of the descendant hierarchy will be denoted by~$\oh_{\alpha,p}^{desc}[w]$ and~$K^{desc}$ correspondingly. For convenience, let us also introduce local functionals $\oh^{desc}_{\alpha,-1}[w]$ by $\oh^{desc}_{\alpha,-1}[w]:=\int\eta_{\alpha\mu}w^\mu dx$. For the operator $S_i$, denote by $S^*_i$ the adjoint operator.

\begin{lemma}\label{lemma:relation}
We have $\oh_{\alpha,p}[w]=\sum_{i=0}^{p+1}(-1)^i(S^*_i)^\mu_\alpha\oh^{desc}_{\mu,p-i}[w]$ and $K^{desc}=K$.
\end{lemma}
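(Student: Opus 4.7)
The approach is to translate the Givental lower-triangular $s$-action relating descendant and ancestor potentials into a corresponding action on the Hamiltonians of the two Dubrovin--Zhang hierarchies. Since
$$\exp(F^{desc}) \;=\; \exp(\widehat{s(z)})\exp(F),$$
I expect both statements of the lemma to follow by systematically transporting this identity through the DZ reconstruction of the hierarchy from the potential.

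\emph{Invariance of the Poisson operator.} The equality $K^{desc}=K$ I would treat as structural. In the DZ construction (see \cite{DZ05, BPS12b}), the hamiltonian operator in normal coordinates is determined by the underlying Frobenius manifold together with the quasi-Miura transformation built from genus $\ge 1$ corrections. Both $F$ and $F^{desc}$ produce the same Frobenius potential on the small phase space, namely \eqref{eq:Frobenius potential}, and the lower-triangular $s$-action acts trivially on the reconstructed Poisson operator in the semisimple case. I would quote this invariance rather than reprove it.

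\emph{Relating the Hamiltonians.} In the DZ framework, both $\oh^{desc}_{\alpha,p}$ and $\oh_{\alpha,p}$ are assembled from two-point functions
$$h^{desc}_{\alpha,p} \;=\; \frac{\partial^2 F^{desc}}{\partial t^1_0\,\partial t^\alpha_{p+1}}\bigg|_{\text{top sol}}, \qquad h_{\alpha,p} \;=\; \frac{\partial^2 F}{\partial t^1_0\,\partial t^\alpha_{p+1}}\bigg|_{\text{top sol}}$$
evaluated on the topological solution, and the normal coordinate $w$ is the same for both potentials as it is built from shared genus-0 data. The operator $\widehat{s(z)}$ decomposes as a time-shift vector field $B = \sum_{l\ge 1, d\ge 0}(s_l)^\mu_\nu t^\nu_{d+l}\partial/\partial t^\mu_d$ plus an at-most-quadratic polynomial in the $t$-variables. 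The vector field $B$ dually produces a linear transport
$$\partial_{t^\alpha_{p+1}} F^{desc} \;=\; \sum_{i\ge 0}(S_i)^\mu_\alpha\,\partial_{t^\mu_{p+1-i}} F \;+\; \text{(polynomial contact terms)},$$
and inverting via the symplectic identity $S(z)S^*(-z)=\mathrm{Id}$, which in components reads $\sum_{i+j=n}(-1)^i(S^*_i)^\mu_\beta(S_j)^\beta_\alpha = \delta^\mu_\alpha\delta_{n,0}$, yields precisely the claimed combination after restriction to the topological solution and integration in $x$. The upper limit $i=p+1$ reflects that $\oh^{desc}_{\mu,p-i}$ is defined only down to $p-i=-1$, the $i=p+1$ term being fed by the added Casimir $\oh^{desc}_{\mu,-1}[w] = \int\eta_{\mu\nu}w^\nu dx$.

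\emph{Main obstacle.} The delicate point is the clean handling of the at-most-quadratic pieces of $\widehat{s(z)}$ and the matching of the boundary term $i=p+1$. One has to show, using the string and divisor equations \eqref{eq:string for descendant potential}--\eqref{eq:divisor for descendant potential} together with the explicit matrix entries \eqref{eq:S-matrix}, that the polynomial ``contact'' corrections coming from the non-vector-field part of $\widehat{s(z)}$ reorganize into exactly the $i=p+1$ summand, with no leftover. By contrast, the vector-field part of the $s$-action transports derivatives linearly and transparently, so the bulk of the formula comes out for free; this is why I expect the proof to reduce to this one piece of bookkeeping.
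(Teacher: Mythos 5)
Your plan follows essentially the same route as the paper, whose proof is a direct citation of Theorems~9 and~16 of \cite{BPS12b} together with the symplectic identity $S(z)S^*(-z)=\mathrm{Id}$ — exactly the ingredients you identify. The only point worth flagging is that your claim that the normal coordinate $w$ is the same for both hierarchies is not automatic from ``shared genus-0 data'' but rests on the vanishing $(S_1)^\alpha_1=(s_1)^\alpha_1=0$ (visible from~\eqref{eq:S-matrix} since $H_0=0$), which the paper explicitly notes.
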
  
\begin{proof}
The lemma easily follows from Theorems~9,~16 in~\cite{BPS12b} and also from the fact that $S(z)S^*(-z)=Id$ (see e.g.~\cite{Giv01}). The reader should also keep in mind that $(S_1)^\alpha_1=(s_1)^\alpha_1=0$.
\end{proof}


\subsection{Extended Toda hierarchy}\label{subsection:extended Toda hierarchy}

In this section we recall the construction of the extended Toda hierarchy and the Miura transformation that relates it to the descendant Dubrovin-Zhang hierarchy for~$\CP1$. We follow the paper~\cite{DZ04}.

\subsubsection{Construction}

Consider formal variables $v^1,v^2$, the ring of differential polynomials~$\hcA_{v^1,v^2}$ and the tensor product~$\hcA_{v^1,v^2}\otimes\mbC[q,q^{-1}]$. For a formal series  
$$
a=\sum_{k\in\mathbb Z}a_k(v;\eps;q)e^{k\eps\d_x},\quad a_k\in\hcA_{v^1,v^2}\otimes\mbC[q,q^{-1}],
$$
let $a_+:=\sum_{k\ge 0}a_k e^{k\eps\d_x}$ and $\Res(a):=a_0$. Consider the operator
$$
L=e^{\eps\d_x}+v^1+qe^{v^2}e^{-\eps\d_x}.
$$
The equations of the extended Toda hierarchy look as follows:
\begin{align*}
&\frac{\d L}{\d t^1_p}=\eps^{-1}\frac{2}{p!}[(L^p(\log L-H_p))_+,L],\\
&\frac{\d L}{\d t^\omega_p}=\eps^{-1}\frac{1}{(p+1)!}[(L^{p+1})_+,L].
\end{align*}
We refer the reader to~\cite{DZ04} for the precise definition of the logarithm $\log L$. The hamiltonian structure of the extended Toda hierarchy is given by the operator  
\begin{gather}\label{eq:Toda operator}
K^{Td}=
\begin{pmatrix}
0                          & \eps^{-1}(e^{\eps\d_x}-1) \\
\eps^{-1}(1-e^{-\eps\d_x}) & 0
\end{pmatrix}
\end{gather}
and the Hamiltonians
\begin{align}
&\oh^{Td}_{1,p}[v]=\int\left(\frac{2}{(p+1)!}\Res(L^{p+1}(\log L-H_{p+1}))\right)dx,\label{eq:1 Toda Hamiltonian}\\
&\oh^{Td}_{\omega,p}[v]=\int\left(\frac{1}{(p+2)!}\Res(L^{p+2})\right)dx.\label{eq:omega Toda Hamiltonian}
\end{align}
So the equations of the extended Toda hierarchy can be written as follows:
$$
\frac{\d v^\alpha}{\d t^\beta_p}=(K^{Td})^{\alpha\mu}\frac{\delta\oh^{Td}_{\beta,p}[v]}{\delta v^\mu}.
$$

\subsubsection{Descendant Dubrovin-Zhang hierarchy for $\CP1$}

In~\cite{DZ04} B.~Dubrovin and Y.~Zhang proved the following theorem.
\begin{theorem}\label{theorem:DZ theorem}
The descendant hierarchy for~$\CP1$ is related to the extended Toda hierarchy by the Miura transformation 
\begin{gather}\label{eq:w-v relation}
w^1(v)=\frac{\eps\d_x}{e^{\eps\d_x}-1}v^1,\qquad w^\omega(v)=\frac{\eps^2\d_x^2}{e^{\eps\d_x}+e^{-\eps\d_x}-2}v^2.
\end{gather}
\end{theorem}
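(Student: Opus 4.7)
This theorem is stated as a result of Dubrovin and Zhang in~\cite{DZ04}, so my plan is to outline the strategy used there rather than to reprove it from scratch. The identification has two independent parts that need to be checked: that the Miura transformation~\eqref{eq:w-v relation} intertwines the hamiltonian structures, and that it identifies the Hamiltonians $\oh^{Td}_{\alpha,p}[v]$ with the descendant Dubrovin-Zhang Hamiltonians $\oh^{desc}_{\alpha,p}[w]$.

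The first step is to verify the hamiltonian structure. Here one must compute how the operator $K^{Td}$ of~\eqref{eq:Toda operator} transforms under the Miura change~\eqref{eq:w-v relation}, using the general formula $\widetilde{K}^{\alpha\beta}=\sum_{p,q}\frac{\d w^\alpha}{\d v^\mu_p}\d_x^p\circ (K^{Td})^{\mu\nu}\circ(-\d_x)^q\circ\frac{\d w^\beta}{\d v^\nu_q}$. Since both Miura factors are symmetric difference-type pseudodifferential operators that commute with $\d_x$ on constant coefficients, the transformation is a direct calculation, and it yields precisely the standard hamiltonian operator $\eta\d_x$ of the descendant Dubrovin-Zhang hierarchy.

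The second step is the identification of the Hamiltonians. The dispersionless limit $\eps\to 0$ matches the Frobenius manifold of~$\CP1$ given by~\eqref{eq:Frobenius potential}: the residues $\Res(L^{p+1})$ and $\Res(L^{p+1}(\log L - H_{p+1}))$ at leading order reproduce the genus-zero primary and descendant potentials for~$\CP1$ (this is a classical statement about the Toda lattice and its extension). For higher genus one then invokes the key uniqueness property of the Dubrovin-Zhang construction: given a semisimple Frobenius manifold, the descendant hierarchy is uniquely reconstructed from its dispersionless limit together with the bihamiltonian structure (or equivalently from the Virasoro constraints). The extended Toda hierarchy carries a natural bihamiltonian structure compatible with~$K^{Td}$, and its second structure matches the Dubrovin-Zhang second structure after the Miura change. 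Equivalently, one can check that the $\tau$-function of the extended Toda hierarchy (with appropriate initial data) satisfies the Virasoro constraints known to be equivalent to Gromov-Witten theory of~$\CP1$, which is the route of~\cite{DZ04} and also of Getzler and Okounkov--Pandharipande.

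The main obstacle in reconstructing this argument is precisely the higher-genus identification: the dispersionless check is a book-keeping computation, but bridging all genera requires either the full Virasoro/tau-function machinery for Toda or the bihamiltonian uniqueness theorem for Dubrovin-Zhang hierarchies. Since both results are already established in the literature, my plan is to cite them and limit the verification to the two explicit checks above --- the genus-zero match of the Hamiltonians and the transformation of~$K^{Td}$ into $\eta\d_x$ --- which are the only computations specific to the statement of the theorem.
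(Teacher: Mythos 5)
The paper gives no proof of this theorem: it is stated verbatim as a result of Dubrovin and Zhang and attributed to~\cite{DZ04}, so citing that reference is exactly what the paper does. Your additional sketch of the ingredients of the argument in~\cite{DZ04} (the transformation of $K^{Td}$ into $\eta\d_x$, the dispersionless match, and the Virasoro/tau-function identification for higher genus) is consistent with the strategy of that reference and is a reasonable supplement, but nothing beyond the citation is required here.
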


\begin{rem}
The construction of the Dubrovin-Zhang hierarchy (see~\cite{BPS12b}) immediately implies that the Hamiltonians $\oh^{desc}_{\alpha,p}[w]$ contain only nonnegative powers of $q$. The fact that $\oh_{\omega,p}^{Td}[v]\in\hLambda^{[0]}_{v^1,v^2}\otimes\mbC[q]$ easily follows from formula~\eqref{eq:omega Toda Hamiltonian}. The fact that $\oh_{1,p}^{Td}[v]\in\hLambda^{[0]}_{v^1,v^2}\otimes\mbC[q]$ is not so trivial, since the coefficients of the logarithm~$\log L$ contain negative powers of~$q$. We will show how to derive it from~\eqref{eq:1 Toda Hamiltonian} in Section~\ref{subsection:proof of DZ Hamiltonian in degree 0}.
\end{rem}


\subsection{Several computations for the ancestor hierarchy}\label{subsection:explicit computations}

The ancestor Dubrovin-Zhang hierarchy for~$\CP1$ comes with a specific solution (see e.g.~\cite{BPS12b}):
$$
(w^{top})^\alpha(x,t;\eps;q):=\eta^{\alpha\mu}\left.\frac{\d^2 F}{\d t^\mu_0\d t^1_0}\right|_{t^\alpha_d\mapsto t^\alpha_d+\delta^{\alpha,1}\delta_{d,0}x}.
$$ 
It is called the topological solution. 

Now for the rest of the paper we fix Miura transformation~\eqref{eq:Miura}. Let
$$
(u^{top})^\alpha(x,t;\eps;q):=\left.u^\alpha(w)\right|_{w^\nu_i=\d_x^i(w^{top})^\nu}=\frac{e^{\frac{\eps}{2}\d_x}-e^{-\frac{\eps}{2}\d_x}}{\eps\d_x}(w^{top})^\alpha.
$$

\subsubsection{The string and the divisor equations for the topological solution}

We have the following equations for the ancestor potential $F$ (see e.g.~\cite{KM94}):
\begin{align}
&\left(\frac{\d}{\d t^1_0}-\sum_{n\ge 0}t^\alpha_{n+1}\frac{\d}{\d t^\alpha_n}\right)F=t^1_0 t^\omega_0,\label{eq:string for ancestor}\\
&\left(\frac{\d}{\d t^\omega_0}-q\frac{\d}{\d q}-\sum_{d\ge 0}t^1_{d+1}\frac{\d}{\d t^\omega_d}-q\sum_{d\ge 0}t^\omega_{d+1}\frac{\d}{\d t^1_d}\right)F=\frac{(t^1_0)^2}{2}-\frac{\eps^2}{24}.\label{eq:divisor for ancestor}
\end{align}
The second equation is an analog of the divisor equation~\eqref{eq:divisor for descendant potential} for the descendant potential~$F^{desc}$. In order to derive it, one should use the following formula:
\begin{gather}\label{eq:three-point}
c_{\omega\alpha\beta}=
\begin{cases}
1,&\text{if $\alpha=\beta=1$};\\
q,&\text{if $\alpha=\beta=\omega$};\\
0,&\text{otherwise}.
\end{cases}
\end{gather}
It can be easily checked using~\eqref{eq:Frobenius potential}.

From equation~\eqref{eq:string for ancestor} it immediately follows that
\begin{gather}\label{eq:initial condition for DZ}
\left.(u^{top})^\alpha\right|_{t^*_*=0}=\delta^{\alpha,1}x.
\end{gather}
Equations~\eqref{eq:string for ancestor} and~\eqref{eq:divisor for ancestor} also imply that
\begin{align}
&\left(\frac{\d}{\d t^1_0}-\sum_{n\ge 0}t^\alpha_{n+1}\frac{\d}{\d t^\alpha_n}\right)(u^{top})^\alpha=\delta^{\alpha,1},\label{eq:string for topological}\\
&\left(\frac{\d}{\d t^\omega_0}-q\frac{\d}{\d q}-\sum_{d\ge 0}t^1_{d+1}\frac{\d}{\d t^\omega_d}-q\sum_{d\ge 0}t^\omega_{d+1}\frac{\d}{\d t^1_d}\right)(u^{top})^\alpha=\delta^{\alpha,\omega}.\label{eq:divisor for topological}
\end{align}

\subsubsection{Some Hamiltonians of the ancestor hierarchy}

Let $S(z):=\frac{e^{\frac{z}{2}}-e^{-\frac{z}{2}}}{z}$. The following properties will be crucial for the proof of Theorem~\ref{theorem:dr for cp1}:
\begin{align}
\left.\oh_{\omega,p}[u]\right|_{q=0}=&\int\frac{(u^1)^{p+2}}{(p+2)!}dx,\label{eq:DZ omega Hamiltonian in degree 0}\\
\left.\oh_{1,p}[u]\right|_{q=0}=&\int\left(\frac{(u^1)^{p+1}u^\omega}{(p+1)!}+\sum_{g\ge 1}\eps^{2g}r_{p,g}(u^1)\right)dx,\quad r_{p,g}\in\mcA_{u^1},\quad\deg r_{p,g}=2g,\label{eq:DZ 1 Hamiltonian in degree 0}\\
\left.\oh_{1,1}[u]\right|_{q=0}=&\int\left(\frac{(u^1)^2 u^\omega}{2}+\sum_{g\ge 1}\eps^{2g}\frac{B_{2g}}{(2g)!}u^1u^1_{2g}\right)dx,\label{eq:DZ 1,1 Hamiltonian in degree 0}\\
\oh_{\omega,0}[u]=&\int\left(\frac{(u^1)^2}{2}+q\left(e^{S(\eps\d_x)u^\omega}-u^\omega\right)\right)dx.\label{eq:DZ omega Hamiltonian}
\end{align}
We will prove these formulae in Appendix~\ref{section:computations}.


\section{Double ramification hierarchy for $\CP1$}\label{section:dr hierarchy for cp1}

In this section we prove Theorem~\ref{theorem:dr for cp1}. First of all, let us consider the hamiltonian structures. By Lemma~\ref{lemma:relation}, we have $K=K^{desc}$. The fact, that Miura transformation~\eqref{eq:Miura} transforms the operator~$K^{desc}$ to $\eta\d_x$, was observed in~\cite{DZ05}. This is actually an easy computation. By~\eqref{eq:Toda operator} and Theorem~\ref{theorem:DZ theorem}, Miura transformation~\eqref{eq:Miura} transforms the operator $K^{desc}$ to the operator~$\widetilde K$, where
$$
\widetilde K^{\alpha\beta}=\sum_{p,q\ge 0}\frac{\d u^\alpha(v)}{\d v^\mu_p}\d_x^p\circ (K^{Td})^{\mu\nu}\circ(-\d_x)^q\circ\frac{\d u^\beta(v)}{\d v^\nu_q}=\eta^{\alpha\beta}\d_x.
$$
We conclude that Miura transformation~\eqref{eq:Miura} transforms the hamiltonian operator of the ancestor hierarchy for~$\CP1$ to the hamiltonian operator of the double ramification hierarchy for~$\CP1$.

It remains to prove that $\og_{\alpha,p}[u]=\oh_{\alpha,p}[u]$. The proof is splitted in three steps. First, in Section~\ref{subsection:degree 0 parts} we prove this equation in degree zero: $\og_{\alpha,p}[u]|_{q=0}=\oh_{\alpha,p}[u]|_{q=0}$. Then in Section~\ref{subsection:omega flow} we prove that $\og_{\omega,0}[u]=\oh_{\omega,0}[u]$. Finally, in Section~\ref{subsection:final step} we show that this information, together with the string and the divisor equations, is enough to prove that $u^{top}(x,t;\eps;q)=u^{str}(x,t;\eps;q)$. After that it is very easy to show that~$\og_{\alpha,p}[u]=\oh_{\alpha,p}[u]$.


\subsection{Degree $0$ parts}\label{subsection:degree 0 parts}

In this section we prove that
\begin{gather}\label{eq:degree zero parts}
\left.\og_{\alpha,d}[u]\right|_{q=0}=\left.\oh_{\alpha,d}[u]\right|_{q=0}.
\end{gather}
The degree zero part of our cohomological field theory can be described very explicitly. For any $g,a,b\ge 0$, $2g-2+a+b>0$, we have (see e.g.~\cite{GP98})
\begin{gather}\label{eq:cohft in degree 0}
c_{g,a+b,0}(1^{\otimes a}\otimes\omega^{\otimes b})=
\begin{cases}
2(-1)^{g-1}\lambda_{g-1},&\text{if $b=0$},\\
(-1)^g\lambda_g,&\text{if $b=1$},\\
0,&\text{otherwise}.
\end{cases}
\end{gather}
Recall the following fact (see e.g.~\cite{Bur14}).
\begin{lemma}\label{lemma:genus zero parts}
We have $\left.\og_{\alpha,d}[u]\right|_{\eps=0}=\left.\oh_{\alpha,d}[u]\right|_{\eps=0}$.
\end{lemma}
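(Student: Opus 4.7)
My plan is to reduce both sides to their genus-zero pieces and observe that those pieces are encoded by the same genus-zero ancestor descendant integrals of the cohomological field theory.

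For $\og_{\alpha,d}[u]|_{\eps=0}$, I would start from the expansion
$$g_{\alpha,d}=\sum_{\substack{g\ge 0,n\ge 1\\ 2g-1+n>0}}\frac{\eps^{2g}}{n!}\sum_{b_1+\ldots+b_n=2g}P^{b_1,\ldots,b_n}_{\alpha,d,g;\alpha_1,\ldots,\alpha_n}u^{\alpha_1}_{b_1}\cdots u^{\alpha_n}_{b_n}$$
and observe that only the $g=0$ summand survives the specialization. In that summand the constraint $b_1+\cdots+b_n=0$ forces every $b_i=0$, and by \eqref{eq:genus zero DR} the constant $P_{\alpha,d,0;\alpha_1,\ldots,\alpha_n}$ reduces to $\int_{\oM_{0,n+1}}\psi_1^d\,c_{0,n+1}(e_\alpha\otimes e_{\alpha_1}\otimes\cdots\otimes e_{\alpha_n})$. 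Hence $\og_{\alpha,d}[u]|_{\eps=0}$ is the local functional whose density is the standard genus-zero generating series with one $\tau_d$-insertion at $e_\alpha$ and $\tau_0$-insertions at the remaining marked points, viewed as a polynomial in the $u^{\alpha_i}$.

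For $\oh_{\alpha,d}[u]|_{\eps=0}$, I would invoke the general construction of the ancestor Dubrovin-Zhang hierarchy (see \cite{DZ05}, and the quasi-Miura formulation in~\cite{BPS12b}): by design, the $\eps=0$ truncation of each ancestor DZ Hamiltonian is the corresponding Hamiltonian of the Principal Hierarchy of the Frobenius manifold $(V,\eta,c_{0,3})$, whose densities are precisely the above genus-zero ancestor descendant generating series, assembled from the three-point functions by iterated use of the genus-zero topological recursion relations. Matching the two expressions term by term gives equality of densities up to a $\d_x$-exact remainder, which is killed by the integration $\int\,dx$, so the local functionals coincide.

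The main obstacle, and the reason the lemma is not a triviality, is the precise identification of the dispersionless limit of the ancestor DZ Hamiltonians with this concrete Principal Hierarchy: once one writes $\oh_{\alpha,d}|_{\eps=0}$ in terms of three-point genus-zero correlators and unwinds the topological recursion relations used to build it up from $\oh_{\alpha,-1}[u]=\int\eta_{\alpha\mu}u^\mu\,dx$, the comparison becomes a matter of definition-chasing. This is exactly the argument carried out in~\cite{Bur14}, to which the lemma appeals.
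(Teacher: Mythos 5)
Your outline is correct, and it is worth noting that the paper itself gives no proof of this lemma at all: it is introduced with ``Recall the following fact (see e.g.~\cite{Bur14})'', so the only available comparison is with the standard argument from~\cite{Bur14}, which your sketch reconstructs faithfully. On the DR side the reduction is exactly as you say: at $\eps=0$ only $g=0$ survives, all $b_i$ vanish, and by~\eqref{eq:genus zero DR} (together with $\lambda_0=1$) the density becomes $\sum_{n}\frac{1}{n!}\left(\int_{\oM_{0,n+1}}\psi_1^d\,c_{0,n+1}(e_\alpha\otimes e_{\alpha_1}\otimes\cdots\otimes e_{\alpha_n})\right)u^{\alpha_1}\cdots u^{\alpha_n}$, i.e.\ the genus-zero one-point descendant generating function; on the DZ side the dispersionless limit is the Principal Hierarchy, whose density $\left.\tfrac{\d^2 F_0}{\d t^\alpha_{d+1}\d t^1_0}\right|_{t^*_{\ge 1}=0}$ reduces to the same one-point function by the genus-zero string equation. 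The one step you leave implicit and should state is the passage from the $w$- to the $u$-coordinates: the lemma asserts equality of $\oh_{\alpha,d}[u]$, i.e.\ of the DZ Hamiltonian \emph{after} the Miura transformation~\eqref{eq:Miura}, and this is harmless only because that transformation is the identity modulo $\eps^2$, so that $\left.\oh_{\alpha,d}[u]\right|_{\eps=0}=\left.\oh_{\alpha,d}[w]\right|_{\eps=0}$ under the identification $u^\alpha=w^\alpha$. With that remark added, your argument is complete at the same level of rigour as the citation the paper relies on.
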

\noindent In Section~\ref{subsubsection:omega parts} we prove equation~\eqref{eq:degree zero parts} for $\alpha=\omega$ and in Section~\ref{subsubsection:unit parts} we prove it for $\alpha=1$.

\subsubsection{Hamiltonians $\left.\og_{\omega,d}[u]\right|_{q=0}$ and $\left.\oh_{\omega,d}[u]\right|_{q=0}$}\label{subsubsection:omega parts}
Here we prove that $\left.\og_{\omega,d}[u]\right|_{q=0}=\left.\oh_{\omega,d}[u]\right|_{q=0}$.
\begin{lemma}\label{lemma:DR omega Hamiltonian in degree zero}
We have $\left.\og_{\omega,d}[u]\right|_{q=0}=\int\frac{(u^1)^{d+2}}{(d+2)!}dx$.
\end{lemma}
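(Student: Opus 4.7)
The plan is to evaluate the density $g_{\omega,d}$ at $q=0$ directly from definition~\eqref{density} using the explicit description~\eqref{eq:cohft in degree 0} of the degree zero part of the cohomological field theory. Since the first tensor slot of $c_{g,n+1,0}(\omega\otimes e_{\alpha_1}\otimes\cdots\otimes e_{\alpha_n})$ is already $\omega$, any additional $\alpha_i=\omega$ would push the count of $\omega$-insertions to $b\geq 2$ and annihilate the class by~\eqref{eq:cohft in degree 0}. Hence only the summands with $\alpha_1=\cdots=\alpha_n=1$ contribute, and for these the class equals $(-1)^g\lambda_g$. Together with the $\lambda_g$ factor already present in~\eqref{density}, the relevant integrand becomes $(-1)^g\psi_1^d\lambda_g^2$ over $\DR_g(-\sum a_i,a_1,\ldots,a_n)$.

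Next I would invoke Mumford's relation $c(E)\cdot c(E^*)=1$ on $\oM_{g,n+1}$: extracting the degree $2g$ component, the only term with $0\le i,j\le g$ and $i+j=2g$ is $i=j=g$, so $(-1)^g\lambda_g^2=0$, i.e.\ $\lambda_g^2=0$ for all $g\geq 1$. This collapses the sum to its genus zero piece. In genus zero, equation~\eqref{eq:genus zero DR} gives $\DR_0(-\sum a_i,a_1,\ldots,a_n)=[\oM_{0,n+1}]$ independently of the ramification data, and since $\dim\oM_{0,n+1}=n-2$ one has $\int_{\oM_{0,n+1}}\psi_1^d=\delta_{n,d+2}$ by the standard evaluation $\int_{\oM_{0,m}}\psi_1^{m-3}=1$.

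Substituting back into~\eqref{density} produces
$$
\left.g_{\omega,d}\right|_{q=0}=\frac{1}{(d+2)!}\sum_{a_1,\ldots,a_{d+2}\in\mbZ}p^1_{a_1}\cdots p^1_{a_{d+2}}\,e^{ix\sum a_j}=\frac{(u^1)^{d+2}}{(d+2)!},
$$
where the second equality uses $u^1(x)=\sum_a p^1_a e^{iax}$. Integrating in $x$ yields the claimed formula for $\og_{\omega,d}|_{q=0}$. The only nontrivial ingredient is the Mumford vanishing $\lambda_g^2=0$ for $g\geq 1$; once that is in hand, the rest is direct bookkeeping of the definition of the density together with the elementary $g=0$ intersection number.
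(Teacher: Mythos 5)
Your proof is correct, and it diverges from the paper's argument in an instructive way. You share the first step: formula~\eqref{eq:cohft in degree 0} forces all the non-distinguished insertions to be the unit (any extra $\omega$ gives $b\ge 2$ and kills the class), so the integrand becomes $(-1)^g\psi_1^d\lambda_g^2$, which Mumford's relation annihilates for $g\ge 1$. From there the paper does \emph{not} compute the surviving genus-zero part from scratch: it observes that $\og_{\omega,d}[u]|_{q=0}=\og_{\omega,d}[u]|_{q=\eps=0}$ and then invokes Lemma~\ref{lemma:genus zero parts} (the known coincidence of the dispersionless parts of the DR and DZ hierarchies, from~\cite{Bur14}) together with the extended-Toda computation~\eqref{eq:DZ omega Hamiltonian in degree 0} of $\oh_{\omega,d}[u]|_{q=0}$. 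You instead evaluate the genus-zero contribution directly from definition~\eqref{density}, using $\DR_0=[\oM_{0,n+1}]$ from~\eqref{eq:genus zero DR} and $\int_{\oM_{0,d+3}}\psi_1^{d}=1$, which correctly yields the single surviving term $n=d+2$ with coefficient $\tfrac{1}{(d+2)!}$ and hence $\tfrac{(u^1)^{d+2}}{(d+2)!}$. Your route is more self-contained and elementary for this particular Hamiltonian; the paper's route is shorter given the machinery it has already set up (and must set up anyway for the $\alpha=1$ case, where a direct genus-zero computation would not suffice because nontrivial higher-genus $\lambda_g\lambda_{g-1}$ terms survive). Both are valid proofs of the lemma.
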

\begin{proof}
For any $g\ge 1$, we have $\lambda_g^2=0$. Therefore, from~\eqref{eq:cohft in degree 0} we can immediately conclude that
\begin{gather}\label{eq:omega hamiltonian in degree 0}
\left.\og_{\omega,d}[u]\right|_{q=0}=\left.\og_{\omega,d}[u]\right|_{q=\eps=0}\stackrel{\substack{\text{by Lemma~\ref{lemma:genus zero parts}}\\\text{and eq.~\eqref{eq:DZ omega Hamiltonian in degree 0}}}}{=}\int\frac{(u^1)^{d+2}}{(d+2)!}dx.
\end{gather}
\end{proof}
By~\eqref{eq:DZ omega Hamiltonian in degree 0}, we get $\left.\og_{\omega,d}[u]\right|_{q=0}=\left.\oh_{\omega,d}[u]\right|_{q=0}$.

\subsubsection{Hamiltonians $\left.\og_{1,d}[u]\right|_{q=0}$ and $\left.\oh_{1,d}[u]\right|_{q=0}$}\label{subsubsection:unit parts}

The goal of this section is to prove that
\begin{gather}\label{eq:DZ1 and DR1 Hamiltonians in degree 0}
\left.\og_{1,d}[u]\right|_{q=0}=\left.\oh_{1,d}[u]\right|_{q=0}.
\end{gather}
The proof of this fact is not so direct, as in the previous section. We begin with the following lemma.

\begin{lemma}\label{lemma:DR 1 Hamiltonians in degree 0}
The local functionals $\left.\og_{1,d}[u]\right|_{q=0}$ have the following form
\begin{gather*}
\left.\og_{1,d}[u]\right|_{q=0}=\int\left(\frac{(u^1)^{d+1}u^\omega}{(d+1)!}+\sum_{g\ge 1}\eps^{2g}f_{d,g}(u^1)\right)dx,
\end{gather*}
for some differential polynomials $f_{d,g}\in\mcA_{u^1}$, $\deg f_{d,g}=2g$.
\end{lemma}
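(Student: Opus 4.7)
The plan is to split the statement into two independent claims: the $\eps^0$ coefficient equals $\int\frac{(u^1)^{d+1}u^\omega}{(d+1)!}dx$, and for every $g\ge 1$ the $\eps^{2g}$ coefficient lies in $\mcA_{u^1}$. The degree assertion $\deg f_{d,g}=2g$ will then be immediate from the polynomiality of~\eqref{integral for the density} already recorded in Section~2.

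For the $\eps^0$ piece I would invoke Lemma~\ref{lemma:genus zero parts}, which gives $\left.\og_{1,d}[u]\right|_{\eps=0}=\left.\oh_{1,d}[u]\right|_{\eps=0}$, together with the form of $\left.\oh_{1,d}[u]\right|_{q=0}$ written in~\eqref{eq:DZ 1 Hamiltonian in degree 0}. Setting $\eps=0$ in the latter leaves precisely $\int\frac{(u^1)^{d+1}u^\omega}{(d+1)!}dx$, which matches the claimed leading term.

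The main content is the $g\ge 1$ part, and here I would go back to the definition~\eqref{density} of $g_{1,d}$ and feed in the explicit description~\eqref{eq:cohft in degree 0} of the degree-zero Gromov-Witten cohomological field theory of~$\CP1$. Because $g_{1,d}$ carries $e_1$ at the distinguished marked point, the insertions in the remaining $n$ slots determine the number $b$ of $\omega$'s. For $b\ge 2$ the class $c_{g,n+1,0}$ vanishes outright. For $b=1$ the class equals $(-1)^g\lambda_g$, so the integrand in~\eqref{density} becomes a multiple of $\lambda_g^2\psi_1^d$, which is zero for every $g\ge 1$. Hence the only contributions with $g\ge 1$ and $q=0$ come from $b=0$, where the class is $2(-1)^{g-1}\lambda_{g-1}$ and every $\alpha_i$ equals $1$; the resulting monomials involve only the variables $p^1_{a_i}$ and translate into a differential polynomial in $u^1$ alone, with no $u^\omega$-dependence. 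This is precisely the claim $f_{d,g}\in\mcA_{u^1}$.

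Finally, homogeneity $\deg f_{d,g}=2g$ follows from the fact recalled in Section~2 that $P_{1,d,g;\alpha_1,\ldots,\alpha_n}(a_1,\ldots,a_n)$ is homogeneous of degree $2g$ in the ramification numbers, so after the substitution $a_i^j\leftrightarrow$ $j$-th $x$-derivative the resulting differential polynomial has degree $2g$ under $\deg u^\alpha_i=i$. The main potential obstacle is making sure the $b=1$ vanishing really exhausts the only way $u^\omega$ could appear in the $g\ge 1$ part; once one observes that $u^\omega$-dependence is equivalent to the presence of at least one $\omega$ among the $\alpha_i$ and that $b\ge 2$ is ruled out by~\eqref{eq:cohft in degree 0}, the $\lambda_g^2=0$ argument cleanly removes the $b=1$ case.
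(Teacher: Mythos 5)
Your argument is correct and follows essentially the same route as the paper: the genus-zero identification via Lemma~\ref{lemma:genus zero parts} and~\eqref{eq:DZ 1 Hamiltonian in degree 0}, and then the case analysis on the number of $\omega$-insertions using~\eqref{eq:cohft in degree 0} together with $\lambda_g^2=0$ to reduce the $g\ge 1$ part to monomials in the $p^1_{a_i}$ alone. The homogeneity statement you add at the end is indeed the content the paper leaves implicit in ``the lemma is now clear.''
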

\begin{proof}
By Lemma~\ref{lemma:genus zero parts} and equation~\eqref{eq:DZ 1 Hamiltonian in degree 0}, we have $\left.\og_{1,d}[u]\right|_{q=\eps=0}=\int\frac{(u^1)^{d+1}u^\omega}{(d+1)!}dx$. 
Using~\eqref{eq:cohft in degree 0} and the fact that, for $g\ge 1$, $\lambda_g^2=0$, we obtain
\begin{gather}\label{eq:g1d general form}
\left.\og_{1,d}[u]\right|_{q=0}=\int\frac{(u^1)^{d+1}u^\omega}{(d+1)!}dx-2\sum_{\substack{g\ge 1\\n\ge 2}}\frac{\eps^{2g}}{n!}\sum_{\substack{a_1,\ldots,a_n\in\mbZ\\\sum a_i=0}}\left(\int_{\DR_g(0,a_1,\ldots,a_n)}\psi_1^d\lambda_g\lambda_{g-1}\right)\prod_{i=1}^n p^1_{a_i}.
\end{gather}
Note that the sum on the right-hand side of this equation contains only monomials with the variables~$p^1_i$. The lemma is now clear.
\end{proof}

\begin{lemma}\label{lemma:DR 1,1 Hamiltonian in degree 0}
We have $\left.\og_{1,1}[u]\right|_{q=0}=\int\left(\frac{(u^1)^2 u^\omega}{2}+\sum_{g\ge 1}\eps^{2g}\frac{B_{2g}}{(2g)!}u^1u^1_{2g}\right)dx$.
\end{lemma}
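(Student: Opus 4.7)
The plan is to reduce the claim to the formula for $\og_1$ of the full Hodge cohomological field theory recalled in the ILW example above, namely
\[
\og_1^{\mathrm{Hodge}} \;=\; \int\left(\frac{u^3}{6} + \sum_{g\ge 1}\eps^{2g}\ell^{g-1}\frac{|B_{2g}|}{2(2g)!}\,u\,u_{2g}\right)dx.
\]
The starting point on the $\CP1$ side is equation~\eqref{eq:g1d general form} specialized to $d=1$: it expresses $\left.\og_{1,1}[u]\right|_{q=0}$ as $\int\tfrac{(u^1)^2u^\omega}{2}\,dx$ plus, for each $g\ge 1$, the local functional
\[
-2\,\eps^{2g}\sum_{n\ge 2}\frac{1}{n!}\sum_{\substack{a_1,\ldots,a_n\in\mbZ\\ \sum a_i=0}}\left(\int_{\DR_g(0,a_1,\ldots,a_n)}\psi_1\lambda_g\lambda_{g-1}\right)\prod_{i=1}^n p^1_{a_i}.
\]
So the task is to identify this sum, as a local functional in $u^1$, with $\tfrac{B_{2g}}{(2g)!}u^1 u^1_{2g}$.

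To do so, I would expand the Hodge class $c_{g,n+1}(e_1^{\otimes(n+1)})=1+\ell\lambda_1+\ldots+\ell^g\lambda_g$ inside the definition~\eqref{density} of the density $g_{1,1}^{\mathrm{Hodge}}$ and extract the coefficient of $\ell^{g-1}$ at genus $g$. After taking the constant Fourier term (to pass to a local functional), this coefficient equals
\[
(-1)^g\sum_{n\ge 1}\frac{1}{n!}\sum_{\substack{a_1,\ldots,a_n\in\mbZ\\ \sum a_i=0}}\int_{\DR_g(0,a_1,\ldots,a_n)}\lambda_g\psi_1\lambda_{g-1}\prod p^1_{a_i}.
\]
Reading off the coefficient of $\ell^{g-1}\eps^{2g}$ from the Hodge formula for $\og_1^{\mathrm{Hodge}}$ then gives the identity
\[
(-1)^g\sum_{n\ge 1}\frac{1}{n!}\sum_{\sum a_i=0}\int_{\DR_g(0,a_1,\ldots,a_n)}\lambda_g\psi_1\lambda_{g-1}\prod p^1_{a_i} \;=\; \frac{|B_{2g}|}{2(2g)!}\int u^1 u^1_{2g}\,dx.
\]

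The only mismatch between this identity and the sum appearing in~\eqref{eq:g1d general form} is the presence of the $n=1$ term on the left-hand side. For it we would need $a_1=0$, and the relevant intersection number is $\int_{\DR_g(0,0)}\psi_1\lambda_g\lambda_{g-1}$; by Hain's formula~\eqref{eq:Hain's formula}, $\DR_g(0,0)|_{\oM_{g,2}^{ct}}$ is the $g$-th power of an expression that vanishes identically when all ramifications equal $0$, hence equals zero for $g\ge 1$, and since $\lambda_g$ is supported on compact type the integral vanishes. Thus the Hodge identity truncates to $n\ge 2$; multiplying through by $-2(-1)^g$ and using the sign identity $B_{2g}=(-1)^{g+1}|B_{2g}|$ turns the right-hand side into $\tfrac{B_{2g}}{(2g)!}\int u^1 u^1_{2g}\,dx$, and substituting back into~\eqref{eq:g1d general form} gives the claimed formula. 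The only nontrivial ingredient is the Hodge formula for $\og_1^{\mathrm{Hodge}}$ itself (proved in \cite{Bur13} via explicit Hodge integral evaluations of Faber--Pandharipande type); everything else is bookkeeping, and this is indeed where I would expect any subtlety to lie.
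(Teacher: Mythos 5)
Your argument is correct, but it swaps out the paper's key input for a different one. The paper also starts from~\eqref{eq:g1d general form} with $d=1$; dimension counting (the class $\psi_1\lambda_g\lambda_{g-1}\DR_g$ has degree $3g+n-2$ only for $n=2$) leaves a single quadratic term, and the paper then evaluates $\int_{\DR_g(0,a,-a)}\psi_1\lambda_g\lambda_{g-1}=2g\int_{\DR_g(a,-a)}\lambda_g\lambda_{g-1}=a^{2g}\tfrac{|B_{2g}|}{(2g)!}$ directly, using the pushforward $\pi_*\psi_1=2g-2+n$ and the evaluation of $\int_{\DR_g(a,-a)}\lambda_g\lambda_{g-1}$ taken from~\cite{CMW12}. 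You instead match the entire genus-$g$ tail of~\eqref{eq:g1d general form} against the coefficient of $\ell^{g-1}\eps^{2g}$ in the Hodge Hamiltonian $\og_1$ quoted in the ILW example, after disposing of the extra $n=1$ term (which indeed vanishes, both by your Hain's-formula argument and more simply by dimension, since $\psi_1\lambda_g\lambda_{g-1}\DR_g$ has degree $3g$ on the $(3g-1)$-dimensional $\oM_{g,2}$). Your bookkeeping is right: $\int u^1u^1_{2g}\,dx=(-1)^g\sum_a a^{2g}p^1_ap^1_{-a}$ and $B_{2g}=(-1)^{g+1}|B_{2g}|$ produce exactly the claimed sign. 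What your route buys is that no new intersection number has to be computed and no $\psi$-pushforward is needed; what it costs is that the Hodge formula for $\og_1$ is itself established in~\cite{Bur14},~\cite{Bur13} by precisely the evaluation of $\int_{\DR_g(a,-a)}\lambda_g\lambda_{g-1}$ that the paper cites here, so the two proofs ultimately rest on the same Hodge integral, yours one citation further removed.
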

\begin{proof}
By~\eqref{eq:g1d general form}, we have
$$
\left.\og_{1,1}[u]\right|_{q=0}=\int\frac{(u^1)^2 u^\omega}{2}dx-\sum_{g\ge 1}\eps^{2g}\sum_{a\in\mbZ}\left(\int_{\DR_g(0,a,-a)}\psi_1\lambda_g\lambda_{g-1}\right) p^1_ap^1_{-a}.
$$
For $g\ge 1$, we have $\int_{\DR_g(0,a,-a)}\psi_1\lambda_g\lambda_{g-1}=2g\int_{\DR_g(a,-a)}\lambda_g\lambda_{g-1}=a^{2g}\frac{|B_{2g}|}{(2g)!}$, where the computation of the last integral can be found, for example, in~\cite{CMW12}. We obtain
$$
\left.\og_{1,1}[u]\right|_{q=0}=\int\left(\frac{(u^1)^2 u^\omega}{2}+\sum_{g\ge 1}\eps^{2g}\frac{B_{2g}}{(2g)!}u^1u^1_{2g}\right)dx.
$$
The lemma is proved.
\end{proof}

After this preparation we are ready for proving equation~\eqref{eq:DZ1 and DR1 Hamiltonians in degree 0}. Let 
$$
f_d(u^1;\eps):=\sum_{g\ge 1}\eps^{2g}f_{d,g}(u^1)\in\hcA^{[0]}_{u^1}.
$$
Let us expand the relation $\{\og_{1,d}[u],\og_{1,1}[u]\}_{\eta\d_x}=0$ using Lemmas~\ref{lemma:DR 1 Hamiltonians in degree 0} and~\ref{lemma:DR 1,1 Hamiltonian in degree 0}. We get
\begin{gather}\label{eq:equation for 1-Hamiltonians}
\int\left(\frac{\delta f_d}{\delta u^1}\d_x\left(\frac{(u^1)^2}{2}\right)+\frac{(u^1)^{d+1}}{(d+1)!}\d_x\left(2\sum_{g\ge 1}\eps^{2g}\frac{B_{2g}}{(2g)!}u^1_{2g}\right)\right)dx.
\end{gather}
Introduce a local functional $\os_d[u^1]\in\hLambda^{[0]}_{u^1}$ by $\os_d[u^1]:=\int f_d(u^1;\eps)dx$. Equation~\eqref{eq:equation for 1-Hamiltonians} can be rewritten as follows:
\begin{gather*}
\left\{\os_d,\int\frac{(u^1)^3}{6}dx\right\}_{\d_x}+2\int\left(\frac{(u^1)^{d+1}}{(d+1)!}\sum_{g\ge 1}\eps^{2g}\frac{B_{2g}}{(2g)!}u^1_{2g+1}\right)dx=0.
\end{gather*}
From~\cite[Lemma~2.5]{Bur13} it follows that this equation uniquely determines the local functional~$\os_d[u^1]$ and, therefore, the Hamiltonian~$\og_{1,d}[u]|_{q=0}$. Observe that the same argument works for the Dubrovin-Zhang Hamiltonians $\oh_{1,d}[u]$. Equation~\eqref{eq:DZ 1 Hamiltonian in degree 0} says that the Hamiltonian~$\oh_{1,d}[u]|_{q=0}$ has the same form as the Hamiltonian~$\og_{1,d}[u]|_{q=0}$. Moreover, by~\eqref{eq:DZ 1,1 Hamiltonian in degree 0}, we have $\og_{1,1}[u]|_{q=0}=\oh_{1,1}[u]|_{q=0}$. Since Miura transformation~\eqref{eq:Miura} transforms the hamiltonian operator of the ancestor Dubrovin-Zhang hierarchy to the operator~$\eta\d_x$, we have $\{\oh_{1,d}[u],\oh_{1,1}[u]\}_{\eta\d_x}=0$. We conclude that $\og_{1,d}[u]|_{q=0}=\oh_{1,d}[u]|_{q=0}$.


\subsection{Hamiltonian $\og_{\omega,0}[u]$}\label{subsection:omega flow}

The goal of this section is to prove that
\begin{gather}\label{eq:Hamiltonian g0,omega}
\og_{\omega,0}[u]=\int\left(\frac{(u^1)^2}{2}+q\left(e^{S(\eps\d_x)u^\omega}-u^\omega\right)\right)dx.
\end{gather}
We start with the following lemma.
\begin{lemma}\label{lemma:omega,0-Hamiltonian:general form}
The Hamiltonian $\og_{\omega,0}$ has the form $\og_{\omega,0}[u]=\int\left(\frac{(u^1)^2}{2}+q f(u^\omega;\eps)\right)dx$, for a differential polynomial $f\in\hcA^{[0]}_{u^\omega}$ such that $\left.\frac{\d f}{\d u^\omega_i}\right|_{u^\omega_*=0}=0$.
\end{lemma}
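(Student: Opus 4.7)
The plan is to split $\og_{\omega,0}$ according to the power of $q$ (Gromov--Witten degree $\beta$) and use a dimension count to cut the possibilities down drastically. For fixed $\beta$, $g$, $n$, and labels $\alpha_1,\dots,\alpha_n\in\{1,\omega\}$, the corresponding summand of $g_{\omega,0}$ is built from integrating $\lambda_g\cdot c_{g,n+1,\beta}(\omega\otimes e_{\alpha_1}\otimes\dots\otimes e_{\alpha_n})$ over $\DR_g$ inside $\oM_{g,n+1}$. Using $\dim_{\mbC}\oM_{g,n+1}=3g+n-2$, the virtual dimension $\mathrm{vdim}\,\oM_{g,n+1}(\CP1,\beta)=2g+2\beta+n-1$, and that $\DR_g$ and $\lambda_g$ each have codimension $g$, the Gromov--Witten class has codimension $g+k-2\beta$, where $k$ counts the $\omega$'s among $\alpha_1,\dots,\alpha_n$. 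Matching total codimension with $\dim_{\mbC}\oM_{g,n+1}$ forces $k=n+2\beta-2$. Since $0\le k\le n$, only $\beta=0$ (with $k=n-2$) and $\beta=1$ (with $k=n$) survive, so $\og_{\omega,0}=\og^{(0)}_{\omega,0}+q\,\og^{(1)}_{\omega,0}$.

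Next I would identify each piece. For $\beta=0$ I would invoke Lemma~\ref{lemma:DR omega Hamiltonian in degree zero}, which gives $\og^{(0)}_{\omega,0}=\int\frac{(u^1)^2}{2}\,dx$. For $\beta=1$, the constraint $k=n$ forces every $\alpha_i$ to equal $\omega$, so only $p^\omega$-variables appear in the generating series; after converting to $u$-variables via $u^\omega_j=\sum_k(ik)^j p^\omega_k e^{ikx}$ the coefficient of $q$ becomes a differential polynomial $f=f(u^\omega;\eps)\in\hcA^{[0]}_{u^\omega}$. This already yields the shape $\og_{\omega,0}=\int\bigl(\frac{(u^1)^2}{2}+qf\bigr)\,dx$.

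It remains to arrange $\left.\frac{\d f}{\d u^\omega_i}\right|_{u^\omega_*=0}=0$ for every $i\ge 0$ by modifying $f$ with $\d_x$-exact terms, which do not change the local functional. No constant term is present, since the sum in~\eqref{density} is restricted to $n\ge 1$. The terms of $f$ that are linear in a single $u^\omega_j$ can only come from $n=1$, in which case the condition $2g-1+n>0$ forces $g\ge 1$. Their coefficient, as a polynomial in $a$, is $P_g(a):=\int_{\DR_g(-a,a)}\lambda_g\,c_{g,2,1}(\omega\otimes\omega)$, of degree $2g$. Writing $P_g(a)=\sum_{b\ge 0}P_{g,b}a^b$ and converting to $u$-variables gives $\sum_b P_{g,b}(-i)^b u^\omega_b$; the contributions with $b\ge 1$ are manifestly $\d_x$-exact, so the only obstruction to gauge-fixing is the $b=0$ piece $P_g(0)\,u^\omega$.

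The main (and essentially only nontrivial) point is to show $P_g(0)=0$ for every $g\ge 1$, which is immediate from Hain's formula~\eqref{eq:Hain's formula}: specialized to $n=2$ markings and $a_1=a_2=0$, each summand on the right-hand side carries a factor of the form $a_j^2$, $a_ia_j$, or $a_J^2$ that vanishes, so $\DR_g(0,0)|_{\M^{ct}_{g,2}}=0$; since $\lambda_g$ vanishes on $\oM_{g,2}\setminus\M^{ct}_{g,2}$, the integral $P_g(0)$ also vanishes. Once this is established, one cancels the $b\ge 1$ linear terms by subtracting the appropriate $\d_x$-exact representative, producing an $f$ with the stated vanishing property and completing the proof.
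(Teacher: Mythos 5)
Your proof is correct and follows essentially the same route as the paper: a dimension count (you phrase it via the virtual dimension of $\oM_{g,n+1}(\CP1,\beta)$, the paper via the degree formula for $c_{g,n,d}$, which is equivalent) forces $\beta\le 1$, the $\beta=0$ part is Lemma~\ref{lemma:DR omega Hamiltonian in degree zero}, and the $\beta=1$ part involves only $\omega$-insertions. Your extra care with the linear terms is a welcome filling-in of the paper's unexplained restriction to $n\ge 2$, though it can be shortened: the coefficient $P_g(a)$ is homogeneous of degree $2g$, so for $g\ge1$ the $n=1$ contribution is a multiple of $u^\omega_{2g}$, which is already $\d_x$-exact, and $P_g(0)=0$ is automatic.
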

\begin{proof}
Denote by~$\deg$ the cohomological degree. We have the following formula (\cite{KM94}):
\begin{gather}\label{eq:dimension}
\deg c_{g,n,d}(\gamma_1\otimes\ldots\otimes\gamma_n)=2(g-1-2d)+\sum_i\deg\gamma_i,\quad \gamma_i\in\{1,\omega\}.
\end{gather}
In order to compute the Hamiltonian $\og_{0,\omega}$, we have to compute the integrals
$$
\int_{\DR_g(0,a_1,\ldots,a_n)}\lambda_g c_{g,n+1,d}(\omega\otimes \gamma_1\otimes\ldots\otimes\gamma_n).
$$
From~\eqref{eq:dimension} it follows that this integral can be nonzero, only if 
\begin{gather}\label{eq:counting dimensions}
\half\sum_{i=1}^n\deg\gamma_i=n-2+2d.
\end{gather}
Since $\deg\gamma_i\le 2$, we get $d\le 1$. The case $d=0$ is described by Lemma~\ref{lemma:DR omega Hamiltonian in degree zero}. Suppose $d=1$. Then equation~\eqref{eq:counting dimensions} immediately implies that $\gamma_1=\gamma_2=\ldots=\gamma_n=\omega$. We get
$$
\og_{\omega,0}[u]=\int\frac{(u^1)^2}{2}dx+q\sum_{\substack{g\ge 0\\n\ge 2}}\frac{(-\eps^2)^g}{n!}\sum_{\substack{a_1,\ldots,a_n\in\mbZ\\\sum a_i=0}}\left(\int_{\DR_g(0,a_1,\ldots,a_n)}\lambda_g c_{g,n+1,1}(\omega^{\otimes n})\right)\prod_{i=1}^np^\omega_{a_i}.
$$
Note that the sum on the right-hand side of this equation contains only monomials with the variables~$p^\omega_i$. The lemma is now clear.
\end{proof}

Let us prove~\eqref{eq:Hamiltonian g0,omega}. By Lemma~\ref{lemma:omega,0-Hamiltonian:general form}, we have
\begin{gather}\label{eq:tmp1}
\og_{\omega,0}[u]=\int\left(\frac{(u^1)^2}{2}+q f(u^\omega;\eps)\right)dx.
\end{gather}
From Lemma~\ref{lemma:simple divisor} and formula~\eqref{eq:three-point} it follows that $\og[u]-\og[u]|_{q=0}=\int q\left(f-\frac{(u^\omega)^2}{2}\right)dx$. By the dilaton equation~\eqref{eq:dilaton for og}, we have  
\begin{multline}\label{eq:tmp2}
\og_{1,1}[u]=(D-2)\og[u]=\og_{1,1}[u]|_{q=0}+\int q(D-2)fdx\stackrel{\text{by Lemma~\ref{lemma:DR 1,1 Hamiltonian in degree 0}}}{=}\\
=\int\left(\frac{(u^1)^2 u^\omega}{2}+\sum_{g\ge 1}\eps^{2g}\frac{B_{2g}}{(2g)!}u^1u^1_{2g}+q(D-2)f\right)dx.\end{multline}
Recall that the operator $D$ is defined by $D:=\sum_{n\ge 0}(n+1)u^\alpha_n\frac{\d}{\d u^\alpha_n}$. We have the equation~$\{\og_{\omega,0},\og_{1,1}\}=0$. Let us write the coefficient of $q$ in this equation using formulae~\eqref{eq:tmp1} and~\eqref{eq:tmp2}. We get
\begin{gather*}
\int\left(u^1\d_x\frac{\delta}{\delta u^\omega}(D-2)f-\left(\d_x\frac{\delta f}{\delta u^\omega}\right)\left(u^1 u^\omega+2\sum_{g\ge 1}\frac{B_{2g}}{(2g)!}u^1_{2g}\right)\right)dx=0.
\end{gather*}
Let us apply the variational derivative~$\frac{\delta}{\delta u^1}$ to the left-hand side of this expression. We obtain
\begin{gather}\label{eq:relation for f}
\d_x\frac{\delta}{\delta u^\omega}(D-2)f=u^\omega\d_x\frac{\delta f}{\delta u^\omega}+2\sum_{g\ge 1}\frac{B_{2g}}{(2g)!}\d_x^{2g+1}\frac{\delta f}{\delta u^\omega}=0.
\end{gather}
We need to prove that $\int f(u^\omega)dx=\int\left(e^{S(\eps\d_x)u^\omega}-u^\omega\right)dx$. It is sufficient to prove that
\begin{gather}\label{eq:variational derivative for omega}
\frac{\delta f}{\delta u^\omega}\stackrel{?}{=}\frac{\delta}{\delta u^\omega}\left(e^{S(\eps\d_x)u^\omega}-u^\omega\right)=S(\eps\d_x)e^{S(\eps\d_x)u^\omega}-1.
\end{gather}
Let $r(u^\omega;\eps):=\frac{\delta f}{\delta u^\omega}$. Since $[\frac{\delta}{\delta u^\alpha},D]=\frac{\delta}{\delta u^\alpha}$ and $[\d_x,D]=-\d_x$, equation~\eqref{eq:relation for f} implies that
\begin{gather}\label{eq:omega-recursion}
(D-2)\d_x r=u^\omega\d_x r+2\sum_{g\ge 1}\eps^{2g}\frac{B_{2g}}{(2g)!}\d_x^{2g+1}r.
\end{gather}
Let us introduce another grading $\deg_{ord}$ in $\mcA_{u^\omega}$ putting $\deg_{ord}u^\omega_i=1$. Let us decompose the differential polynomial $r$ into the sum of components that are homogeneous with respect to both the differential degree~$\deg$ and the new degree~$\deg_{ord}$:
$$
r_{d,g}=\sum_{\substack{g\ge 0\\d\ge 1}}r_{d,g}\eps^{2g},\quad r_{d,g}\in\mcA_{u^\omega},\quad\deg_{ord} r_{d,g}=d,\quad\deg r_{d,g}=2g.
$$
Let us, by definition, put $r_{0,g}:=0$. Then equation~\eqref{eq:omega-recursion} implies that, for any $d\ge 1$ and $g\ge 0$, we have
\begin{gather}\label{eq:main omega recursion}
(d+2g-1)\d_x r_{d,g}=u^\omega\d_x r_{d-1,g}+2\sum_{g_1=1}^g \frac{B_{2g}}{(2g)!}\d_x^{2g_1+1}r_{d,g-g_1}.
\end{gather}
Recall that the operator $\d_x\colon\mcA_{u^\omega}\to \mcA_{u^\omega}$ vanishes only on constants. Then it is easy to see that equation~\eqref{eq:main omega recursion} allows to reconstruct all differential polynomials $r_{d,g}$ starting from $r_{1,0}$. From~\eqref{eq:three-point} it follows that $r_{1,0}=u^\omega$.

We see that, in order to prove~\eqref{eq:variational derivative for omega}, it remains to check that $r=S(\eps\d_x)e^{S(\eps\d_x)u^\omega}-1$ satisfies equation~\eqref{eq:omega-recursion}. So we have to prove the following identity:
\begin{gather*}
(D-2)\d_x S(\eps\d_x)e^{S(\eps\d_x)u^\omega}\stackrel{?}{=}u\d_x S(\eps\d_x)e^{S(\eps\d_x)u^\omega}+2\sum_{g\ge 1}\eps^{2g}\frac{B_{2g}}{(2g)!}\d_x^{2g+1}S(\eps\d_x)e^{S(\eps\d_x)u^\omega}.
\end{gather*}
In order to shorten the computations a little bit, let us make the rescaling $x\mapsto\eps x$ and denote $u^\omega$ by~$u$. Thus, we have to prove the identity
\begin{gather}\label{eq:identity1}
(D-2)\d_x S(\d_x)e^{S(\d_x)u}\stackrel{?}{=}u\d_x S(\d_x)e^{S(\d_x)u}+2\sum_{g\ge 1}\frac{B_{2g}}{(2g)!}\d_x^{2g+1}S(\d_x)e^{S(\d_x)u}.
\end{gather}
We have $\sum_{g\ge 0}\frac{B_{2g}}{(2g)!}z^{2g}=\frac{z}{2}\frac{e^{\halfz}+e^{-\halfz}}{e^{\halfz}-e^{-\halfz}}$. Therefore,~\eqref{eq:identity1} is equivalent to 
\begin{gather}\label{eq:identity2}
D\d_x S(\d_x)e^{S(\d_x)u}\stackrel{?}{=}u\d_x S(\d_x)e^{S(\d_x)u}+\d_x\frac{e^{\halfdx}+e^{-\halfdx}}{e^{\halfdx}-e^{-\halfdx}}\d_x S(\d_x)e^{S(\d_x)u}.
\end{gather}
It is easy to see that the right-hand side is equal to 
$$
u\left(e^{\frac{\d_x}{2}}-e^{-\halfdx}\right)e^{S(\d_x)u}+\d_x\left(e^{\halfdx}+e^{-\halfdx}\right)e^{S(\d_x)u}.
$$
Let us transform the left-hand side of~\eqref{eq:identity2}. Since $[D,\d_x]=\d_x$, we get 
\begin{align*}
[D,\d_x S(\d_x)]=&\left[D,e^{\halfdx}-e^{-\halfdx}\right]=\frac{\d_x}{2}\left(e^{\halfdx}+e^{-\halfdx}\right),\\
D\d_x S(\d_x)e^{S(\d_x)u}=&\left(e^{\halfdx}-e^{-\halfdx}\right)D e^{S(\d_x)u}+\frac{\d_x}{2}\left(e^{\halfdx}+e^{-\halfdx}\right)e^{S(\d_x)u}.
\end{align*}
We conclude that~\eqref{eq:identity2} is equivalent to
\begin{gather}\label{eq:identity3}
\left(e^{\halfdx}-e^{-\halfdx}\right)D e^{S(\d_x)u}\stackrel{?}{=}u\left(e^{\halfdx}-e^{-\halfdx}\right)e^{S(\d_x)u}+\frac{\d_x}{2}\left(e^{\halfdx}+e^{-\halfdx}\right)e^{S(\d_x)u}.
\end{gather}
We have $D e^{S(\d_x)u}=\frac{1}{2}\left(e^{\halfdx}+e^{-\halfdx}\right)u\cdot e^{S(\d_x)u}$. Therefore, the left-hand side of~\eqref{eq:identity3} is equal to
\begin{gather}\label{last1}
\left(e^{\halfdx}-e^{-\halfdx}\right)\left[\frac{1}{2}\left(e^{\halfdx}+e^{-\halfdx}\right)u\cdot e^{S(\d_x)u}\right]=\half\left(e^{\d_x}+1\right)u\cdot e^{\frac{e^{\d_x}-1}{\d_x}u}-\half\left(1+e^{-\d_x}\right)u\cdot e^{\frac{1-e^{-\d_x}}{\d_x}u}.
\end{gather}
On the other hand, the right-hand side of~\eqref{eq:identity3} is equal to
\begin{align}
&u\left(e^{\halfdx}-e^{-\halfdx}\right)e^{S(\d_x)u}+\half\left(e^{\halfdx}+e^{-\halfdx}\right)\left[\left(e^{\halfdx}-e^{-\halfdx}\right)u\cdot e^{S(\d_x)u}\right]=\notag\\
=&u\left(e^{\frac{e^{\d_x}-1}{\d_x}u}-e^{\frac{1-e^{-\d_x}}{\d_x}u}\right)+\half\left(e^{\d_x}-1\right)u\cdot e^{\frac{e^{\d_x}-1}{\d_x}u}+\half\left(1-e^{-\d_x}\right)u\cdot e^{\frac{1-e^{-\d_x}}{\d_x}u}.\label{last2}
\end{align}
It is easy to see that~\eqref{last1} is equal to~\eqref{last2}. Equation~\eqref{eq:Hamiltonian g0,omega} is finally proved.


\subsection{Final step}\label{subsection:final step}

In this section we prove that $\og_{\alpha,d}[u]=\oh_{\alpha,d}[u]$. Recall that by~$(u^{str})^\alpha(x,t;\eps;q)$ (see Section~\ref{subsection:property of the string solution}) we denote the string solution of the double ramification hierarchy for $\CP1$ and by~$(u^{top})^\alpha(x,t;\eps;q)$ we denote the Miura transform of the topological solution of the ancestor Dubrovin-Zhang hierarchy for $\CP1$ (see Section~\ref{subsection:explicit computations}).

\begin{lemma}
We have $(u^{top})^\alpha(x,t,\eps;q)=(u^{str})^\alpha(x,t;\eps;q)$.
\end{lemma}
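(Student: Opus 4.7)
The plan is to prove $u^{top}=u^{str}$ by induction on the order in the Novikov variable $q$. Write $u^{top}-u^{str}=\sum_{d\ge 0}q^d R^\alpha_d$; the two formal power series share the initial condition $\left.u^\alpha\right|_{t^*_*=0}=\delta^{\alpha,1}x$ by~\eqref{eq:initial condition for DZ} and Lemma~\ref{lemma:property of the string solution}. For the base case $d_0=0$, Section~\ref{subsection:degree 0 parts} shows $\left.\oh_{\beta,p}[u]\right|_{q=0}=\left.\og_{\beta,p}[u]\right|_{q=0}$ for every $(\beta,p)$, so $\left.u^{top}\right|_{q=0}$ and $\left.u^{str}\right|_{q=0}$ solve the same hamiltonian Cauchy problem and coincide by uniqueness of the formal evolution; hence $R_0=0$.

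For the inductive step, assume $R_d=0$ for all $d<d_0$ with $d_0\ge 1$. The argument uses only two ingredients, both of which have already been shown to match identically for the two solutions. First, the $t^\omega_0$-flow, which is determined by $\og_{\omega,0}=\oh_{\omega,0}$ established in Section~\ref{subsection:omega flow}, reads
\[
\frac{\d u^1}{\d t^\omega_0}=q\,\d_x S(\eps\d_x)\,e^{S(\eps\d_x)u^\omega},\qquad\frac{\d u^\omega}{\d t^\omega_0}=u^1_x.
\]
Taking the difference and extracting the coefficient of $q^{d_0}$, the overall factor $q$ in the first equation combined with the induction hypothesis (which forces the bracketed exponential difference to be $O(q^{d_0})$) pushes the nonlinear contribution to order $q^{d_0+1}$, yielding $\d_{t^\omega_0}R^1_{d_0}=0$ and $\d_{t^\omega_0}R^\omega_{d_0}=\d_x R^1_{d_0}$. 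Second, the divisor equation: for $u^{top}$ it is~\eqref{eq:divisor for topological}, while for $u^{str}$ it follows from Lemma~\ref{lemma:divisor for the string solution} at $\gamma_1=\omega$, where~\eqref{eq:three-point} gives $c^1_{\omega\omega}=q$, $c^\omega_{1\omega}=1$ and all other $c^\mu_{\nu\omega}=0$, reproducing exactly the same equation. Its $q^{d_0}$-component (the induction hypothesis kills the $q\sum t^\omega_{d+1}\d_{t^1_d}$ term) reads
\[
\d_{t^\omega_0}R^\alpha_{d_0}-d_0 R^\alpha_{d_0}-\sum_{n\ge 0}t^1_{n+1}\d_{t^\omega_n}R^\alpha_{d_0}=0.
\]

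Combining the two ingredients for $\alpha=1$ produces $d_0 R^1_{d_0}+\sum_{n\ge 1}t^1_{n+1}\d_{t^\omega_n}R^1_{d_0}=0$, where $R^1_{d_0}$ is in addition independent of $t^\omega_0$. Decomposing $R^1_{d_0}$ into bihomogeneous components indexed by $(p,q)=(\deg_{t^1_*},\deg_{t^\omega_{\ge 1,*}})$ and noting that the operator $\sum_{n\ge 1}t^1_{n+1}\d_{t^\omega_n}$ shifts the bidegree by $(+1,-1)$, the equation becomes $d_0 R^{1,(p,q)}_{d_0}=-\sum_{n\ge 1}t^1_{n+1}\d_{t^\omega_n}R^{1,(p-1,q+1)}_{d_0}$; the seed $p=0$ forces $R^{1,(0,q)}_{d_0}=0$ because $d_0\ne 0$, and an induction on $p$ gives $R^1_{d_0}=0$. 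Then $\d_{t^\omega_0}R^\omega_{d_0}=\d_x R^1_{d_0}=0$, and the identical bidegree argument applied to the $\alpha=\omega$ case of the divisor equation yields $R^\omega_{d_0}=0$, closing the induction. The main point requiring care is to make sure no Hamiltonian difference $\oh_{\beta,p}-\og_{\beta,p}$ with $(\beta,p)\ne(\omega,0)$ is ever invoked; this is precisely why the argument is restricted to the $t^\omega_0$-flow, where full equality of Hamiltonians has been established, together with the universal divisor equation.
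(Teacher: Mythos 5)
Your proof is correct and follows the same strategy as the paper: match the initial conditions, match the $q=0$ parts via the equality of Hamiltonians at $q=0$, and then propagate agreement to all orders in $q$ using the $t^\omega_0$-flow (which is pinned down by $\og_{\omega,0}=\oh_{\omega,0}$) together with the divisor equation. The only real difference is that where the paper outsources the reconstruction step to ``an argument very similar to \cite{Pan00}, Proposition~2,'' citing the string equation, the divisor equation and system~\eqref{eq:Toda system} as the three ingredients, you make the induction on the $q$-order fully explicit and your version of it never invokes the string equation in the inductive step: the $q\frac{\d}{\d q}$ term of the divisor equation supplies the factor $d_0\neq 0$ that, combined with the bidegree-shifting operator $\sum_{n\ge 1}t^1_{n+1}\d_{t^\omega_n}$ and the vanishing of $\d_{t^\omega_0}R^1_{d_0}$, already forces $R_{d_0}=0$. (The string equation still enters implicitly, through the initial condition $\delta^{\alpha,1}x$ and the base case.) This is a small but genuine streamlining, and it makes the logical dependencies cleaner. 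One cosmetic remark: you write the first equation of the $t^\omega_0$-flow as $\d_{t^\omega_0}u^1=q\,\d_xS(\eps\d_x)e^{S(\eps\d_x)u^\omega}$, which is indeed what $\eta^{1\omega}\d_x\frac{\delta\oh_{\omega,0}}{\delta u^\omega}$ gives from~\eqref{eq:DZ omega Hamiltonian}; the displayed system~\eqref{eq:Toda system} in the paper omits the outer $\d_x$, but this does not affect either argument since only the overall factor of $q$ is used.
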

\begin{proof}
By the definition of the string solution and equation~\eqref{eq:initial condition for DZ}, the initial conditions agree: $\left.(u^{top})^\alpha\right|_{t^*_*=0}=\left.(u^{str})^\alpha\right|_{t^*_*=0}=\delta^{\alpha,1}x$. Since $\og_{\alpha,d}[u]|_{q=0}=\oh_{\alpha,d}[u]|_{q=0}$, we get
\begin{gather}\label{eq:degree zero parts of the solution}
\left.(u^{top})^\alpha\right|_{q=0}=\left.(u^{str})^\alpha\right|_{q=0}.
\end{gather}
From~\eqref{eq:DZ omega Hamiltonian} it follows that the power series $(u^{top})^\alpha$ is a solution of the following system:
\begin{gather}\label{eq:Toda system}
\left\{
\begin{aligned}
&\frac{\d u^1}{\d t^\omega_0}=q\left(S(\eps\d_x)e^{S(\eps\d_x)u^\omega}-1\right),\\
&\frac{\d u^\omega}{\d t^\omega_0}=u^1_x.
\end{aligned}\right.
\end{gather}
The argument, very similar to the one from the paper~\cite{Pan00} (see the proof of Proposition 2 there), shows that the string equation~\eqref{eq:string for topological}, the divisor equation~\eqref{eq:divisor for topological} and the system~\eqref{eq:Toda system} uniquely determine the power series $(u^{top})^{\alpha}$ starting from the degree $0$ part $(u^{top})^\alpha|_{q=0}$. On the other hand, we can apply the same arguments to the string solution~$(u^{str})^\alpha$ of the double ramification hierarchy. The string equation for~$(u^{str})^\alpha$ was derived in~\cite{Bur14} and it coincides with~\eqref{eq:string for topological}. By Lemma~\ref{lemma:divisor for the string solution} and formulae~\eqref{eq:three-point}, we have the divisor equation for $(u^{str})^\alpha$:
$$
\left(\frac{\d}{\d t^\omega_0}-q\frac{\d}{\d q}-\sum_{d\ge 0}t^1_{d+1}\frac{\d}{\d t^\omega_d}-q\sum_{d\ge 0}t^\omega_{d+1}\frac{\d}{\d t^1_d}\right)(u^{str})^\alpha=\delta^{\alpha,\omega}.
$$
It coincides with~$\eqref{eq:divisor for topological}$. Since $\og_{\omega,0}[u]=\oh_{\omega,0}[u]$, the power series~$(u^{str})^\alpha$ is a solution of system~\eqref{eq:Toda system}. In the same way as~$(u^{top})^\alpha$, we can now uniquely reconstruct $(u^{str})^{\alpha}$ from the degree~$0$ part $(u^{str})^\alpha|_{q=0}$. From~\eqref{eq:degree zero parts of the solution} it follows that $(u^{str})^\alpha=(u^{top})^\alpha$. The lemma is proved.
\end{proof}

Consider the ancestor Dubrovin-Zhang hierarchy for $\CP1$ in the variables $u^\alpha$:
\begin{gather}\label{eq:DZ in u-coordinates}
\frac{\d u^\alpha}{\d t^\beta_d}=\eta^{\alpha\mu}\d_x\frac{\delta\oh_{\beta,d}[u]}{\delta u^\mu}.
\end{gather}
Using the string equation~\eqref{eq:string for ancestor} it is easy to check that
\begin{gather}\label{eq:change of variables}
\left.((u^{top})^\alpha_d\right|_{x=0}=t^\alpha_d+\delta^{\alpha,1}\delta_{d,1}+O(t^2)+O(\eps^2).
\end{gather}
From this equation it follows that any power series in the variables $t^\nu_i,\eps$ and $q$ can be written as a power series in $\left(\left.(u^{top})^\alpha_d\right|_{x=0}-\delta^{\alpha,1}\delta_{d,1}\right)$, $\eps$ and $q$ in a unique way. Since~$(u^{top})^\alpha$ is a solution of~\eqref{eq:DZ in u-coordinates}, we conclude that the differential polynomials $\eta^{\alpha\mu}\d_x\frac{\delta\oh_{\beta,d}[u]}{\delta u^\mu}$ can be uniquely reconstructed from the power series $(u^{top})^\alpha$. Therefore, the variational derivatives $\frac{\delta\oh_{\beta,d}[u]}{\delta u^\mu}$ are uniquely determined up to a constant. From the construction of the Dubrovin-Zhang hierarchy (see ~\cite{BPS12b}) it is easy to see that $\left.\frac{\delta\oh_{\beta,d}[u]}{\delta u^\mu}\right|_{u^*_*=0}=0$. We conclude that the local functionals $\oh_{\alpha,d}[u]$ are uniquely determined by $(u^{top})^\alpha$.

We can apply the same arguments to the double ramification hierarchy and the string solution. Since $(u^{str})^\alpha=(u^{top})^\alpha$, the string solution satisfies the same property~\eqref{eq:change of variables}. Note that from the construction of the double ramification hierarchy it is easy to see that $\left.\frac{\delta\og_{\beta,d}[u]}{\delta u^\mu}\right|_{u^*_*=0}=0$. Then we can repeat our arguments and conclude that the local functionals $\og_{\alpha,d}[u]$ are uniquely determined by~$(u^{str})^\alpha$ and, therefore, $\og_{\alpha,d}[u]=\oh_{\alpha,d}[u]$. Theorem~\ref{theorem:dr for cp1} is proved.


{
\appendix

\section{Computations with the extended Toda hierarchy}\label{section:computations}

In this section we prove equations~\eqref{eq:DZ omega Hamiltonian in degree 0},~\eqref{eq:DZ 1 Hamiltonian in degree 0},~\eqref{eq:DZ 1,1 Hamiltonian in degree 0} and~\eqref{eq:DZ omega Hamiltonian}.

For the proofs of equations~\eqref{eq:DZ omega Hamiltonian in degree 0},~\eqref{eq:DZ 1 Hamiltonian in degree 0} and~\eqref{eq:DZ 1,1 Hamiltonian in degree 0}, let us note that from Lemma~\ref{lemma:relation} and formulae~\eqref{eq:S-matrix} it follows that the degree zero parts of the ancestor and the descendant Dubrovin-Zhang hierarchies for~$\CP1$ coincide: $\oh_{\alpha,p}[w]|_{q=0}=\oh_{\alpha,p}^{desc}[w]_{q=0}$. Theorem~\ref{theorem:DZ theorem} says that $\oh_{\alpha,p}^{desc}[w]=\oh_{\alpha,p}^{Td}[w]$. Recall that, by~\eqref{eq:Miura} and~\eqref{eq:w-v relation}, the variables~$u^1,u^\omega$ and $v^1,v^2$ are related in the following way:
\begin{gather}\label{eq:v-u relation}
v^1(u)=e^{\frac{\eps}{2}\d_x}u^1,\qquad v^2(u)=\frac{e^{\frac{\eps}{2}\d_x}-e^{-\frac{\eps}{2}\d_x}}{\eps\d_x}u^\omega.
\end{gather}

\subsection{Proof of~\eqref{eq:DZ omega Hamiltonian in degree 0}}

We have
$$
\oh^{Td}_{\omega,p}[v]|_{q=0}=\int\frac{1}{(p+2)!}\Res\left((e^{\eps\d_x}+v^1)^{p+2}\right)dx=\int\frac{(v^1)^{p+2}}{(p+2)!}dx.
$$
From~\eqref{eq:v-u relation} it follows that
$$
\oh^{Td}_{\omega,p}[u]|_{q=0}=\int\frac{(e^{\frac{\eps}{2}\d_x} u^1)^{p+2}}{(p+2)!}dx=\int e^{\frac{\eps}{2}\d_x}\frac{(u^1)^{p+2}}{(p+2)!}dx=\int\frac{(u^1)^{p+2}}{(p+2)!}dx.
$$
Equation~\eqref{eq:DZ omega Hamiltonian in degree 0} is proved.


\subsection{Proof of~\eqref{eq:DZ 1 Hamiltonian in degree 0}}\label{subsection:proof of DZ Hamiltonian in degree 0}

First of all, let us briefly recall the definition of the logarithm~$\log L$ from~\cite{CDZ04}. The dressing operators $P$ and $Q$:
$$
P=1+\sum_{k\ge 1}p_k e^{-k\eps\d_x},\qquad Q=\sum_{k\ge 0}q_k e^{k\eps\d_x},
$$
are defined by the following identities in the ring of Laurent series in $e^{-\eps\d_x}$ and $e^{\eps\d_x}$ respectively:
$$
L=P\circ e^{\eps\d_x}\circ P^{-1}=Q\circ e^{-\eps\d_x}\circ Q^{-1}.
$$
Note that the coefficients $p_k$ and $q_k$ of the dressing operators do not belong to the ring $\hcA_{v^1,v^2}\otimes\mbC[q,q^{-1}]$ but to a certain extension of it. The logarithm $\log L$ is defined by
$$
\log L:=\frac{1}{2}\left(P\circ\eps\d_x\circ P^{-1}-Q\circ\eps\d_x\circ Q^{-1}\right)=\frac{1}{2}\left(\eps Q_x\circ Q^{-1}-\eps P_x\circ P^{-1}\right).
$$
Here $P_x:=\sum_{k\ge 1}(\d_x p_k) e^{-k\eps\d_x}$ and $Q_x:=\sum_{k\ge 0}(\d_x q_k) e^{k\eps\d_x}$. In~\cite{CDZ04} (see Theorem~2.1) it is proved that the coefficients of~$\log L$ belong to $\hcA^{[0]}_{v^1,v^2}\otimes\mbC[q,q^{-1}]$. 

Actually, we are going to prove a precise formula for the Hamiltonian~$\oh^{Td}_{1,p}[v]|_{q=0}$. From the proof of Theorem~2.1 in~\cite{CDZ04} it follows that the coefficients of the operator $S:=-\eps P_x\circ P^{-1}$ belong to~$\hcA^{[0]}_{v^1,v^2}\otimes\mbC[q]$. Let~$S^0:=S|_{q=0}$. From~\cite{CDZ04} it is also easy to see that the coefficients of $S^0$ belong to $\hcA^{[0]}_{v^1}$. 

For a differential polynomial $f(v^1;\eps)\in\hcA_{v^1}$, let $f^{ev}(v^1;\eps):=\half\left(f(v^1;\eps)+f(v^1;-\eps)\right)$. Following~\cite{CDZ04} let us introduce operators $\cB_+$ and $\cB_-$ by
\begin{gather*}
\cB_+:=\frac{\eps\d_x}{e^{\eps\d_x}-1},\qquad \cB_-:=\frac{\eps\d_x}{1-e^{-\eps\d_x}}.
\end{gather*}
\begin{lemma}\label{lemma:precise formula}
We have 
$$
\oh^{Td}_{1,p}[v]|_{q=0}=\int\left(\frac{(v^1)^{p+1}}{(p+1)!}\cB_-v^2+\frac{2}{(p+1)!}\Res\left((e^{\eps\d_x}+v^1)^{p+1}\circ(S^0-H_{p+1})\right)^{ev}\right)dx.
$$
\end{lemma}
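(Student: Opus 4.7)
The plan is to decompose $\log L$ according to the two dressing operators and reduce at $q=0$. Using $\log L = \frac{1}{2}\eps Q_x Q^{-1} + \frac{1}{2} S$, write
$$\oh^{Td}_{1,p}[v] = \frac{1}{(p+1)!}\int\Res\bigl(L^{p+1}\eps Q_x Q^{-1}\bigr)dx + \frac{1}{(p+1)!}\int\Res\bigl(L^{p+1}(S-2H_{p+1})\bigr)dx,$$
and analyze each piece separately.

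For the first piece, I would observe that $\eps Q_x Q^{-1}$ has only non-negative powers of $e^{\eps\d_x}$, and at $q=0$ the operator $L^0 = e^{\eps\d_x} + v^1$ also has only non-negative powers, so in the residue only the product of the $e^0$-coefficients survives (up to cross-terms with $[L]_{-1} = qe^{v^2}$, whose factor of $q$ kills the $q^{-1}$-pole appearing in higher coefficients of $Q_x Q^{-1}$ and leaves $\d_x$-exact remainders). To compute $[\eps Q_x Q^{-1}]_0$, I would use the defining relation $L = Q e^{-\eps\d_x}Q^{-1}$: matching the $e^{-\eps\d_x}$-coefficient gives $q_0(x)/q_0(x-\eps) = qe^{v^2}$, whence $(1-e^{-\eps\d_x})\log q_0 = \log q + v^2$, and differentiating yields $\eps(\log q_0)_x = \cB_- v^2$. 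Since the $e^0$-coefficient of $(L^0)^{p+1}$ is $(v^1)^{p+1}$, this contributes $\int\frac{(v^1)^{p+1}}{(p+1)!}\cB_- v^2\,dx$.

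For the second piece, since $S$ has strictly negative powers of $e^{\eps\d_x}$ and, unlike $Q$, is already regular at $q=0$, restriction gives $\frac{1}{(p+1)!}\int\Res((L^0)^{p+1}(S^0 - 2H_{p+1}))dx$ immediately. It remains to identify this with $\frac{2}{(p+1)!}\int\Res((L^0)^{p+1}(S^0-H_{p+1}))^{ev}dx$ modulo $\d_x$-exact terms. My approach here is a parity argument: under $\eps\mapsto -\eps$ the operator $L^0 = e^{\eps\d_x}+v^1$ is carried to its formal adjoint $(L^0)^* = e^{-\eps\d_x}+v^1$, and the dressing identity $P^0 e^{\eps\d_x}(P^0)^{-1}=L^0$ together with the recursion $(1-e^{\eps\d_x})p_1^0 = v^1$ (giving $(p_1^0)_x = -\cB_+ v^1/\eps$, and analogous recursions for the higher $p_k^0$) transports this symmetry to $S^0$. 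Combining this with the cyclicity $\int\Res(AB)\,dx = \int\Res(BA)\,dx$ modulo $\d_x$-exact terms, one shows the odd-in-$\eps$ part of $\int\Res((L^0)^{p+1}S^0)\,dx$ is $\d_x$-exact, and symmetrization produces the factor $2(\cdot)^{ev}$ with the constant $2H_{p+1}$ replaced by $H_{p+1}$.

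The main obstacle is the parity step. Concretely, I expect the delicate part to be showing that $\Res((L^0)^{p+1}S^0)^{odd}$ is $\d_x$-exact, which requires exploiting the interaction between the formal adjoint $*$ and the $\eps \mapsto -\eps$ involution on the algebra of difference operators, transported through the recursive determination of the coefficients of $P^0$. Once this parity identity is in hand, the lemma follows by combining the two contributions above.
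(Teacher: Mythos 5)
Your decomposition of $\log L$ into the $P$- and $Q$-dressing pieces is the same as the paper's, and your computation of the $e^0$-coefficient $[\eps Q_x\circ Q^{-1}]_0=\eps(\log q_0)_x=\cB_-v^2$ via $q_0(x)/q_0(x-\eps)=qe^{v^2}$ is correct. The gap is in your treatment of the higher coefficients of $\eps Q_x\circ Q^{-1}$. Writing $\eps Q_x\circ Q^{-1}=\eps (q_0)_xq_0^{-1}+q_0\circ(\eps\tQ_x\circ\tQ^{-1})\circ q_0^{-1}$ with $\tQ:=q_0^{-1}Q$, the conjugation produces $[\eps Q_x\circ Q^{-1}]_k=q^{-k}\cdot(\text{regular in }q)$ for $k\ge 1$, because $q_0(x)/q_0(x+k\eps)=q^{-k}e^{-\sum_{j=1}^k v^2(x+j\eps)}$. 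Pairing this with $[L^{p+1}]_{-k}$, which carries exactly $q^{k}$, gives a \emph{finite and generically nonzero} contribution at $q=0$ — not a $\d_x$-exact remainder. In the paper this contribution is computed exactly: one observes $\tL=q_0^{-1}L\circ q_0=L|_{v^2(x)\mapsto v^2(x-\eps),\,\eps\mapsto-\eps}$, hence $\eps\tQ_x\circ\tQ^{-1}=S|_{v^2(x)\mapsto v^2(x-\eps),\,\eps\mapsto-\eps}$, and the cross-terms contribute precisely $\Res\bigl((e^{\eps\d_x}+v^1)^{p+1}\circ S^0\bigr)\big|_{\eps\mapsto-\eps}$. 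This second copy of the $P$-part with $\eps$ reversed is what combines with the $P$-part itself to produce the factor $2(\cdot)^{ev}$; no parity-plus-cyclicity argument is needed, since the symmetrization holds at the level of residues, not merely modulo $\d_x$-exact terms.

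Even granting your parity claim (that the odd-in-$\eps$ part of $\int\Res((L^0)^{p+1}S^0)\,dx$ is $\d_x$-exact — which does hold for $p=1$, where $(\cB_+)^{odd}=-\tfrac{\eps\d_x}{2}$ makes the odd part a total derivative), your bookkeeping does not close. Setting $A:=\Res((e^{\eps\d_x}+v^1)^{p+1}\circ S^0)$, your second piece would give $\tfrac{1}{(p+1)!}\int(A^{ev}-2H_{p+1}(v^1)^{p+1})dx$, whereas the lemma requires $\tfrac{2}{(p+1)!}\int(A^{ev}-H_{p+1}(v^1)^{p+1})dx$; you are short by exactly $\tfrac{1}{(p+1)!}\int A^{ev}dx$ (for $p=1$ this is $\tfrac12\int(\tfrac32(v^1)^2+\sum_{g\ge1}\eps^{2g}\tfrac{B_{2g}}{(2g)!}v^1v^1_{2g})dx\neq 0$). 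That missing term is precisely the $Q$-part cross-contribution you discarded. The fix is the paper's conjugation trick with $\tQ=q_0^{-1}Q$.
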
  
\begin{proof}
We have
$$
\oh^{Td}_{1,p}[v]=\int\left(\frac{2}{(p+1)!}\Res(L^{p+1}(\log L-H_{p+1}))\right)dx.
$$
Since the coefficients of $L^{p+1}$ belong to~$\hcA^{[0]}_{v^1,v^2}\otimes\mbC[q]$, we obviously have
\begin{gather}\label{eq:formula1}
\left.\Res(L^{p+1})\right|_{q=0}=\Res\left((e^{\eps\d_x}+v^1)^{p+1}\right).
\end{gather}
Let us compute the residue~$\Res(L^{p+1}\log L)|_{q=0}$. We have $\log L=\frac{1}{2}\left(\eps Q_x\circ Q^{-1}-\eps P_x\circ P^{-1}\right)$. As we have already said, the coefficients of $\eps P_x\circ P^{-1}$ belong to~$\hcA^{[0]}_{v^1,v^2}\otimes\mbC[q]$. Therefore,
\begin{gather}\label{eq:formula2}
-\left.\Res\left(L^{p+1}\circ\eps P_x\circ P^{-1}\right)\right|_{q=0}=\Res\left((e^{\eps\d_x}+v^1)^{p+1}\circ S^0\right).
\end{gather}

Let us consider the residue~$\Res\left(L^{p+1}\circ\eps Q_x\circ Q^{-1}\right)$. From the proof of Theorem~2.1 in~\cite{CDZ04} it follows that the coefficients of the operator $\eps Q_x\circ Q^{-1}$ belong to~$\hcA_{v^1,v^2}^{[0]}\otimes\mbC[q,q^{-1}]$. Note that they contain inverse powers of $q$, so we have to be careful while computing the residue~$\left.\Res\left(L^{p+1}\circ\eps Q_x\circ Q^{-1}\right)\right|_{q=0}$. Introduce an operator~$\tQ$ by $\tQ=1+\sum_{k\ge 1}\tq_k e^{k\eps\d_x}:=q_0^{-1}Q$. We have
$$
\tQ\circ e^{-\eps\d_x}\circ\tQ^{-1}=\tL,\quad\text{where}\quad \tL=q_0^{-1}L\circ q_0.
$$ 
The operator $e^{k\eps\d_x}$ can be interpreted as a shift operator, so, following~\cite{CDZ04}, we will sometimes denote $e^{k\eps\d_x}f$ by $f(x+k\eps)$. We have the following identity (see~\cite[eq.~2.21]{CDZ04}):
\begin{gather}\label{eq:identity from cdz}
\frac{q_0(x)}{q_0(x-\eps)}=q e^{v^2}.
\end{gather}
It implies that $\tL=e^{-\eps\d_x}+v^1+q e^{v^2(x+\eps)}e^{\eps\d_x}$. We can compute that
\begin{gather}\label{eq:Q-part}
\eps Q_x\circ Q^{-1}=\eps q_0(q_0)^{-1}+q_0\eps\tQ_x\circ\tQ^{-1}\circ q_0^{-1}.
\end{gather}
We have (\cite[eq.~2.21]{CDZ04}) $\frac{(q_0)_x}{q_0}-\frac{(q_0)_x(x-\eps)}{q_0(x-\eps)}=v^2_x$. Therefore, $\eps q_0(q_0)^{-1}=\frac{\eps\d_x}{1-e^{-\eps\d_x}}v^2=\cB_-v^2$. We see that~$\Res(L^{p+1}\circ\eps q_0 q_0^{-1})\in\hcA^{[0]}_{v^1,v^2}\otimes\mbC[q]$ and
\begin{gather}\label{eq:formula3}
\Res(L^{p+1}\circ\eps q_0 q_0^{-1})|_{q=0}=\Res\left((e^{\eps\d_x}+v^1)^{p+1}\circ\eps q_0 q_0^{-1}\right)=\frac{(v^1)^{p+1}}{(p+1)!}\cB_-v^2.
\end{gather}

It remains to compute~$\Res(L^{p+1}\circ q_0\eps\tQ_x\circ\tQ^{-1}\circ q_0^{-1})$. We have
$$
\Res\left(L^{p+1}\circ q_0\eps\tQ_x\circ\tQ^{-1}\circ q_0^{-1}\right)=\Res\left(\tL^{p+1}\circ\eps\tQ_x\circ\tQ^{-1}\right).
$$
It is easy to relate the operator~$\eps \tQ_x\circ \tQ^{-1}$ to the operator~$S=-\eps P_x\circ P^{-1}$. Note that 
$$
\tL=\left.L\right|_{\substack{v^2(x)\mapsto v^2(x-\eps)\\\eps\mapsto-\eps}}.
$$
Therefore,
$$
\eps\tQ_x\circ\tQ^{-1}=\left.S\right|_{\substack{v^2(x)\mapsto v^2(x-\eps)\\\eps\mapsto-\eps}}.
$$
We immediately see that the coefficients of $\eps\tQ_x\circ\tQ^{-1}$ belong to $\hcA_{v^1,v^2}^{[0]}\otimes\mbC[q]$. We get
\begin{multline}\label{eq:formula4}
\left.\Res\left(\tL^{p+1}\circ\eps\tQ_x\circ\tQ^{-1}\right)\right|_{q=0}=\Res\left((e^{-\eps\d_x}+v^1)^{p+1}\circ\left.\left(\eps\tQ_x\circ\tQ^{-1}\right)\right|_{q=0}\right)=\\
=\Res\left((e^{-\eps\d_x}+v^1)^{p+1}\circ\left.S^0\right|_{\eps\mapsto-\eps}\right)=\left.\Res\left((e^{\eps\d_x}+v^1)^{p+1}\circ S^0\right)\right|_{\eps\mapsto-\eps}.
\end{multline}
Collecting equations~\eqref{eq:formula1},~\eqref{eq:formula2},~\eqref{eq:formula3} and~\eqref{eq:formula4} we get the statement of the lemma.
\end{proof}
Let us prove equation~\eqref{eq:DZ 1 Hamiltonian in degree 0}. Using the proof of Theorem~2.1 in~\cite{CDZ04} it is easy to compute that $S^0=\sum_{k\ge 1}f_ke^{-k\eps\d_x}$, where $f_k=\frac{(-1)^{k-1}}{k}(v^1)^k+O(\eps)$. Therefore, we have
$$
\left.\Res\left((e^{\eps\d_x}+v^1)^{p+1}\circ S^0\right)\right|_{\eps=0}=\left(\sum_{i=1}^{p+1}{p+1\choose i}\frac{(-1)^{i-1}}{i}\right)(v^1)^{p+1}.
$$
Let us denote the sum on the right-hand side by $C_{p+1}$. For $k\ge 1$, we have
$$
C_{k+1}-C_k=\sum_{i=0}^k{k\choose i}\frac{(-1)^i}{i+1}=\sum_{i=0}^k{k\choose i}(-1)^i\int_0^1 x^i dx=\int_0^1(1-x)^k dx=\frac{1}{k+1}.
$$
Since $C_1=1$, we obtain $C_{p+1}=H_{p+1}$. We get
$$
\left.\Res\left((e^{\eps\d_x}+v^1)^{p+1}\circ(S^0-H_{p+1}\right)\right|_{\eps=0}=0.
$$
Using Lemma~\ref{lemma:precise formula} we can conclude that
$$
\left.h^{Td}_{1,p}[v]\right|_{q=0}=\int\left(\frac{(v^1)^{p+1}}{(p+1)!}\cB_- v^2+\sum_{g\ge 1}\eps^{2g} r_{p,g}(v^1)\right)dx,
$$
where $r_{p,g}\in\mcA_{v^1}$, $\deg_{dif} r_{p,g}=2g$. Finally, equation~\eqref{eq:v-u relation} implies that
$$
\left.h^{Td}_{1,p}[u]\right|_{q=0}=\int\left(\frac{(u^1)^{p+1}}{(p+1)!}u^\omega+\sum_{g\ge 1}\eps^{2g} r_{p,g}(u^1)\right)dx.
$$
Equation~\eqref{eq:DZ 1 Hamiltonian in degree 0} is proved.


\subsection{Proof of~\eqref{eq:DZ 1,1 Hamiltonian in degree 0}} 

Let us use Lemma~\ref{lemma:precise formula}. It is not hard to compute the first two coefficients of the operator $S^0$:
$$
S^0=\cB_+v^1 e^{-\eps\d_x}+\left(\frac{1}{2}\cB_+(v^1)^2-v^1(x-\eps)\cB_+ v^1\right)e^{-2\eps\d_x}+\ldots.
$$
Therefore, we have
$$
\Res\left((e^{\eps\d_x}+v^1)^2\circ S^0\right)=e^{2\eps\d_x}\left(\frac{1}{2}\cB_+(v^1)^2-v^1(x-\eps)\cB_+ v^1\right)+(v^1(x)+v^1(x+\eps))e^{\eps\d_x}\cB_+ v^1.
$$
Taking the integral we get
\begin{multline*}
\int\Res\left((e^{\eps\d_x}+v^1)^2\circ S^0\right)dx=\int\left(\frac{(v^1)^2}{2}-v^1(x-\eps)\cB_+v^1+(v^1+v^1(x+\eps))e^{\eps\d_x}\cB_+ v^1\right)dx=\\
=\int\left(\frac{(v^1)^2}{2}+v^1\cB_+ v^1\right)dx.
\end{multline*}
Let us take the even part:
\begin{multline*}
\int\left(\frac{(v^1)^2}{2}+v^1\cB_+ v^1\right)^{ev}dx=\int\left(\frac{1}{2}(v^1)^2+\frac{1}{2}v^1\eps\d_x\frac{e^{\frac{\eps}{2}\d_x}+e^{-\frac{\eps}{2}\d_x}}{e^{\frac{\eps}{2}\d_x}-e^{-\frac{\eps}{2}\d_x}} v^1\right)dx=\\
=\int\left(\frac{3}{2}(v^1)^2+\sum_{g\ge 1}\eps^{2g}v^1\frac{B_{2g}}{(2g)!}v^1_{2g}\right)dx.
\end{multline*}
We get $\left.h^{Td}_{1,1}[v]\right|_{q=0}=\int\left(\frac{(v^1)^2}{2}\cB_-v^2+\sum_{g\ge 1}\eps^{2g}v^1\frac{B_{2g}}{(2g)!}v^1_{2g}\right)dx$ and, therefore, 
$$
\left.h^{Td}_{1,1}[u]\right|_{q=0}=\int\left(\frac{(u^1)^2}{2}u^\omega+\sum_{g\ge 1}\eps^{2g}u^1\frac{B_{2g}}{(2g)!}u^1_{2g}\right)dx.
$$
Equation~\eqref{eq:DZ 1,1 Hamiltonian in degree 0} is proved.


\subsection{Proof of~\eqref{eq:DZ omega Hamiltonian}} We have
$$
\oh_{\omega,0}^{Td}[v]=\int\frac{1}{2}\Res\left(\left(e^{\eps\d_x}+v^1+qe^{v^2}e^{-\eps\d_x}\right)^2\right)dx=\int\left(\frac{(v^1)^2}{2}+qe^{v^2}\right)dx.
$$
Using~\eqref{eq:v-u relation} we get $\oh^{Td}_{\omega,0}[u]=\int\left(\frac{(u^1)^2}{2}+qe^{S(\eps\d_x)u^\omega}\right)dx$. Finally, if we apply Lemma~\ref{lemma:relation}, we obtain
$$
\oh_{\omega,0}[u]=\int\left(\frac{(u^1)^2}{2}+q\left(e^{S(\eps\d_x)u^\omega}-u^\omega\right)\right)dx.
$$
Equation~\eqref{eq:DZ omega Hamiltonian} is proved.

}

\end{document}